\documentclass[11pt]{article}

\usepackage[letterpaper,margin=1in]{geometry}
\usepackage{amsmath, amssymb, amsthm, amsfonts}
\usepackage{bbm}
\usepackage{bm}
\usepackage{complexity}

\usepackage{subcaption}

\usepackage{ifthen}
\usepackage{comment}
\usepackage{tikz}
\usetikzlibrary{positioning,decorations.pathreplacing}
\usepackage{comment}

\usepackage{cite}
\usepackage{appendix}
\usepackage{graphicx}	
\usepackage{color}
\usepackage{algorithm}
\usepackage[noend]{algpseudocode}
\usepackage{epstopdf}
\usepackage{wrapfig}
\usepackage{paralist}
\usepackage[textsize=tiny]{todonotes}

\usepackage{framed}
\usepackage[framemethod=tikz]{mdframed}
\usepackage[bottom]{footmisc}
\usepackage{enumitem}
\setitemize{noitemsep,topsep=3pt,parsep=3pt,partopsep=3pt}
\usepackage[font=small]{caption}
\usepackage{xspace}

\definecolor{darkgreen}{rgb}{0,0.5,0}
\usepackage{hyperref}
\hypersetup{
    unicode=false,          
    colorlinks=true,        
    linkcolor=red,          
    citecolor=darkgreen,        
    filecolor=magenta,      
    urlcolor=cyan           
}
\usepackage[capitalize]{cleveref}

\newtheorem{theorem}{Theorem}[section]
\newtheorem{lemma}[theorem]{Lemma}
\newtheorem{meta-theorem}[theorem]{Meta-Theorem}

\newtheorem{corollary}[theorem]{Corollary}

\newtheorem{definition}[theorem]{Definition}


%

\newcommand{\eps}{\varepsilon}

\newcommand{\ED}{\textnormal{ED}}

\newcommand{\LCS}{\textnormal{LCS}}

\newcommand{\Decode}{\textnormal{Dec}}
\newcommand{\Encode}{\textnormal{Enc}}




\newcommand{\exclude}[1]{}

\newcommand{\FullOrShort}{full}

\ifthenelse{\equal{\FullOrShort}{full}}{

  \newcommand{\fullOnly}[1]{#1}	
	\newcommand{\tempfullOnly}[1]{#1}
  \newcommand{\shortOnly}[1]{}
  
  }{

    \newcommand{\fullOnly}[1]{}
		\newcommand{\tempfullOnly}[1]{}
		\newcommand{\shortOnly}[1]{#1}
    \newcommand{\IncludePictures}[1]{}
   
  }

\newcommand{\STOConly}[1]{}
\newcommand{\noSTOC}[1]{#1}


\begin{document}

\date{}

\title{Near-Linear Time Insertion-Deletion Codes and (1+$\varepsilon$)-Approximating Edit Distance via Indexing}	

\author{Bernhard Haeupler\thanks{Supported in part by NSF grants CCF-1618280, CCF-1814603, CCF-1527110, NSF CAREER award CCF-1750808, and a Sloan Research Fellowship.}\\Carnegie Mellon University\\ \texttt{haeupler@cs.cmu.edu}  \and
Aviad Rubinstein\thanks{Supported in part by a Robert N. Noyce Family Faculty Fellowship.}\\Stanford University\\ \texttt{aviad@cs.stanford.edu} \and
Amirbehshad Shahrasbi\footnotemark[1]\\Carnegie Mellon University\\ \texttt{shahrasbi@cs.cmu.edu}}

\allowdisplaybreaks
	
\maketitle

\begin{abstract}
We introduce \emph{fast-decodable indexing schemes for edit distance} which can be used to speed up edit distance computations to near-linear time if one of the strings is indexed by an indexing string $I$. In particular, for every length $n$ and every $\eps >0$, one can in near linear time construct a string $I \in \Sigma'^n$ with $|\Sigma'| = O_{\eps}(1)$, such that, indexing any string $S \in \Sigma^n$, symbol-by-symbol, with $I$ results in a string $S' \in \Sigma''^n$ where $\Sigma'' = \Sigma \times \Sigma'$ for which edit distance computations are easy, i.e., one can compute a 
$(1+\eps)$-approximation of the edit distance between $S'$ and any other string in $O(n \poly(\log n))$ time.

Our indexing schemes can be used to improve the decoding complexity of state-of-the-art error correcting codes for insertions and deletions. In particular, they lead to near-linear time decoding algorithms for the insertion-deletion codes of [Haeupler, Shahrasbi; STOC `17] and faster decoding algorithms for list-decodable insertion-deletion codes of [Haeupler, Shahrasbi, Sudan; ICALP `18]. Interestingly, the latter codes are a crucial ingredient in the construction of fast-decodable indexing schemes.

\end{abstract}
	
\setcounter{page}{0}
\thispagestyle{empty}

\newpage
\section{Introduction}

Error correcting codes have revolutionized how information is stored and transmitted. The main two parameters of interest for an error correcting code are: (1) its \emph{rate-distance trade-off}, i.e., how much redundancy is added in comparison to how many errors the code can correct and (2) its \emph{computational efficiency}, i.e., how fast one can encode or decode. Ideal families of codes, so called near-MDS codes, over large finite alphabets achieve an (almost) perfect rate distance trade-off, i.e., attain a rate of $1-\delta-\eps$ while correcting any $\delta/2$ fraction of errors (or $\delta$ fraction of erasures, insertions, or deletions) for any $\delta \in (0,1)$ and are encodable and decodable in (near) linear time.

\STOConly{\vspace{-1mm}}
\subsection{(Near) Linear-Time Codes}

The seminal works of Shannon, Hamming, and others in the late 40s and early 50s established a good understanding of the optimal rate/distance tradeoffs achievable existentially and over the next decades, near-MDS codes achieving at least polynomial time decoding and encoding procedures were put forward. Since then, lowering the computational complexity has been an important goal of coding theory. Particularly, the 90s saw a big push, spearheaded by Spielman, to achieve (near) linear coding complexities: in a breakthrough in 1994, Sipser and Spielman~\cite{sipser1994expander} introduced expander codes and derived linear codes with some constant distance and rate that are decodable (but not encodable) in linear time. In 1996 Spielman~\cite{spielman1996linear} build upon these codes to derive asymptotically good error correcting codes that are encodable and decodable in linear time. As for codes with better rate distance trade-off, Alon et al.~\cite{alon1995linear} obtained near-MDS error correcting codes that were decodable from erasures in linear time. 
Finally, in 2004, Guruswami and Indyk~\cite{guruswami2005linear} provided near-MDS error correcting codes for any rate than can be decoded in linear time from any combination of symbol erasures and symbol substitutions.

\STOConly{\vspace{-1mm}}
\subsection{Codes for Insertions and Deletions}
Similar questions on communication and computational efficiencies hold for synchronization codes, i.e., codes that correct from symbol insertions and symbol deletions. As a matter of fact, an analogous flow of progress can be recognized for synchronization codes. The study of synchronization codes started with the work of Levenshtein~\cite{Levenshtein65} in the 60s. In 1999, Schulman and Zuckerman~\cite{schulman1999asymptotically} gave the first (efficient) synchronization code with constant distance and rate. Only recently, synchronization codes with stronger communication efficiency have been found. Guruswami et al.~\cite{guruswami2017deletion,guruswami2016efficiently} introduced the first synchronization codes in the asymptotically small or large noise regimes by giving efficient codes which achieve a constant rate for noise rates going to one and codes which provide a rate going to one for an asymptotically small noise rate. Last year, Haeupler and Shahrasbi~\cite{haeupler2017synchronization} were able to finally achieve efficient synchronization codes with the optimal (near-MDS) rate/distance tradeoff, for any rate and distance. These codes build on a novel tool called \emph{synchronization strings} which are also used in~\cite{haeupler2018synchronization4} to design efficient list-decodable codes from insertions and deletions.

All of the codes mentioned so far have decoders with large polynomial complexity between $\Omega(n^2)$ and $O(n^{O(1/\eps)})$. The only known insertion-deletion codes with subquadratic time decoders are given in \cite{haeupler2017synchronization3}. Unfortunately, these codes only work for $\delta\in\left(0,\frac{1}{3}\right)$ fraction of errors while achieving a rate of $1-3\delta-\eps$ (instead of the desired $1-\delta-\eps$).

In this work, we take the natural next step and address the problem of finding near-linear time encodable/decodable (near-MDS) codes for insertions and deletions.

\subsection{Quadratic Edit Distance Computation Barrier}

Many of the techniques developed for constructing efficient regular error correcting codes also apply to synchronization strings. Indeed, the synchronization string based constructions in \cite{haeupler2017synchronization,haeupler2017synchronization2,haeupler2017synchronization3,haeupler2018synchronization4} show that this can largely be done in a black-box manner. However, there is a serious new barrier that arises  in the setting of synchronization errors if one tries to push computational complexities below $n^2$. This barrier becomes apparent once one notices that decoding an error correcting code is essentially doing a distance minimization where the appropriate distance in the synchronization setting is the edit distance%
\footnote{We define the {\em edit distance} between two strings $S,S'$ as the minimum number of character insertions and deletions required to transform $S$ to $S'$. Note that this is slightly different (but closely related) to the more standard definition which also allows character substitutions.}.
As we discuss below, merely computing the edit distance between two strings (the input and a candidate output of the decoder) in subquadratic time is a well known hard problem.
An added dimension of challenge in our setting is that we must first select the candidate decoder outputs among exponentially many codewords.

A simple algorithm for computing the edit distance of two given strings is the classic Wagner-Fischer 
dynamic programming algorithm that runs in quadratic time. 
Improving the running time of this simple algorithm has been a central open problem in computer science for decades (e.g.~\cite{Knuth72}).
Yet to date, only a slightly faster algorithm ($O(n^2/\log^2 n)$) due to Masek and Paterson~\cite{masek1980faster} is known. 
Furthermore, a sequence of complexity breakthroughs from recent years suggests that a near-quadratic running time may in fact be optimal~\cite{AWW14-Local_Alignment, backurs2015edit, ABW15-LCS, BK15-LCS, AHWW16-polylog_shaved} (under the Strong Exponential Time Hypothesis (SETH) or related assumptions).
In order to obtain subquadratic running times, computer scientists have considered two directions: moving beyond worst-case instances, and allowing approximations. 
 
\subsubsection*{Beyond worst case}
Edit distance computation is known to be easier in several special cases. For the case where edit distance is known to be at most $k$, Ukkonen~\cite{ukkonen1985algorithms} provided an $O(nk)$ time algorithm and Landau et al.~\cite{landau1998incremental} improved upon that with an $O(n+k^2)$ time algorithm. For the case where the longest common subsequence (LCS) is known to be at most $L$, Hirschberg~\cite{Hirschberg77} gave an algorithm running in time $O(n\log n + Ln)$. Following a long line of works, Gawrychowski~\cite{Gawrychowski12} currently has the fastest algorithm for the special case of strings that can be compressed as small {\em straight-line programs} (SLP).
Andoni and Krauthgamer~\cite{AndoniK08} obtain efficient approximations to edit distance for the case where the strings are perturbed a-la smoothed analysis. Goldwasser and Holden~\cite{GoldwasserH17} obtain subquadratic algorithms when the input is augmented with auxiliary correlated strings.

Other special cases have also been considered (see also~\cite{BK18-LCS-multivariate}), but the work closest to ours is by Hunt and Szymanski~{\cite{hunt1977fast}}, who obtained a running time of $O((n+r)\log n)$ for the special case where there are $r$ ``matching pairs'', i.e.~pairs of identical characters (see also Section~\ref{sub:non-crossing}). 
While we directly build on their algorithm, note that there is an obstacle to applying it in our setting: for a constant size alphabet, we expect that a constant fraction of all $n^2$ pairs of characters will be matching, i.e. $r = \Theta(n^2)$.

\subsubsection*{Approximation algorithms}
There is a long line of works on efficient approximation algorithms for edit distance in the worst case~\cite{bar2004approximating,batu2006oblivious,andoni2010polylogarithmic,andoni2012approximating,boroujeni2018approximating,CDGKS18}.
First, it is important to note that even after the recent breakthrough of Chakraborty et al.~\cite{CDGKS18}, it is not known how to obtain approximation factors better than $3$ (see also discussion in~\cite{Rubinstein18-blog}). 
Furthermore, our running time is much faster than Chakraborty et al.'s~\cite{CDGKS18} and even faster than the near-linear time approximations of~\cite{andoni2010polylogarithmic,andoni2012approximating}. The best known approximation factor in time $O(n\polylog(n))$ is still worse than $n^{1/3}$~\cite{batu2006oblivious}.

In terms of techniques, our algorithm is most closely inspired by the window-compatible matching paradigm introduced by the recent quantum approximation algorithm of Boroujeni et al.~\cite{boroujeni2018approximating} (a similar idea was also used by Chakraborty et al.~\cite{CDGKS18}).  

\subsubsection*{A new ray of hope}
In this work we combine both approaches: 
namely we allow for (arbitrarily good) approximation, and also restrict our attention to the special case of computing the edit distance between a worst case input and a codeword from our code.
The interesting question thus becomes if there is a way to build enough structure into a string (or a set of strings/codewords) that allows for fast edit distance computations. Given the importance and pervasiveness of edit distance problems we find this to be a question of interest way beyond its applicability to synchronization codes. An independent work of Kuszmaul~\cite{KuszmaulPsuedorandomEDApprox19} also employs the combination of the two approaches and provides a near-linear time algorithm for approximating the edit distance between a pseudo-random string and an arbitrary one within a constant factor.

\section{Our Results}

In this paper, we introduce a simple and generic structure that achieves this goal. In particular, we will show that there exist strings over a finite alphabet that, if one indexes any given string $S$ with them, the edit distance of the resulting string to any other string $S'$ can be approximated within a $1+\eps$ factor in near-linear time. This also leads to breaking the quadratic decoding time barrier for insertion-deletion codes with near-optimal communication efficiency.

We start with a formal definition of \emph{string indexing} followed by the definition of an \emph{indexing scheme}.

\begin{definition}[String Indexing or Coordinate-Wise String Concatenation]
Let $S\in\Sigma^n$ and $S'\in\Sigma'^n$ be two strings of length $n$ over alphabets $\Sigma$ and $\Sigma'$. The \emph{coordinate-wise concatenation of $S$ and $S'$} or \emph{$S$ indexed by $S'$} is a string of length $n$ over alphabet $\Sigma\times\Sigma'$ whose $i$th element is $(S_i, S'_i)$. We denote this string with $S\times S'$.
\end{definition}

\begin{definition}[Indexing Scheme]
The pair $(I, \mathcal{\widetilde{\ED}}_I)$ consisting of string $I\in\Sigma_{\textnormal{Index}}^n$ and algorithm $\mathcal{\widetilde{\ED}}_I$ is an \emph{$\eps$-indexing scheme} if
for any string $S\in\Sigma^n$ and $S'\in[\Sigma\times\Sigma_{\textnormal{Index}}]^n$, $\mathcal{\widetilde{\ED}}_I(S\times I, S')$ outputs a set of up to 
$(1+\eps)\ED(S\times I, S')$
symbol insertions and symbol deletions over $S\times I$ that turns it into $S'$. The $\ED(\cdot)$ notation represents the edit distance function.
\end{definition}

The main result of this work is on the existence of indexing schemes that facilitate approximating the edit distance in near-linear time. 

\begin{theorem}\label{thm:near-linear-labeling}
For any $\eps\in(0, 1)$ and integer $n$, there exist a string $I\in\Sigma_{\textnormal{Index}}^n$ and an algorithm $\mathcal{\widetilde{\ED}}_I$ where $(I, \mathcal{\widetilde{\ED}}_I)$ form an $\eps$-indexing scheme, $|\Sigma_{\textnormal{Index}}|=\exp\left(\frac{\log (1/\eps)}{\eps^3}\right)$, $\mathcal{\widetilde{\ED}}_I$ runs in $O_\eps(n\poly(\log n))$ time, and $I$ can be constructed in $O_\eps(n\poly(\log n))$ time.
\end{theorem}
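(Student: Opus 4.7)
The plan is to build $I$ with a hierarchical structure whose blocks carry distinctive codeword ``fingerprints'' drawn from a list-decodable insertion-deletion code. Specifically, I partition $[n]$ into blocks at $O(\log n)$ geometric scales; at each scale, the substring of $I$ inside a block is a codeword of a list-decodable insdel code in the style of \cite{haeupler2018synchronization4}, chosen so that its identity reveals the block's position in $I$. Because codewords at different scales can reuse the same ambient alphabet, $|\Sigma_{\textnormal{Index}}|$ stays a function of $\eps$ alone. The string $I$ itself can be constructed in near-linear time by concatenating codewords from the explicit list-decodable constructions already known to admit near-linear-time encoding, possibly interleaved with a synchronization string to protect block boundaries from edit errors.

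Given $S \times I$ and $S'$, the approximation algorithm proceeds in three phases. First (\emph{localization}), project $S'$ onto its index coordinate; for each block $B$ of $I$, list-decode to enumerate all substrings of this projection that lie within a plausible edit-distance radius of $B$'s codeword. Since any alignment of $S \times I$ to $S'$ restricts to an alignment between $I$ and this projection with no larger cost, the list-decoding guarantee produces a set of $O(n\poly(\log n))$ candidate block-substring pairs that includes every pair participating in some $(1+\eps)$-optimal alignment. Second (\emph{within-block cost}), recursively invoke the indexing scheme on each candidate pair to approximate its edit distance; the recursion terminates at blocks of constant size where direct dynamic programming takes $O(1)$ time, and the total work remains $O_\eps(n\poly(\log n))$ because the candidate collection has near-linear total size. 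Third (\emph{combining}), on the bipartite candidate graph with edge weights equal to the approximate block-substring distances, find the minimum-cost non-crossing matching via a Hunt--Szymanski-style longest-increasing-subsequence algorithm, and output the induced global alignment and its cost.

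The main obstacle, I expect, is controlling how the $(1+\eps)$ approximation factor propagates through this layered construction. Three sources of slack compete: (a) the optimal alignment may split a block of $I$ in arbitrary ways, which I would handle by considering several shifted partitions or by charging each split to the edit cost incurred at that scale; (b) the list-decoding radii must be large enough to capture every block participating in any $(1+\eps)$-optimal alignment yet small enough to keep candidate-list sizes near-linear, forcing a careful balance of the rate/radius tradeoff of the underlying list-decodable insdel code; and (c) the $(1+\eps')$ errors from $O(\log n)$ nested levels of recursion must multiply to $1+\eps$ overall, which forces $\eps' = \eps/\Theta(\log n)$ at deeper levels and demands that the list-decodable insdel code from \cite{haeupler2018synchronization4} still admit a constant-size alphabet at that radius. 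Assembling these ingredients with the right parameter settings --- and verifying that the non-crossing matching phase does not amplify the per-block error --- is where the bulk of the proof work goes, while the remaining components (constructing $I$, encoding time, and invoking the list-decoder) follow from the cited prior work.
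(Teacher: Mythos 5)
Your proposal shares some of the paper's core ingredients (list-decodable insertion-deletion codes to localize blocks, a non-crossing-matching phase in the spirit of Hunt--Szymanski), but the overall architecture --- $O(\log n)$ nested geometric scales with recursive edit-distance approximation inside blocks and a \emph{weighted} non-crossing matching at the top --- is a genuinely different and, as written, broken route. The paper's construction is flat: a single scale of block length $N = \Theta(\log n)$, where $I$ is just the concatenation of codewords of one list-decodable insdel code (Corollary~\ref{cor:capacity_approaching_list_decodable_insdel}), and the matching is an \emph{unweighted} symbol-level maximum non-crossing matching in a sparse bipartite graph built from the list-decoder's outputs (Algorithm~\ref{alg:EditDistanceApprox}). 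The accuracy analysis (Theorem~\ref{thm:edit-distance-approximation-ratio}) needs only two charging arguments on window-limited matchings, with no recursion and hence no error compounding.

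The gaps in your plan are substantive. First, the alphabet: if each position of $I$ must simultaneously carry its symbol in a scale-$k$ codeword for $k=1,\dots,O(\log n)$, then $|\Sigma_{\textnormal{Index}}|$ is a product of $O(\log n)$ constant-size factors, i.e.\ polynomial in $n$, not $O_\eps(1)$; ``reusing the same ambient alphabet'' would instead require a single string whose every dyadic restriction is a valid codeword of a high-distance list-decodable insdel code at that scale, which you give no construction for and which is far from obvious. (The paper's Section~\ref{sec:enhanced-labeling} uses only two layers precisely because alphabet size multiplies per layer.) Second, the error compounding you flag in point (c) is not a loose end to be tidied up; setting $\eps' = \eps/\Theta(\log n)$ makes the alphabet of the list-decodable code from Theorem~\ref{thm:InsDelListDecoding} scale as $\exp(\eps'^{-3}\log(1/\eps'))$, which is superpolynomial in $n$, so this parameter choice is fatal rather than merely delicate. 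Third, your combining phase is a \emph{minimum-cost} non-crossing matching on approximate block-to-substring distances; Theorem~\ref{thm:max_non_crossing_algorithm} solves the \emph{unweighted} maximum non-crossing matching (equivalently LIS), and it is not clear how you would get a near-linear weighted version or how the approximation errors in the weights interact with the matching optimality. The paper avoids all three problems at once by keeping a single scale and matching individual symbols, so that (i) one constant-size alphabet suffices, (ii) there is no recursion and hence a single fixed $\eps$, and (iii) the matching problem is the unweighted one that Hunt--Szymanski directly handles. Your localization step and the idea of pruning candidates via list-decoding are on target, but you would need to flatten the construction to a single block scale (with the $O(1/\eps)$-block neighborhood window) to make the rest of the argument go through.
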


\subsection{Applications}
%
%
%

One application of indexing schemes that we introduce in this work is in enhancing the design of insertion-deletion codes (insdel codes) from~\cite{haeupler2017synchronization, haeupler2018synchronization4}.
The construction of codes from \cite{haeupler2017synchronization, haeupler2018synchronization4} consist of indexing each codeword of some appropriately chosen error correcting code with symbols of a synchronization string which, in the decoding procedure, will be used to recover the position of received symbols. As we will recapitulate in \cref{sec:applications}, this procedure of recovering the positions consists of several longest common subsequence computations between the utilized synchronization string and some other version of it that is altered by a number of insertions and deletions. This fundamental step resulted in an $\Omega(n^2)$ decoding time complexity for codes in~\cite{haeupler2017synchronization, haeupler2018synchronization4}.

Using the $\eps$-indexing schemes in this paper, we will modify constructions of~\cite{haeupler2017synchronization, haeupler2018synchronization4} so that the above-mentioned longest common subsequence computations can be replaced with approximations of the longest common subsequence (using \cref{thm:near-linear-labeling}) that run in near-linear time. The following theorem, that improves the main result of~\cite{haeupler2017synchronization} with respect to the decoding complexity, gives an insertion-deletion code for the entire range of distance that approaches the Singleton bound and is decodable in near-linear time.

\begin{theorem}\label{thm:coding_applications_unique}
For any $\eps>0$ and $\delta \in (0,1)$ there exists an encoding map $E: \Sigma^k \rightarrow \Sigma^n$ and a decoding map $D: \Sigma^* \rightarrow \Sigma^k$, such that, if $\ED(E(m),x) \leq \delta n$ then $D(x) = m$. Further, $\frac{k}{n} > 1 - \delta - \eps$, $|\Sigma|=\exp\left(\eps^{-4}\log (1/\eps)\right)$, and $E$ and $D$ are explicit and can be computed in linear and near-linear time in terms of $n$ respectively. 
\end{theorem}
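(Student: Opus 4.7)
The plan is to follow the synchronization-string based code construction of Haeupler--Shahrasbi \cite{haeupler2017synchronization}, and replace the ``naked'' synchronization string used there by one whose symbols are additionally endowed with the indexing labels from \cref{thm:near-linear-labeling}. Concretely, let $C_{\text{out}}\colon \Sigma_0^k \to \Sigma_0^n$ be a near-MDS code over a constant-size alphabet $\Sigma_0$, encodable and decodable in linear time from a $\delta + O(\eps)$ fraction of half-errors (e.g., the Guruswami--Indyk construction). Let $S \in \Sigma_{\text{sync}}^n$ be an $\eps$-synchronization string as in \cite{haeupler2017synchronization}, and let $I \in \Sigma_{\text{Index}}^n$ be the indexing string from \cref{thm:near-linear-labeling}. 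The codeword for message $m$ is the coordinate-wise concatenation $E(m) = C_{\text{out}}(m) \times (S \times I)$, over the product alphabet $\Sigma = \Sigma_0 \times \Sigma_{\text{sync}} \times \Sigma_{\text{Index}}$. Encoding is linear time because all three components are.

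For decoding, given a received string $y$ obtained from $E(m)$ by up to $\delta n$ insertions and deletions, one first projects $y$ onto its $\Sigma_{\text{sync}} \times \Sigma_{\text{Index}}$ coordinates to obtain a corrupted version $\tilde{T}$ of $T := S \times I$. The decoder of \cite{haeupler2017synchronization} recovers the position of (most) received symbols by, essentially, performing several longest common subsequence / edit distance computations between $T$ and windows of $\tilde{T}$; this is precisely the source of the $\Omega(n^2)$ bottleneck. However, since $T$ is of the form ``indexed-by-$I$'', we are exactly in the scenario that \cref{thm:near-linear-labeling} is designed for: the algorithm $\widetilde{\ED}_I$ yields a $(1+\eps)$-approximation to the edit distance of each such query in $O_\eps(n\,\mathrm{poly}\log n)$ time. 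Running the HS decoder but with all its LCS/ED calls replaced by the calls to $\widetilde{\ED}_I$ gives a sequence of candidate positions for each symbol of $y$, which is fed into the linear-time outer decoder of $C_{\text{out}}$ to recover $m$.

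Two issues then have to be verified. First, one must check that a $(1+\eps)$-approximation of edit distance suffices for the position-recovery framework of \cite{haeupler2017synchronization}: the argument there only uses that incorrectly decoded positions contribute to an edit-distance-like budget, so absorbing a $(1+\eps)$ slack only increases the effective fraction of misdecoded positions by an $O(\eps)$ factor, which is then corrected by taking the outer code's distance to be $\delta + O(\eps)$ and shrinking $\eps$ appropriately. Second, one must account for the alphabet size: $|\Sigma_0| = 2^{\mathrm{poly}(1/\eps)}$ for Guruswami--Indyk, $|\Sigma_{\text{sync}}| = \mathrm{poly}(1/\eps)$ for an $\eps$-synchronization string, and $|\Sigma_{\text{Index}}| = \exp(\eps^{-3}\log(1/\eps))$ from \cref{thm:near-linear-labeling}; the dominant factor, once $\eps$ is reparameterized to absorb the $O(\eps)$ slack above, is $\exp(\eps^{-4}\log(1/\eps))$ as claimed.

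The main obstacle is the first of these two points: one must open up the HS decoder enough to see that each of its internal edit-distance computations can tolerate multiplicative $(1+\eps)$ error without cascading failures. Concretely, the guarantee from \cite{haeupler2017synchronization} is that, with exact LCS, at most an $O(\eps)$ fraction of positions are mis-decoded; with approximate LCS, we need to re-analyze the ``error-correcting'' property of the synchronization string under the slightly weaker matching produced by $\widetilde{\ED}_I$, and show that the number of positions for which the approximate matcher outputs the wrong alignment is still an $O(\eps)$ fraction of $n$. Once this is established, the remainder is parameter bookkeeping and a direct invocation of \cref{thm:near-linear-labeling} and the linear-time outer code.
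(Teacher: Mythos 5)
Your overall construction is exactly the paper's: take a near-MDS linear-time code (Guruswami--Indyk), index its codewords coordinate-wise by an $\eps_s$-synchronization string and an $\eps_I$-indexing string from \cref{thm:near-linear-labeling}, and decode by replacing the exact LCS computations in the Haeupler--Shahrasbi position-recovery routine with calls to $\widetilde{\ED}_I$. The alphabet-size bookkeeping you sketch also matches. So the approach is right.

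However, you explicitly flag the crux---``one must open up the HS decoder enough to see that each of its internal edit-distance computations can tolerate multiplicative $(1+\eps)$ error without cascading failures''---and then leave it unresolved, calling it the main obstacle. This is a genuine gap, and the difficulty is sharper than your phrasing suggests. A $(1+\eps)$-multiplicative approximation of edit distance does \emph{not} give a $(1+O(\eps))$-multiplicative approximation of LCS. What it gives (this is \cref{lem:LCS-approximation} in the paper) is the weaker bound
\[
|\LCS_{APP}| \;\ge\; (1+\eps)|\LCS| - \tfrac{\eps}{2}\bigl(|S_1| + |S_2|\bigr),
\]
i.e., an \emph{additive} loss on the order of $\eps n$ in each LCS call. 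This matters because the HS global decoder runs $K$ rounds, and in later rounds the remaining true LCS $r$ can be much smaller than $n$; if you only had a multiplicative guarantee you could not conclude the approximate matcher captures a constant fraction of those $r$ surviving matches. The paper resolves this in \cref{lem:enhanced-sync-string-decoding}: it sums the approximate LCS sizes over the $K$ rounds, invokes the budget constraint that the total cannot exceed $|S'|\le(1+\gamma)n$, and derives $r \le n\bigl[\frac{1+\gamma}{K(1+\eps_I)} + \frac{\eps_I(1+\gamma/2)}{1+\eps_I}\bigr]$ undecoded symbols plus the usual $Kn\eps_s$ misdecodings from the $\eps_s$-synchronization property. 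Only then are $K$, $\eps_I$, $\eps_s$ tuned (the paper uses $K=24/\eps$, $\eps_I=\eps/18$, $\eps_s=\eps^2/288$) so that the total misdecoding fraction is at most $\eps/4$, which the modular code-construction theorem (\cref{thm:modular-code-construction-unique}) then absorbs into the outer code's distance. To complete your argument you need to state and prove the additive LCS bound and carry out this budget accounting; without it, the claim that the approximate matcher only costs an $O(\eps)$ fraction of misdecodings is unsubstantiated.
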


A very similar improvement is also applicable to the design of list-decodable insertion-deletion codes from~\cite{haeupler2018synchronization4} as they also utilize indexed synchronization strings and a similar position recovery procedure along with an appropriately chosen list-recoverable code. (See \cref{def:ListRecoverableCodes})  In this case, we obtain list-decodable insertion-deletion codes that match the fastest known list-recoverable codes in terms of decoding time complexity.


\begin{theorem}\label{thm:list-decodable-codes}
For every $0<\delta,\eps<1$ and $\eps_0,\gamma > 0$, there exists a family of list-decodable codes that can protect against $\delta$-fraction of deletions and $\gamma$-fraction of insertions and achieves a rate of at least $1-\delta-\eps$ over an alphabet of size 
$O_{\eps_0, \eps, \gamma}\left(1\right)$.
There exists a randomized decoder for these codes with list size $L_{\eps_0, \eps, \gamma}(n)= \exp\left(\exp\left(\exp\left(\log^* n\right)\right)\right)$, $O(n^{1+\eps_0})$ encoding and decoding complexities, and decoding success probability 2/3.
\end{theorem}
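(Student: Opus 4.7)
The plan is to follow the construction of \cite{haeupler2018synchronization4}, replacing their exact longest-common-subsequence-based position-recovery subroutine with a near-linear-time approximate variant powered by the indexing scheme of \cref{thm:near-linear-labeling}. Concretely, the HSS construction encodes a message by first applying a list-recoverable outer code (which dictates the final alphabet size, rate, list size, and $O(n^{1+\eps_0})$ runtime), then coordinate-wise indexing the resulting codeword with the symbols of a sufficiently strong $\eps$-synchronization string $S$. On decoding, one uses the attached synchronization-string symbols to attempt to recover the position of each received symbol, reducing the insertion-deletion channel to a list-recovery instance over the outer code.

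I would modify this construction as follows: take a fresh indexing string $I$ of length $n$ from \cref{thm:near-linear-labeling}, and replace $S$ throughout by the coordinate-wise concatenation $S \times I$. Since $|\Sigma_{\textnormal{Index}}|$ depends only on $\eps$, this enlarges the alphabet only by a constant factor and has no effect on the rate after absorbing the change into $\eps$. Constructing $I$, evaluating $S$, and running the outer encoder all cost $O(n^{1+\eps_0})$, so encoding remains within the stated budget. The key point is that the received string now carries, at each surviving coordinate, both the original synchronization symbol and an indexing symbol from $I$, exactly the format required by the $\eps$-indexing scheme.

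For the decoder, the crucial observation from \cite{haeupler2018synchronization4} is that position recovery reduces to computing (or comparing) LCS/edit-distance values between the received string and windows of $S \times I$. Using the indexing scheme, each such computation is replaced by a $(1+\eps')$-approximation in $O_{\eps'}(n\poly(\log n))$ time for any constant $\eps' > 0$. After position recovery, the result is handed to the outer list-recoverable decoder, whose list size and runtime propagate unchanged to give $L(n) = \exp(\exp(\exp(\log^* n)))$ and total decoding time $O(n^{1+\eps_0})$. Randomness and the $2/3$ success probability are inherited from the list-recoverable code.

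The main obstacle is verifying that a multiplicative $(1+\eps')$-approximation is strong enough to substitute for the exact LCS comparisons in the HSS position-recovery analysis, which originally rely on an $\Omega(n)$ additive gap between a ``correct'' window and an ``incorrect'' one. I expect this to go through by strengthening the synchronization-string parameter of $S$ by a constant factor: since the relevant gap is $\Theta(n)$ while the approximation slack is $\eps' n$, choosing $\eps'$ small enough relative to the synchronization constants makes every comparison decided the same way as in the exact procedure. Once this is in place, all remaining quantitative bounds (alphabet, rate, list size, complexity) follow by bookkeeping from \cite{haeupler2018synchronization4} and \cref{thm:near-linear-labeling}.
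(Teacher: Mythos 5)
Your high-level plan is exactly the paper's: take the list-recoverable codes of Hemenway, Ron-Zewi, and Wootters (\cref{thm:HemenwayEtAl}), index with a synchronization string plus an indexing string from \cref{thm:near-linear-labeling}, and replace the exact LCS computations in the HSS position-recovery subroutine with $(1+\eps')$-approximations. The alphabet, rate, list-size, complexity, and success-probability bookkeeping all proceed as you describe, and the paper packages the conversion as a black-box theorem (\cref{thm:coding_applications}) which is then instantiated with the HRZW codes.

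Where your sketch drifts from the paper is in the characterization of the ``main obstacle.'' You frame the HSS position recovery as relying on an $\Omega(n)$ comparison gap between a ``correct'' window and an ``incorrect'' window, so that a small-enough multiplicative slack $\eps'n$ leaves every comparison decided identically. That is not how the HSS global decoding works. It does not compare windows; it runs $K$ rounds, each of which computes a (global) LCS between the full indexed synchronization string and what remains of the received string, decodes positions along that common subsequence, and removes the matched received symbols. The analysis is a budget/potential argument, not a gap argument: (i) the self-matching property of $\eps_s$-synchronization strings bounds the number of \emph{incorrectly} decoded symbols per round by $n\eps_s$, and (ii) \cref{lem:LCS-approximation} converts the multiplicative $(1+\eps_I)$ ED approximation into an \emph{additive} LCS lower bound of $(1+\eps_I)|\LCS| - \tfrac{\eps_I}{2}(|S_1|+|S_2|)$, and summing the $K$ rounds against the total length $(1+\gamma)n$ bounds how many good symbols can remain \emph{undecoded} after $K$ rounds. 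The resulting misdecoding bound is $n\bigl(\tfrac{1+\gamma}{K(1+\eps_I)} + \tfrac{\eps_I(1+\gamma/2)}{1+\eps_I} + K\eps_s\bigr)$, which one then tunes (via $K$, $\eps_I$, $\eps_s$) to be at most $\eps n$ so that all but an $\alpha$ fraction of the per-position lists contain the right symbol, matching the list-recovery requirement. So the ``it goes through by strengthening constants'' claim is directionally correct, but you would need to actually run this $K$-round budget analysis (with the additive slack per round, and the additional $K\eps_s$ self-matching error term) rather than a per-comparison gap argument; as stated, your sketch of the key step would not yield the misdecoding bound that the list-recovery reduction needs.
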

Both \cref{thm:coding_applications_unique,thm:list-decodable-codes} are built upon the fact that if one indexes a synchronization string with an appropriate indexing scheme, the resulting string will be a synchronization string that is decodable in near-linear time.

\subsection{Other Results, Connection to List-Recovery, and Organization of the Paper}
In the rest of the paper, we first provide some preliminaries and useful lemmas from previous works in \cref{sec:prelim}. In \cref{sec:labeling}, we introduce the construction of our indexing schemes and prove \cref{thm:near-linear-labeling}. The construction of these indexing schemes utilize insertion-deletion codes that are list-decodable from large fractions of deletions and insertions. We use list-decodable codes from~\cite{haeupler2018synchronization4} for that purpose, which themselves use list-recoverable codes as a core building block. Therefore, the quality of indexing schemes that we provide, namely, time complexity and alphabet size, greatly depend on utilized list-recoverable codes and can be improved following the prospective advancement of list-recoverable codes in the future.

In \cref{sec:enhanced-labeling}, we enhance the structure of the indexing scheme from \cref{thm:near-linear-labeling} and provide \cref{thm:enhanced-labeling} that describes a construction of indexing schemes using $(\eps, \frac{1}{\eps}, L)$-list-recoverable codes as a black-box.
This result opens the door to potentially reduce the polylogarithmic terms in the time complexity of indexing schemes from \cref{thm:near-linear-labeling} by future developments in the design of list-recoverable codes. For instance, finding near-linear time $(\eps, \frac{1}{\eps}, \poly(\log n))$-list recoverable codes leads to indexing schemes that run in $O(n\poly(\log\log n))$ time via \cref{thm:enhanced-labeling}. 

As of the time of writing this paper, no such list-recoverable code is known. However, a recent work of Hemenway, Ron-Zewi, and Wootters~\cite{hemenway2017local} presents list-recoverable codes with $O\left(n^{1+\eps_0}\right)$ time probabilistic decoders for any $\eps_0>0$ that are appropriate for the purpose of being utilized in the construction of indexing schemes as outlined in \cref{thm:enhanced-labeling}. In \cref{sec:randomized-labeling}, we use such codes with the indexing scheme construction method of \cref{thm:enhanced-labeling} to provide a randomized indexing scheme with $O(n\log^{\eps_0} n)$ time complexity for any chosen $\eps_0>0$.

Then, in \cref{sec:applications}, we discuss the application of indexing schemes in the design of insertion-deletion codes. We start by \cref{lem:enhanced-sync-string-decoding} that enhances synchronization strings by using them along with indexing schemes and, therefore, enables us to reduce the time complexity of the position recovery subroutine of the decoders of codes  from~\cite{haeupler2017synchronization,haeupler2018synchronization4} to near-linear time. In \cref{sec:unique-codes}, we discuss our results for uniquely-decodable codes and prove \cref{thm:coding_applications_unique}. At the end, in \cref{sec:list-decodable-codes}, we address construction of list-decodable synchronization codes using indexing schemes. We start by \cref{thm:coding_applications} that gives a black-box conversion of a given list-recoverable code to a list-decodable insertion-deletion code by adding only a near-linear time overhead to the decoding complexity and, therefore, paves the path to obtaining insertion-deletion codes that are list-decodable in near-linear time upon the design of near-linear time list-recoverable codes. We use this conversion along with list-recoverable codes of~\cite{hemenway2017local} to prove~\cref{thm:list-decodable-codes}.

\section{Preliminaries and Notation}\label{sec:prelim}
In this section, we provide definitions and preliminaries that will be useful throughout the rest of the paper.

\subsection{Synchronization Strings}\label{sec:prelim-sync-strings}
We start by some essential definitions and lemmas regarding synchronization strings. Synchronization strings are defined as follows~\cite{haeupler2017synchronization}.

\begin{definition}[$\eps$-Synchronization Strings]\label{def:synchronization-strings}
String $S \in \Sigma^n$ is an $\eps$-synchronization string if for every $1 \leq i < j < k \leq n + 1$ we have that $\ED(S[i, j),S[j, k)) > (1-\eps) (k-i)$. 
\end{definition}

An important property of such strings is that they cannot have pairs of long identical disjoint subsequences.

\begin{theorem}[Theorem 6.2 of \cite{haeupler2017synchronization}]\label{lem:self-matching}
Let $S$ be an $\eps$-synchronization string of length $n$ and $1\le i_1<i_2<\cdots<i_l\le n$ and $1\le j_1<j_2<\cdots<j_l\le n$ be integers so that $S(i_k)=S(j_k)$ but $i_k \neq j_k$  for all $1\le k\le l$. Then $l\le \eps n$.
\end{theorem}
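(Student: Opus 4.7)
The plan is to use the $\eps$-synchronization property in its LCS form. Since the edit distance counts only insertions and deletions, $\ED(S[i,j),S[j,k)) > (1-\eps)(k-i)$ is equivalent to $\LCS(S[i,j),S[j,k)) < \eps(k-i)/2$, so I will try to extract from the self-matching a large common subsequence of two adjacent substrings of $S$ and derive a contradiction when $l > \eps n$.

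The key preliminary observation is that for any cut $b \in \{2,\ldots,n\}$, the pairs $k$ with $\min(i_k,j_k) < b \le \max(i_k,j_k)$ induce a non-crossing bipartite matching between $S[1,b)$ and $S[b,n+1)$: for a straddling pair with $i_k < j_k$ use the edge $(i_k,j_k)$, and for one with $i_k > j_k$ use $(j_k,i_k)$. A short case analysis on the orientations (splitting into $T_+=\{k:i_k<j_k\}$ and $T_-=\{k:i_k>j_k\}$) together with the monotonicity of both $(i_k)$ and $(j_k)$ shows that any two such edges have their endpoints in matching order; the mixed case with a $T_-$ pair of smaller index than a $T_+$ pair is ruled out directly from the straddling constraints and the monotonicity of $(i_k)$. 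Since each edge contributes the common character $S(i_k)=S(j_k)$, the straddling count $c(b)$ satisfies $c(b) \le \LCS(S[1,b),S[b,n+1)) < \eps n/2$ by the sync property with $(i,j,k) = (1,b,n+1)$.

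To pass from this uniform per-cut bound to the global bound $l \le \eps n$, I plan to induct on $n$. For a carefully chosen $b$, the pairs partition into (a)~straddling pairs (bounded by $\eps n/2$), (b)~pairs contained in $S[1,b)$, and (c)~pairs contained in $S[b,n+1)$. Every substring of an $\eps$-synchronization string is itself an $\eps$-synchronization string, so the inductive hypothesis bounds (b) and (c). The main obstacle is picking $b$ to avoid losing a constant factor: a naive midpoint cut only gives $l \le \tfrac12 \eps n + \eps(b-1) + \eps(n-b+1) = \tfrac32 \eps n$, and even a direct averaging $\sum_b c(b) = \sum_k |i_k-j_k| \le n \cdot \eps n/2$ is too weak, yielding only $\sum_k |i_k - j_k| \le \eps n^2/2$. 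To recover the tight factor, I expect one needs either (i)~to pick $b$ so that one of the recursive parts is empty, e.g., just past the rightmost endpoint of a maximal cluster of overlapping chords, reducing the recursion to a single branch, or (ii)~to exploit the non-crossing structure to show that a single well-chosen cut already straddles all but an $\eps n/2$-sized subset of the pairs, in which case the basic bound closes the argument. Tuning this choice to land on the tight constant $\eps n$ is the main technical difficulty.
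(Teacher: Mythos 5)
Your preliminary observations are correct: for any cut $b$, the pairs straddling $b$ induce a non-crossing matching between $S[1,b)$ and $S[b,n+1)$ (and the two mixed-orientation cases are indeed impossible, as your case analysis shows), so the straddling count $c(b)$ is bounded by the LCS, which the synchronization property at $(1,b,n+1)$ caps below $\eps n/2$. You are also right that this alone is too weak, and you honestly flag that you have not closed the gap. Neither of your proposed fixes succeeds as stated: a single well-chosen cut cannot straddle ``all but $\eps n/2$'' of the pairs (take $\eps n$ disjoint short chords spread evenly over $[1,n]$: every cut straddles at most one), and recursing on one empty side while always invoking the bound at the global triple $(1,b,n+1)$ still yields $l < \eps n/2 + \eps(b^*-1)$, which for $b^*$ near $n$ is $\tfrac32\eps n$.

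The step you are missing combines two ideas. First, apply the synchronization property \emph{locally} rather than always at $(1,b,n+1)$: writing $a_k=\min(i_k,j_k)$ and $b_k=\max(i_k,j_k)$ (both strictly increasing sequences), the pairs with $a\le a_k < b \le b_k < c$ give a non-crossing common subsequence of $S[a,b)$ and $S[b,c)$, so their count is below $\eps(c-a)/2$, not merely $\eps n/2$. Second, partition the pairs greedily into contiguous blocks: set $m_1=1$ and let $m_{t+1}$ be the least index $>m_t$ with $a_{m_{t+1}}\ge b_{m_t}$. Since $(a_k)$ and $(b_k)$ are strictly increasing, every pair in $G_t=\{m_t,\dots,m_{t+1}-1\}$ straddles the single cut $b_{m_t}$ and lies inside the span $[A_t,C_t]=[a_{m_t},\,b_{m_{t+1}-1}]$, so the local bound at the triple $(A_t,\,b_{m_t},\,C_t+1)$ gives $|G_t|<\eps(C_t-A_t+1)/2$. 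Finally, since $A_{t+2}=a_{m_{t+2}}\ge b_{m_{t+1}}>b_{m_{t+1}-1}=C_t$, non-adjacent block spans are disjoint, so each position of $[1,n]$ lies in at most two spans and $\sum_t(C_t-A_t+1)\le 2n$, whence $l=\sum_t|G_t|<\eps n$. This local-bound-plus-double-counting argument is the genuine content of the theorem, and it is the piece your plan identifies as ``the main technical difficulty'' without supplying.
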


We will also make use of the following construction of synchronization strings that is developed in~\cite{haeupler2017synchronization3, cheng2018synchronization}.
\begin{theorem}[Theorem 1.3 from~\cite{cheng2018synchronization}]
For any $\eps\in(0,1)$ and integer $n$, one can construct an $\eps$-synchronization string of length $n$ over an alphabet of size $O\left(\eps^{-3}\right)$ in linear time.
\end{theorem}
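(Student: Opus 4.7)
The plan is to first establish existence of $\eps$-synchronization strings over a small alphabet by a probabilistic argument, then derandomize it into an explicit linear-time construction. I would start by recalling the existential side: for a random string over an alphabet of size roughly $C\eps^{-2}$, the probability that a fixed triple $(i,j,k)$ violates the property $\ED(S[i,j), S[j,k)) > (1-\eps)(k-i)$ can be bounded by a counting argument over common subsequences, and a union bound over all $O(n^3)$ triples shows that a valid string exists. The target $O(\eps^{-3})$ is slightly larger than this, which gives enough slack to drive a \emph{deterministic} construction.

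The main construction I would carry out is a greedy/iterative one: construct the string symbol by symbol, maintaining the invariant that the prefix produced so far is an $\eps$-synchronization string. At step $t$, I would characterize the set of \emph{forbidden symbols} — those $\sigma \in \Sigma$ such that setting $S[t]=\sigma$ creates a triple $(i,j,t)$ with $\ED(S[i,j), S[j,t]) \leq (1-\eps)(t-i+1)$. Using \cref{lem:self-matching}-style arguments, every such violation must be witnessed by a long common subsequence between windows ending just before $t$ and windows ending at $t$; by a careful charging of such subsequences to positions of the existing prefix, I would bound $|\text{forbidden}(t)|$ by strictly less than $C\eps^{-3}$, leaving room to pick an allowed symbol. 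This is the combinatorial heart of the argument.

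For the running-time side, the plan is to avoid recomputing the forbidden set from scratch at each step. One can maintain, for each symbol $\sigma$ and each "scale" $2^\ell \leq t$, a compact summary of how $\sigma$ would interact with windows of length $\approx 2^\ell$ ending at $t$. Since the synchronization property already in force sharply restricts the number of "dangerous" past windows (again via \cref{lem:self-matching}), these summaries have total size $\tilde{O}(\eps^{-O(1)})$ per step, and updating them amortizes to $O(1)$ time per position for constant $\eps$. Combined with the greedy selection, this yields the desired linear-time bound.

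The main obstacle I expect is the tight $O(\eps^{-3})$ alphabet size: pushing from the easy $O(\eps^{-4})$ bound (which follows rather directly from a union bound over triples and lengths) down to $O(\eps^{-3})$ requires exploiting the fact that violations at different scales cannot be "charged" independently — one must aggregate violating triples sharing a common midpoint $j$ and use the self-matching structure to save a factor of $\eps^{-1}$. A secondary obstacle is ensuring that the rolling data structure used to detect forbidden symbols in amortized constant time is consistent with this sharper combinatorial bookkeeping; one must be careful that the summaries are fine-grained enough to certify non-violation at every scale simultaneously, without blowing up the per-step cost.
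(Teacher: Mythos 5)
This statement is not proved in the paper you are reading; it is imported verbatim as Theorem~1.3 of Cheng, Jin, Li, and Wu~\cite{cheng2018synchronization}, so there is no in-paper proof to compare against. Evaluating your sketch on its own terms, the first concrete problem is the existential warm-up: a plain union bound over all $O(n^3)$ triples does \emph{not} show that a random string over an alphabet of size $\Theta(\eps^{-2})$ is an $\eps$-synchronization string. For a fixed triple $(i,j,k)$ with $k-i=L$ small, the failure probability is a constant independent of $n$; e.g., for $L=2$ the condition degenerates to $S[i]\neq S[i+1]$, which fails with probability $1/q=\Theta(\eps^2)$, and summing over the $\Theta(n)$ such short triples already makes the union bound diverge. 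This is precisely why the existence argument in~\cite{haeupler2017synchronization} goes through the Lov\'asz Local Lemma rather than a union bound, and why ``derandomize the union bound greedily'' is not a sound starting point.

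The second, larger gap is that the claim $|\text{forbidden}(t)|<C\eps^{-3}$ is the entire combinatorial heart of your plan, and the proposal only gestures at it. Appending a letter $\sigma$ must be checked against \emph{every} pair $(i,j)$ with $j\le t+1$, each of which contributes its own set of forbidden letters (those extending a near-maximal common subsequence between $S[i,j)$ and $S[j,t+1]$ by one match); you would need to show that the union of these sets, over $\Theta(t^2)$ pairs at every scale simultaneously, stays strictly below $|\Sigma|$. \cref{lem:self-matching} is a statement about strings that already \emph{are} $\eps$-synchronization strings and does not directly bound how many letters can complete a near-violation; you would need a quantitatively different ``almost-self-matching'' lemma, and without it there is no guarantee the greedy process does not get stuck. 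The running-time claim is similarly asserted rather than argued: certifying non-violation naively touches every earlier window endpoint $j$, and the ``per-scale summaries'' have no concrete implementation or amortization argument. For what it is worth, the actual linear-time construction in~\cite{cheng2018synchronization} (building on the long-distance constructions of~\cite{haeupler2017synchronization3}) is not greedy at all --- it concatenates codewords of an explicit error-correcting code and composes/boosts shorter synchronization-type strings, enforcing the property structurally rather than verifying it symbol by symbol. If you want to prove this theorem, that is the machinery to study; the greedy route as written has a hole at both the ``always an allowed letter'' step and the ``amortized $O(1)$ per position'' step.
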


Haeupler et al.~\cite{haeupler2018synchronization4} suggest a construction of list-decodable insertion-deletion codes by indexing the codewords of a list-recoverable code with symbols of a synchronization string. As we will use similar techniques and ideas throughout this paper, we formally define list-recoverable codes and review the main result of \cite{haeupler2018synchronization4} in the rest of this section.

\begin{definition}[List-recoverable codes]\label{def:ListRecoverableCodes}
Code $\mathcal{C}$ with encoding function $\Encode:\Sigma^{nr}\rightarrow\Sigma^n$ is called $(\alpha, l, L)$-list recoverable if for any collection of $n$ sets $S_1, S_2, \cdots, S_n\subset\Sigma$ of size $l$ or less, there are at most $L$ codewords for which more than $\alpha n$ elements appear in the list that corresponds to their position, i.e., $$\left|\left\{x \in \mathcal{C}\mid \left|\left\{i \in [n] \mid x_i \in S_i\right\}\right| \geq \allowbreak\alpha n \right\}\right| \leq L.$$
\end{definition}

\begin{theorem}[Theorem 1.1 from~\cite{haeupler2018synchronization4}]\label{thm:InsDelListDecoding}
For every $0<\delta,\eps<1$ and $\gamma > 0$, there exist a family of list-decodable insdel codes that can protect against $\delta$-fraction of deletions and $\gamma$-fraction of insertions and achieves a rate of $1-\delta-\eps$ or more over an alphabet of size 
$\left(\frac{\gamma+1}{\eps^2}\right)^{O\left(\frac{\gamma+1}{\eps^3}\right)}=O_{\gamma, \eps}\left(1\right)$. These codes are list-decodable with lists of size $L_{\eps, \gamma}(n)= \exp\left(\exp\left(\exp\left(\log^* n\right)\right)\right)$, and have polynomial time encoding and decoding complexities.
\end{theorem}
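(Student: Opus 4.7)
The plan is to obtain these codes by indexing a suitable list-recoverable error correcting code with a synchronization string, mirroring the paradigm already introduced in Section~\ref{sec:prelim-sync-strings}. Concretely, I would start from an $(\alpha, \ell, L_0)$-list-recoverable outer code $\mathcal{C}_{\text{out}}: \Sigma_0^k \to \Sigma_0^n$ of rate at least $1-\delta-\eps/2$ with the largest possible $\alpha$ (approaching $1-\delta-\eps/2$), over an alphabet of size $|\Sigma_0|=\mathrm{poly}(1/\eps)$, and fix an $\eps'$-synchronization string $I\in \Sigma_S^n$ (with $\eps'$ a sufficiently small polynomial in $\eps$) from the construction in the theorem cited from~\cite{cheng2018synchronization}, so that $|\Sigma_S|=O(\eps^{-3})$. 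The encoding of a message $m$ is the coordinate-wise concatenation $\mathcal{C}_{\text{out}}(m)\times I \in (\Sigma_0\times\Sigma_S)^n$. The alphabet size is then $|\Sigma_0|\cdot|\Sigma_S|$, which can be made to fit the claimed bound by tuning the parameters of $\mathcal{C}_{\text{out}}$, and the rate loss from appending the $\Sigma_S$-component is at most a further $\eps/2$ term once $|\Sigma_0|$ is chosen large enough.

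The decoder works in two stages: a \emph{position recovery} stage based on the synchronization string, followed by a list recovery of the outer code. Given a received word $\tilde{r}\in(\Sigma_0\times\Sigma_S)^{\tilde n}$ corrupted by at most $\delta n$ deletions and $\gamma n$ insertions, I project onto the second coordinate to obtain a corrupted copy $\tilde I$ of $I$ and, for each received index $j$, compute a short list of candidate original positions. The natural way to do this is to enumerate, for each received position $j$, all $i\in[n]$ such that $j$ can plausibly correspond to $i$ under the allowed budget of insertions and deletions; the self-matching lemma (\cref{lem:self-matching}) guarantees that the \emph{total} number of (position, candidate) pairs with incorrect candidates is only $O(\eps' n)$. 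Each received pair $(r_j^{(1)},r_j^{(2)})$ then contributes its outer symbol $r_j^{(1)}$ into the candidate set $S_i\subseteq \Sigma_0$ for every candidate position $i$. After this, the $i$-th list $S_i$ has size at most $\ell$ (by construction, choosing the allowed list parameter of the synchronization-string decoder to be $\ell$) and contains $\mathcal{C}_{\text{out}}(m)_i$ for all but an $O(\delta+\gamma+\eps')$-fraction of the positions $i$.

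With the per-position lists in hand, I invoke the list recovery algorithm of $\mathcal{C}_{\text{out}}$ to obtain a set of at most $L_0$ candidate codewords, and finally filter these by edit distance to $\tilde r$ to produce the output list. For the claimed triply-exponential list size $\exp(\exp(\exp(\log^* n)))$ and polynomial time decoding, I would instantiate $\mathcal{C}_{\text{out}}$ with the recursively composed list-recoverable codes of Hemenway--Ron-Zewi--Wootters~\cite{hemenway2017local} (or the Guruswami--Indyk-style codes), whose list sizes and complexities carry over directly into the final bound. The rate and alphabet-size computations then follow by routine bookkeeping: setting $\ell=O(1/\eps)$, $\eps'=\Theta(\eps^3/(\gamma+1))$, and tracking how the $\Sigma_S$-alphabet and the recursive inner-code alphabets multiply gives the stated $(\tfrac{\gamma+1}{\eps^2})^{O((\gamma+1)/\eps^3)}$ alphabet.

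The main obstacle I expect is the position recovery analysis in the presence of both insertions \emph{and} deletions, as opposed to deletions alone. Ensuring that the per-position candidate lists are simultaneously of bounded size $\ell$ \emph{and} correct at a $1-\alpha$ fraction of positions requires a careful quantitative use of the self-matching property against an adversary who can both delete real synchronization symbols and inject fake ones; this is where $\eps'$ must be chosen as a sufficiently small polynomial in $\eps/(\gamma+1)$ and where the $(\gamma+1)$-factors in the final alphabet size originate. Everything else---rate accounting, the wrapping with list recovery, and the final distance verification---is essentially a black-box reduction.
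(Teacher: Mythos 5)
Your high-level plan matches the construction that~\cite{haeupler2018synchronization4} actually uses: take a near-capacity list-recoverable code, concatenate coordinate-wise with a synchronization string, do position recovery from the synchronization coordinate, and then list-recover the outer code on the resulting per-position candidate lists. The parameter bookkeeping you outline (choosing $\eps'\approx\eps^2/(\gamma+1)$, making $|\Sigma_0|$ large enough to amortize the synchronization-string alphabet, instantiating with list-recoverable codes having list size $\exp(\exp(\exp(\log^*n)))$) is also the right shape.

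The place where the proposal is under-specified in a way that matters is the position-recovery step. You write that the decoder should ``enumerate, for each received position $j$, all $i\in[n]$ such that $j$ can plausibly correspond to $i$ under the allowed budget of insertions and deletions,'' and that this will produce lists $S_i$ of size at most $\ell$. Taken literally, this would not work: under $\delta n$ deletions and $\gamma n$ insertions, a received symbol at position $j$ could a priori have originated from $\Omega(n)$ source positions, so ``all plausible $i$'' is not a bounded list and the self-matching lemma alone does not force it to be one. The actual mechanism -- the \emph{global decoding algorithm} of~\cite{haeupler2017synchronization}, used again in~\cite{haeupler2018synchronization4}, and recalled in this paper's proof of \cref{lem:enhanced-sync-string-decoding} -- is different: it repeatedly (for $K$ rounds) computes a longest common subsequence between the synchronization string $S$ and the remaining part of the received index portion, assigns positions according to the matching, and deletes the matched received symbols before the next round. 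This procedure structurally guarantees that each original position $i$ is matched to at most $K$ received symbols, which is what bounds the per-position list size by $\ell=K$; the self-matching property (\cref{lem:self-matching}) is then used, as you say, to bound the number of \emph{incorrect} matches per round. So the ingredient you are missing is that the bounded-list property comes from the $K$-round LCS algorithm itself, not from a generic ``plausible positions'' enumeration. With that replaced, the rest of your argument (rate/alphabet accounting, invoking Hemenway--Ron-Zewi--Wootters, final filtering by edit distance) is the right reduction.
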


By choosing $\delta = \gamma = 1-\epsilon$ and $\eps=\epsilon/2$ in \cref{thm:InsDelListDecoding}, we derive the following corollary.

\begin{corollary}\label{cor:capacity_approaching_list_decodable_insdel}
For any $0<\eps<1$, there exists an alphabet $\Sigma_\eps$ with size $\exp(\eps^{-3}\log 1/\eps)$ and an infinite family of insertion-deletion codes, $\mathcal{C}$, that achieves a rate of $\frac{\eps}{2}$ and is  $L$-list-decodable from any $(1-\eps)n$ deletions and $(1-\eps)n$ insertions in polynomial time where $L=\exp(\exp(\exp(\log^* n)))$.
\end{corollary}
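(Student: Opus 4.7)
The plan is to apply \cref{thm:InsDelListDecoding} directly with the parameter choice stated in the sentence immediately preceding the corollary: set the theorem's deletion fraction $\delta$ and insertion fraction $\gamma$ to $1-\eps$, and set the theorem's rate slack parameter (also written $\eps$ in the theorem) to $\eps/2$. Since the corollary already uses $\eps$ for its own free parameter, when writing out the details I would introduce a distinct symbol such as $\eps_T$ for the theorem's slack in order to avoid overloading notation.

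With these substitutions, each claim in the corollary reduces to a short algebraic check against the guarantees of \cref{thm:InsDelListDecoding}. The rate bound becomes $1-\delta-\eps_T = 1-(1-\eps)-\eps/2 = \eps/2$, matching the claim. The alphabet size becomes $\left(\frac{\gamma+1}{\eps_T^{2}}\right)^{O\left(\frac{\gamma+1}{\eps_T^{3}}\right)} = \left(\frac{2-\eps}{(\eps/2)^{2}}\right)^{O\left(\frac{2-\eps}{(\eps/2)^{3}}\right)}$, which I would simplify to $\exp\!\left(O(\eps^{-3}\log(1/\eps))\right)$, as asserted. The list size $L=\exp(\exp(\exp(\log^{*} n)))$, the tolerated insertion and deletion fractions $(1-\eps)n$, and the polynomial-time encoder and decoder are inherited verbatim from the theorem's guarantees, so no additional argument is required for those items.

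The main obstacle is essentially nonexistent: the corollary is a pure specialization of \cref{thm:InsDelListDecoding}, not a new result. The only points that require any care are the arithmetic used to simplify the alphabet size expression (checking that $(2-\eps)/(\eps/2)^{2} = O(1/\eps^{2})$ and $(2-\eps)/(\eps/2)^{3} = O(1/\eps^{3})$, and then absorbing the base into the exponent as $\exp(O(\eps^{-3}\log(1/\eps)))$), and the explicit notational separation between the corollary's $\eps$ and the homonymous slack parameter in the theorem. Once these are handled, the proof is a one-line invocation of \cref{thm:InsDelListDecoding}.
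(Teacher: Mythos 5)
Your proposal is correct and matches the paper exactly: the paper derives the corollary by the one-line specialization $\delta=\gamma=1-\eps$ and theorem-slack $\eps/2$ in \cref{thm:InsDelListDecoding}, which is precisely the substitution you make. Your algebraic checks of the rate and alphabet-size expressions are the only nontrivial content, and they are carried out correctly.
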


\subsection{Non-crossing Matchings}\label{sub:non-crossing}
The last element that we utilize as a preliminary in this paper is an algorithm provided in a work of Hunt and Szymanski~{\cite{hunt1977fast}} to compute the maximum \emph{non-crossing matching} in a bipartite graph. Let $G$ be a bipartite graph with an ordering for vertices in each part. A non-crossing matching in $G$ is a matching in which edges do not intersect. 

\begin{definition}[Non-Crossing Matching]
Let $G$ be a bipartite graph with ordered vertices $u_1, u_2, \cdots, u_m$ and $v_1, v_2, \cdots, v_n$ in each part. A non-crossing matching is a subset of edges of $G$ like 
$$\left\{(u_{i_1}, v_{j_1}), (u_{i_2}, v_{j_2}), \cdots, (u_{i_l}, v_{j_l})\right\}$$
 where $i_1<i_2<\cdots<i_l$ and $j_1<j_2<\cdots<j_l$.
\end{definition}

In this paper, we use an algorithm by Hunt and Szymanski~{\cite{hunt1977fast}} that essentially computes the largest non-crossing matching in a given bipartite graph.

\begin{theorem}[Theorem 2 of Hunt and Szymanski~{\cite{hunt1977fast}}] \label{thm:max_non_crossing_algorithm}
Let $G$ be a bipartite graph with $n$ ordered vertices in each part and $r$ edges. There is an algorithm that computes the largest non-crossing matching of $G$ in $O\left((n+r)\log\log n\right)$.
\end{theorem}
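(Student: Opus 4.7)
\smallskip

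\noindent\textbf{Proof proposal for \cref{thm:max_non_crossing_algorithm}.} The plan is to reduce the maximum non-crossing matching to a patience-sorting-style computation and then speed up the inner search using a fast integer predecessor structure. Identify each edge $(u_i,v_j)$ with the lattice point $(i,j)$; a non-crossing matching then corresponds exactly to a set of points whose coordinates are simultaneously strictly increasing in both the $i$- and $j$-directions, with the added constraint that each $i$ and each $j$ is used at most once. This is essentially the Hunt--Szymanski reformulation of LCS as a longest strictly increasing subsequence problem over the point set of matching pairs.

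Following Hunt--Szymanski, I would maintain an array $T[\cdot]$ with the invariant that $T[k]$ is the smallest right-index $j$ such that there is a non-crossing matching of size exactly $k$ using only left endpoints from $\{u_1,\dots,u_i\}$ that have been processed so far and ending at $v_j$. The array $T$ is strictly increasing in $k$, so for each new edge $(u_i,v_j)$ one wants to find the largest $k$ with $T[k]<j$ and then set $T[k+1]\leftarrow\min(T[k+1],j)$. To prevent a single left vertex $u_i$ from being matched twice, I would process all edges incident to $u_i$ together, in \emph{decreasing} order of the right endpoint $j$: this ensures that none of the updates caused by $u_i$'s other edges can chain into an update caused by $u_i$ itself. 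The final answer is the largest $k$ for which $T[k]$ is finite.

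Correctness follows from the standard Hunt--Szymanski induction on $i$: after processing $u_1,\dots,u_i$, the array $T$ correctly records, for every $k$, the optimal terminating right-index of a size-$k$ non-crossing matching restricted to the first $i$ left vertices, and the decreasing-$j$ order per left vertex preserves this invariant while handling the ``at most one edge per left vertex'' constraint. The total number of $T$-updates equals the number of edges $r$, plus $O(n)$ overhead to iterate over left vertices and read off the final answer.

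To get from $O((n+r)\log n)$ down to $O((n+r)\log\log n)$, I would replace the binary search on $T$ by a van Emde Boas tree (or a $y$-fast trie) over the universe $\{1,2,\dots,n\}$ storing the current values of $T[1],T[2],\dots$; each predecessor/insert/delete then costs $O(\log\log n)$. Since each of the $r$ edges triggers one predecessor query and at most one insert/delete, and since the van Emde Boas tree can be initialized in $O(n)$ time, the total cost is $O((n+r)\log\log n)$. The main subtlety I would be careful about is the per-vertex processing order and the bookkeeping required to delete the old position of $T[k+1]$ from the vEB tree before inserting the new one, so that the structure continues to represent exactly the active entries of $T$; this is where the decreasing-$j$ processing order is essential to avoid an $i$-th left vertex being double-counted within a single round of updates.
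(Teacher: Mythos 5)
The paper does not supply its own proof of this theorem; it is stated verbatim as a citation (Theorem~2 of Hunt and Szymanski~\cite{hunt1977fast}) and used as a black box. Your reconstruction is a correct and faithful account of the cited source: the threshold-array $T[k]$ invariant, the per-left-vertex processing of incident edges in decreasing order of $j$ (which is exactly the device that prevents a single $u_i$ from being used twice in a chain of updates), and the replacement of binary search over the strictly increasing array $T$ by a van Emde Boas predecessor structure over $\{1,\dots,n\}$ to shave $\log n$ down to $\log\log n$ are all exactly the ingredients of the Hunt--Szymanski algorithm and its Theorem~2 refinement. You also correctly observe (implicitly) that nothing in the argument uses the ``equal characters'' structure of LCS, so it applies to arbitrary bipartite graphs with ordered parts, which is the form the paper invokes.
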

\section{Near-Linear Edit Distance Computations via Indexing}\label
{sec:labeling}
We start by a description of the string that will be used in our indexing scheme. Let $\mathcal{C}$ be an insertion-deletion code over alphabet $\Sigma_{\mathcal{C}}$, with block length $N$, and rate $r$ that is $L$-list decodable from any $N(1-\eps)$ deletions and $N(1-\eps)$ insertions in $T_{\Decode_{\mathcal{C}}}(N)$ for some sufficiently small $\eps > 0$.
We construct the indexing sequence $I$ by simply concatenating the codewords of $\mathcal{C}$. The construction of such indexing sequence resembles long-distance synchronization strings from~\cite{haeupler2017synchronization3}.

Throughout this section, we consider string $S$ of length $N\cdot\left|\Sigma_{\mathcal{C}}\right|^{Nr}$ that consists of coordinate-wise concatenation of a content string $m$ and the indexing string $I$. In other words, $S_i = (m_i, I_i)$. We will provide algorithms that approximate the edit distance of $S$ to a given string $S'$.

Consider the longest common subsequence between $S$ and $S'$. One can represent such common subsequence by a matching $\mathcal{M}_{LCS}$ with non-crossing edges in a bipartite graph with two parts of size $|S|$ and $|S'|$ where each vertex corresponds to a symbol in $S$ or $S'$ and each edge corresponds to a pair of identical symbols in the longest common subsequence.

Note that one can turn $S$ into $S'$ by simply deleting any symbol that corresponds to an unmatched vertex in $S$ and then inserting symbols that correspond to the unmatched vertices in $S'$. Therefore, the edit distance between $S$ and $S'$ is equal to the number of non-connected vertices in that graph.
To provide a $(1+\eps)$ edit distance approximation as described in \cref{thm:near-linear-labeling}, one only needs to compute a common subsequence, or equivalently, a non-crossing matching between $S$ and $S'$ in which the number of unmatched vertices does not exceed a $1+\eps$ multiplicative factor of $\mathcal{M}_{LCS}$'s.

We start by an informal intuitive justification of the algorithm. The algorithm starts by splitting the string $S'$ into blocks of length $N$ in the same spirit as $S$. We denote $i$th such block by $S'(i)$ and the $i$th block of $S$ by $S(i)$. Note that the blocks of $S$ are codewords of an insertion-deletion code with high distance indexed by $m$ ($S(i)=\mathcal{C}(i) \times m[N(i-1), Ni-1]$). Therefore, one might expect that any block of $S$ that is not significantly altered by insertions and deletions, (1) appears in a set of consecutive blocks in $S'$ and (2) has a small edit distance to at least one of those blocks.

Following this intuition, our proposed algorithm works thusly: For any block of $S'$ like $S'(i)$, the algorithm uses the list decoder of $\mathcal{C}$ to find all (up to $L$) blocks of $S$ that can be turned into $S'(i)$ by $N(1-\eps)$ deletions and $N(1-\eps)$ insertions ignoring the content portion on $S'$. In other words, let $S'(i)=C'_i\times m'[N(i-1), Ni-1]$. We denote the set of such blocks by $\Decode_{\mathcal{C}}(C'_i)$.
Then, the algorithm constructs a bipartite graph $G$ with $|S|$ and $|S'|$ vertices on each side (representing symbols of $S$ and $S'$) as follows: a symbol in $S'(i)$ is connected to all identical symbols in the blocks that appear in $\Decode_{\mathcal{C}}(C'_i)$ or any block that is in their 
\noSTOC{$w=O\left(\frac{1}{\eps}\right)$}
\STOConly{$w=O\left(1/\eps\right)$}
 neighborhood, i.e., is up to 
\noSTOC{$O\left(\frac{1}{\eps}\right)$}
\STOConly{$O\left(1/\eps\right)$}
  blocks away from at least one of the members of $\Decode_{\mathcal{C}}(C'_i)$.

Note that any non-crossing matching in $G$ corresponds to some common subsequence between $S$ and $S'$ because $G$'s edges only connect identical symbols. In the next step, the algorithm finds the largest non-crossing matching in $G$, $\mathcal{M}_{ALG}$, and outputs the corresponding set of insertions and deletions as the output. We will use the algorithm proposed by Hunt and Szymanski~\cite{hunt1977fast} (see \cref{thm:max_non_crossing_algorithm}) to find the largest non-crossing matching. A formal description of the algorithm is available in~\cref{alg:EditDistanceApprox}.

\begin{algorithm}
\caption{$(1+11\eps)$-Approximation for Edit Distance}\label{alg:EditDistanceApprox}
\begin{algorithmic}[1]
\Procedure{ED-Approx}{$S, S', N, \Decode_{\mathcal{C}}(\cdot)$}

\State Make empty bipartite graph $G$ with parts of size $(|S|, |S'|)$
\State $w=\frac{1}{\eps}$

\For{{\bf each} $S'(i)=C'_i\times m'[N(i-1), Ni-1]$}
\State{$List \leftarrow \Decode_{\mathcal{C}}(C'_i)$}
\For{{\bf each} $j \in List$}
\For{$k\in\left[j-w, j+w\right]$}
\State \label{step:adding-edges-to-G}Connect pairs of vertices in $G$ that correspond to identical symbols in $S(k)$ and $S'(i)$.
\EndFor
\EndFor
\EndFor

\State $\mathcal{M}_{ALG} \leftarrow$ Largest non-crossing matching in $G$ \noSTOC{(Using~\cref{thm:max_non_crossing_algorithm})}

\State \textbf{return} $\mathcal{M}_{ALG}$
\EndProcedure
\end{algorithmic}
\end{algorithm}

\subsection{Analysis}
We now proceed to the analysis of approximation guarantee and time complexity of~\cref{alg:EditDistanceApprox}.

\begin{theorem}\label{thm:edit-distance-approximation-time}
For $n=\max(|S|, |S'|)$, the running time of \cref{alg:EditDistanceApprox} is {$O\left(\frac{n}{N}\cdot T_{\Decode_{\mathcal{C}}}(N) + \frac{NL}{\eps}\cdot n\log\log n\right)$}.
\end{theorem}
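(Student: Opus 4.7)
The plan is to split the running time of \cref{alg:EditDistanceApprox} into three cleanly separable pieces corresponding to its three distinct activities: (i) invoking the list decoder of $\mathcal{C}$ once on each block of $S'$, (ii) populating the bipartite graph $G$ with edges between identical symbols, and (iii) running the Hunt–Szymanski algorithm on $G$ to extract the largest non-crossing matching. The two terms in the claimed bound should come out of (i) and (iii) respectively, with (ii) dominated by (iii).

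For the decoder cost, I would just observe that $S'$ is partitioned into at most $|S'|/N \le n/N$ blocks of length $N$, and each one triggers a single call to $\Decode_{\mathcal{C}}$ of cost $T_{\Decode_{\mathcal{C}}}(N)$. This immediately gives a contribution of $O\bigl(\tfrac{n}{N}\cdot T_{\Decode_{\mathcal{C}}}(N)\bigr)$, matching the first term of the stated bound.

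Next I would bound the number of edges that end up in $G$. For each block $S'(i)$ the returned list has at most $L$ entries, and for each of them the algorithm sweeps a window of $2w+1 = O(1/\eps)$ consecutive $S$-blocks. Thus at most $O(L/\eps)$ ordered pairs of $N$-length blocks contribute edges to $G$, and any two such blocks yield at most $N^2$ identical-symbol pairs, enumerable in $O(N^2)$ time. Summing over the $\le n/N$ blocks of $S'$ gives a grand total of $r = O(nNL/\eps)$ edges, with the same asymptotic work to insert them. Feeding the resulting graph $G$, which has $O(n)$ vertices total and $r$ edges, into \cref{thm:max_non_crossing_algorithm} then costs $O\bigl((n+r)\log\log n\bigr) = O\bigl(\tfrac{NL}{\eps}\cdot n\log\log n\bigr)$, which matches the second term. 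Adding the three contributions yields the claimed bound.

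I do not foresee a genuine obstacle here: the argument is essentially bookkeeping once the algorithm is written down. The only subtlety worth flagging is the naive $N^2$ bound on identical-symbol pairs between two $N$-length blocks, which is tight in the worst case and is exactly what produces the $N$ factor in the second term. One might be tempted to use hashing or sorting to enumerate matching pairs faster per block, but since Hunt–Szymanski must in any case traverse all $r$ edges, such per-pair improvements would not lower the asymptotic total, so no cleverness is required beyond the straightforward accounting above.
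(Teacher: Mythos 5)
Your proposal is correct and follows essentially the same route as the paper's own proof: a three-way split into list-decoding cost, graph construction, and the Hunt--Szymanski step, with the same bound of $O(nNL/\eps)$ on the number of edges in $G$. The only difference is cosmetic bookkeeping (you count $N^2$ edges per pair of blocks and sum over $n/N$ blocks, while the paper counts $NLw$ edges per symbol of $S'$), which yields the same total.
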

\begin{proof}
The algorithm starts by using the decoder for any block in $S'$ which takes a total of $\frac{n}{N}\cdot T_{\Decode_{\mathcal{C}}}(N)$ time. Further, construction of $G$ will take \noSTOC{$O\left(nLN\frac{1}{\eps}\right)$}\STOConly{$O\left(nLN/\eps\right)$}.
$G$ has no more than 
\noSTOC{$n\cdot NL\cdot w= O\left(\frac{nNL}{\eps}\right)$}\STOConly{$n\cdot NL\cdot w= O\left(nNL/\eps\right)$}
edges.
Thus, by using Hunt and Szymanski's~\cite{hunt1977fast} algorithm (\cref{thm:max_non_crossing_algorithm}), the maximum non-crossing matching in $G$ can be computed in 
\noSTOC{$O\left((n+\frac{nNL}{\eps})\log\log n\right)=O\left(\frac{NL}{\eps}\cdot n\log\log n\right)$.}\STOConly{$O\left(\frac{NL}{\eps}\cdot n\log\log n\right)$.}
\end{proof}

Before providing the analysis for the approximation ratio of \cref{alg:EditDistanceApprox}, we define the following useful notions.
\begin{definition}[Projection]
Let $\mathcal{M}$ be a non-crossing matching between $S$ and $S'$. The \emph{projection of $S'(i)$ under $\mathcal{M}$} is defined to be the substring of $S$ between the leftmost and the rightmost element of $S$ that are connected to $S'(i)$ in $\mathcal{M}$. (see~\cref{fig:window-projection} for an example)
\end{definition}
\begin{definition}[Window Limited]
A non-crossing matching between $S$ and $S'$ is called \emph{$w$-window-limited} if the projection of any block of $S'$ fits in $w$ consecutive blocks of $S$. 
\end{definition}
The definition of window-limited matchings is inspired by the window-compatibility notion from~\cite{boroujeni2018approximating}.
\noSTOC{\begin{figure}
\centering
\includegraphics[width=.7\linewidth]{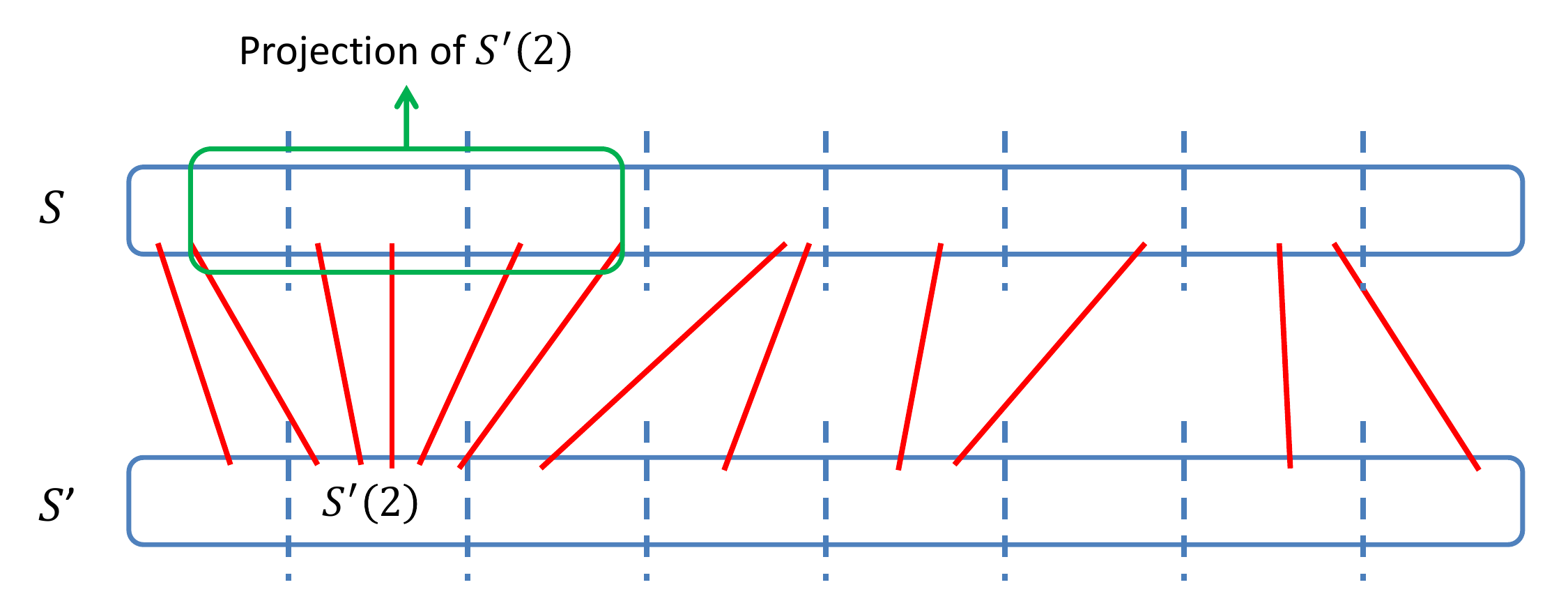}
\caption{An example of a matching between $S$ and $S'$ depicting the projection of $S'(2)$. This matching is 3-window-limited.}
\label{fig:window-projection}
\end{figure}}
\STOConly{\begin{figure}
\centering
\includegraphics[width=\linewidth]{Window-Projection.pdf}
\caption{An example of a matching between $S$ and $S'$ depicting the projection of $S'(2)$. This matching is 3-window-limited.}
\label{fig:window-projection}
\end{figure}}

\begin{theorem}\label{thm:edit-distance-approximation-ratio}
For $0<\eps<\frac{1}{21}$, \cref{alg:EditDistanceApprox} computes a set of up to $(1+11\eps)\cdot\ED(S, S')$ insertions and deletions that turn $S$ into $S'$.
\end{theorem}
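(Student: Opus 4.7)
The plan is to bound the matching deficit $|\mathcal{M}_{LCS}|-|\mathcal{M}_{ALG}|$, where $\mathcal{M}_{LCS}$ is the non-crossing matching corresponding to the longest common subsequence of $S$ and $S'$. Since $\ED(S,S')=|S|+|S'|-2|\mathcal{M}_{LCS}|$ and the algorithm's output has size $|S|+|S'|-2|\mathcal{M}_{ALG}|$, the $(1+11\eps)$-approximation reduces to showing $|\mathcal{M}_{LCS}|-|\mathcal{M}_{ALG}|\le\tfrac{11\eps}{2}\cdot\ED(S,S')$. Because $\mathcal{M}_{ALG}$ is the \emph{maximum} non-crossing matching in the graph $G$ constructed by the algorithm, it suffices to exhibit any non-crossing sub-matching of $\mathcal{M}_{LCS}$ that lives entirely in $G$ and has at least the required size; the strategy is to obtain such a sub-matching by two successive pruning steps, each losing only an $O(\eps)$-fraction of $\ED(S,S')$.

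First I would convert $\mathcal{M}_{LCS}$ into a $w$-window-limited non-crossing sub-matching $\mathcal{M}_1$ for $w=\Theta(1/\eps)$. Whenever some block $S'(i)$ has projection spanning more than $w$ consecutive $S$-blocks, the excess $S$-span forces many $S$-positions to be either unmatched or matched only to adjacent $S'$-blocks (by the non-crossing property); truncating the matching to a window of width $w$ around the densest zone therefore discards only edges that can be charged to the already-existing edit-distance cost. A window-compatibility argument in the spirit of Boroujeni et al.~\cite{boroujeni2018approximating} then yields $|\mathcal{M}_{LCS}|-|\mathcal{M}_1|\le O(\eps)\cdot\ED(S,S')$.

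Second I would restrict $\mathcal{M}_1$ to the sub-matching $\mathcal{M}_2$ containing only edges present in $G$. The pivotal observation is that whenever a single block $S(k)$ contributes at least $\eps N$ matches to a block $S'(i)$ under $\mathcal{M}_1$, the indexing codeword $\mathcal{C}(k)$ and the index portion $C'_i$ share at least $\eps N$ symbols in a common subsequence, so they lie at edit distance at most $2(1-\eps)N$, well within the $(1-\eps)N$-insertion/$(1-\eps)N$-deletion tolerance of the list decoder of $\mathcal{C}$; hence $k\in\Decode_{\mathcal{C}}(C'_i)$. Since $\mathcal{M}_1$ is $w$-window-limited, the entire projection of $S'(i)$ sits inside $[k-w,k+w]$, and so by step~\ref{step:adding-edges-to-G} \emph{every} edge of $\mathcal{M}_1$ incident to $S'(i)$ is placed into $G$. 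The ``bad'' case is when no block in $S'(i)$'s window contributes $\ge\eps N$ matches; then $|\mathcal{M}_1|_{S'(i)}<w\eps N$ to begin with, and these edges can be discarded. A counting/averaging argument over the $S'$-blocks, using that such light blocks already incur an $\Omega(N)$ share of $\ED(S,S')$ each, bounds the total dropped by $O(\eps)\cdot\ED(S,S')$.

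The main obstacle will be the joint tuning of the parameters: the decoder's tolerance forces the heavy/light threshold $\alpha$ to be at least $\eps$, while keeping the per-light-block loss $w\alpha N$ under control pushes $w$ to be small, but making $w$ small simultaneously inflates the step-1 window-truncation loss. The careful parameter balance (together with the non-crossing structure, which keeps the light-block budget from blowing up) is what produces the concrete constant $11$ in the approximation factor. Once the two per-step losses are each bounded by a small multiple of $\eps\cdot\ED(S,S')$, one concludes $|\mathcal{M}_{ALG}|\ge|\mathcal{M}_2|\ge|\mathcal{M}_{LCS}|-\tfrac{11\eps}{2}\ED(S,S')$, which is exactly the claimed $(1+11\eps)$-approximation.
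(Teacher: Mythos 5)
Your decomposition matches the paper's: step one prunes $\mathcal{M}_{LCS}$ to a $w$-window-limited non-crossing matching $\mathcal{M}_W$ with $w=\Theta(1/\eps)$, step two prunes further to a sub-matching contained in $G$ using exactly the observation that $\ge\eps N$ shared symbols between $S(k)$ and $S'(i)$ force $k\in\Decode_{\mathcal{C}}(C'_i)$, and the result then follows from maximality of $\mathcal{M}_{ALG}$ in $G$. This is the paper's argument.

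One quantitative claim in your step two needs repair, though, and it is the load-bearing one. You write that every ``light'' block $S'(i)$ (one with no $S$-block contributing $\ge\eps N$ matches) ``already incurs an $\Omega(N)$ share of $\ED(S,S')$,'' and then discard its $<w\eps N\approx N(1+\eps)$ edges. But discarding $\Theta(N)$ edges creates $\Theta(N)$ new unmatched vertices, so charging against only an $\Omega(N)$ budget gives a constant-factor loss, not $O(\eps)\cdot\ED$. The charge must instead scale with the projection width $B$ of $S'(i)$: each of the $B-2$ blocks of $S$ fully interior to the projection of $S'(i)$ can (by non-crossing and the definition of projection) match only into $S'(i)$, yet each has $<\eps N$ edges there, so each contributes $\ge N(1-\eps)$ unmatched $S$-vertices, giving a budget of $\Omega(BN)$ against $O(B\eps N)$ dropped edges; when $B\le 2$ you instead use the $\ge N(1-2\eps)$ unmatched vertices inside $S'(i)$ itself. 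Disjointness of projections makes these budgets exclusive. Your per-block $\Omega(N)$ budget is exactly what fails when $B\approx w$, and this $\Omega(BN)$ refinement is what the paper uses to close the argument; with it, your plan goes through and yields the stated $(1+3\eps)(1+7\eps)\le 1+11\eps$ bound.
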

\begin{proof}
Let $\ED_{ALG}$ denote the edit distance solution obtained by the matching suggested by~\cref{alg:EditDistanceApprox}.
We will prove that $\ED_{ALG} \le (1+11\eps)\cdot\ED(S, S')$ in the following two steps:
\begin{enumerate}
\item \label{item:first-step} 
Let $\mathcal{M}_W$ be the largest $w=\left(\frac{1}{\eps}+1\right)$-window-limited matching between $S$ and $S'$  and $\ED_W$ be its count of unmatched vertices. In the first step, we show the following.
\begin{equation}\label{eqn:step1}
\ED_W \le (1+3\eps)\ED(S, S')
\end{equation}

To prove this, consider $\mathcal{M}_{LCS}$, the matching that corresponds to the longest common subsequence. Then, we modify this matching by deleting all the edges connected to any block $S'(i)$ that violates the $w$-window-limited requirement. In other words, if the projection of $S'(i)$ spans over at least $w+1$ blocks in $S$, we remove all the edges with one endpoint in $S'(i)$. Note that removing the edges connected to $S'(i)$ might increase the number of unmatched vertices in the matching by $2N$. However, as projection of $S'(i)$ spans over at least $w+1$ blocks in $S$, one can assign all the originally unmatched vertices in that projection, which are at least $(w-1)\cdot N - N \ge (w-2)N$, to the newly introduced unmatched edges as an ``approximation budget''. Note that this assignment is mutually exclusive since projections of two distinct blocks of $S'$ are disjoint. Therefore, the above-mentioned removal procedure increases the number of unmatched vertices by a multiplicative factor no larger than 
$\frac{(w-2)N + 2N}{(w-2)N} = \frac{w}{w-2} = \frac{1+\eps}{1-\eps} \le 1+3\eps$ for $\eps\le\frac{1}{3}$.

Note that the matching obtained by the above-mentioned removal procedure is a $w$-window-limited matching and, therefore, has at least $\ED_W$ unmatched vertices by the definition of $\mathcal{M}_W$. Hence, \cref{eqn:step1} is proved.

\item \label{item:second-step} In the second step, we show that 
\begin{equation}\label{eqn:step2}
\ED_{ALG}\le (1+7\eps)\ED_W.
\end{equation}

Similar to Step \ref{item:first-step}, consider the largest $w$-window-limited matching $\mathcal{M}_{W}$ and then modify it by removing all the edges connected to any block $S'(i)$ that has less than $\eps N$ edges to any block in $S$. Again, we prove an approximation ratio by exclusively assigning some of the unmatched vertices in $\mathcal{M}_{W}$ to each $S'(i)$ that we choose to remove its edges.

Consider some $S'(i)$ that has less than $\eps N$ edges to any block in $S$. We assign all unmatched vertices in $S'(i)$ and all unmatched vertices in the projection of $S'(i)$ as the approximation budget for eliminated edges. Let $B$ be the number of blocks in $S$ that are contained or intersect with projection of $S'(i)$. As $S'(i)$ has less than $\eps N$ edges to any block in $S$, the total number of removed edges is less than $NB\eps$. This gives that there are at least $N-NB\eps$ unmatched vertices within $S'$ and $\max\{B-2,0\}\cdot N(1-\eps)$ unmatched vertices in its projection that are assigned to $2NB\eps$ new unmatched edges appearing as a result of removing $S'(i)$'s edges. Therefore, this process does not increase the number of unmatched vertices by a multiplicative factor more than
$
1+\frac{2NB\eps}{\left(N-NB\eps\right) + \max\{B-2,0\}\cdot N(1-\eps)}.$

If $B=1$ or 2, the above approximation ratio can be bounded above by $1+\frac{2NB\eps}{\left(N-NB\eps\right)} \le 1+\frac{4\eps}{1-2\eps} \le 1+5\eps$ for $\eps\le\frac{1}{10}$. Unless, $B\ge3$, therefore the approximation ratio is less than $1+\frac{2NB\eps}{(B-2)N(1-\eps)} \le 1+\frac{6\eps}{1-\eps} \le 1+7\eps$ for $\eps\le\frac{1}{7}$. Therefore, the edge removal process in Step~\ref{item:second-step} does not increase the number of unmatched vertices by a factor larger than $1+7\eps$.

Note that the matching obtained after the above-mentioned procedure is a $w$-window limited one in which any block of $S'$ that contains at least one edge, has more than $N\eps$ edges to some block in $S$ within its projection. Therefore, this matching is a subgraph of $G$. Since $\mathcal{M}_{ALG}$ is defined to be the largest non-crossing matching in $G$, the number of unmatched vertices in $\mathcal{M}_{ALG}$, $\ED_{ALG}$ is not larger than the ones in the matching we obtained in Step~\ref{item:second-step}. Hence, proof of \cref{eqn:step2} is complete.
\end{enumerate}

Combining \cref{eqn:step1,eqn:step2} implies the following approximation ratio.
\begin{eqnarray}
\ED_{ALG}\le (1+3\eps)(1+7\eps)\ED(S, S') \le (1+11\eps)\ED(S, S')
\end{eqnarray}
The last inequality holds for $\eps\le\frac{1}{21}$.
\end{proof}

\subsection{Proof of \cref{thm:near-linear-labeling}}
\begin{proof}
To prove this theorem, take $\eps'=\frac{\eps}{11}$. 
Further, take $\mathcal{C}$ as an insertion-deletion code from \cref{thm:InsDelListDecoding} with block length $N=c_0\cdot\frac{\log n \cdot \eps'^3}{(1-2\eps')\log(1/\eps')}$ and parameters $\delta_{\mathcal{C}}=\gamma_{\mathcal{C}}=1-\eps'$, $\eps_{\mathcal{C}}=\eps'$. (constant $c_0$ will be determined later)

According to \cref{thm:InsDelListDecoding}, $\mathcal{C}$ is $O_\eps(\exp(\exp(\exp(\log^*n))))$-list decodable from $(1-\eps')N$ insertions and $(1-\eps')N$ deletions, is over an alphabet of size $q_{\mathcal{C}}=\eps'^{-O(1/\eps'^{3})}=\exp\left(\frac{\log (1/\eps')}{\eps'^3}\right)$, and has rate $r_{\mathcal{C}}=1-2\eps'$.

Construct string $I$ according to the structure described in the beginning of \cref{sec:labeling} using $\mathcal{C}$ as the required list-decodable insertion-deletion code. Note that $|I|=N\cdot q_{\mathcal{C}}^{r_{\mathcal{C}}N} = N\cdot\exp\left(c_0\cdot O(\log n)\right)$. Choosing an appropriate constant $c_0$ that cancels out the constants hidden in $O$-notation that originate from hidden constants in the alphabet size will lead to $|I| = Nn = O(n\log n)$. Truncate the extra elements to have string $I$ of length $n$. As $\mathcal{C}$ is efficiently encodable, string $I$ can be constructed in near-linear time.

Further, define algorithm $\mathcal{\widetilde{\ED}}_I$ as follows. $\mathcal{\widetilde{\ED}}_I$ takes $S\times I$ and $S'$ and runs an instance of \cref{alg:EditDistanceApprox} with $S\times I$, $S'$, $N$, and the decoder of $\mathcal{C}$ as its input. \cref{thm:edit-distance-approximation-ratio} guarantees that $\mathcal{\widetilde{\ED}}_I(S\times I, S')$ generates a set of at most $(1+11\eps')\ED(S\times I, S') = (1+\eps)\ED(S\times I, S')$ insertions and deletions over $S\times I$ that converts it to $S'$. Finally, \cref{thm:edit-distance-approximation-time} guarantees that $\mathcal{\widetilde{\ED}}_I$ runs in 
\noSTOC{\begin{eqnarray*}
&&O\left(\frac{n}{N}\cdot T_{\Decode_{\mathcal{C}}}(N) + \frac{NL}{\eps}\cdot n\log\log n\right)\\ 
&=& O_\eps\left(\frac{n}{\log n}T_{\Decode_{\mathcal{C}}}(\log n)+n\log n \log\log n\exp(\exp(\exp(\log^*n)))\right)\\
&=&O_\eps(n\poly(\log n))
\end{eqnarray*}}
\STOConly{\begin{eqnarray*}
&&O\left(\frac{n}{N}\cdot T_{\Decode_{\mathcal{C}}}(N) + \frac{NL}{\eps}\cdot n\log\log n\right)\\ 
&=& O_\eps\left(\frac{nT_{\Decode_{\mathcal{C}}}(\log n)}{\log n}+n\log n \log\log n\exp(\exp(\exp(\log^*n)))\right)\\
&=&O_\eps(n\poly(\log n))
\end{eqnarray*}}
 time.
\end{proof}


\section{Enhanced Indexing Scheme}\label{sec:enhanced-labeling}
In \cref{sec:labeling}, we provided an indexing scheme, using which, one can essentially approximate the edit distance by a $(1+\eps)$ multiplicative factor for any $\eps > 0$. Note that if code $\mathcal{C}$ that was used in that construction has some constant rate $r=O_{\eps}(1)$, then $|S|=N\cdot |\Sigma_{\mathcal{C}}|^{Nr}$ and, therefore, 
\noSTOC{$N=\Theta_\eps\left(\frac{\log n}{r}\right)$. }
\STOConly{$N=\Theta_\eps\left(\log n / r\right)$. }
This makes the running time of~\cref{alg:EditDistanceApprox} from~\cref{thm:edit-distance-approximation-time} 
\noSTOC{$O\left(\frac{nr}{\log n}\cdot T_{\Decode_{\mathcal{C}}}(\log n) + \frac{\log n\cdot L}{\eps}\cdot n\log\log n\right)$.}
\STOConly{$O(nr / \log n\cdot T_{\Decode_{\mathcal{C}}}(\log n) \allowbreak+ \log n\cdot L/\eps\cdot n\log\log n)$.}
As described in the proof of \cref{thm:near-linear-labeling}, using the efficient list-decodable codes from \cref{cor:capacity_approaching_list_decodable_insdel}, one can obtain edit distance computations in $O_\eps(n\cdot\text{poly}(\log n) + n\log n\cdot\log\log n\cdot\allowbreak\exp\left(\exp(\exp(\log^* n))\right)) = O_\eps(n\cdot\text{poly}(\log n))$. 

In this section, we try to enhance this running time by reducing the poly-logarithmic terms. To this end, we break down the factors in our construction and edit distance computation that contribute to the poly-logarithmic terms in the decoding time complexity.
\begin{enumerate}
\item \textbf{Edges in graph $G$}: The number of edges in graph $G$ can be as high as 
\noSTOC{$\Theta\left(\frac{nNL}{\eps}\right)=\Theta(n\log n\cdot\text{poly}(\log\log n))$}\STOConly{$\Theta\left(nNL/\eps\right)=\Theta(n\log n\cdot\text{poly}(\log\log n))$}
 which, as discussed above, leads to an additive $n\log n\cdot\log\log n\cdot\exp\left(\exp(\exp(\log^* n))\right)$ component. In \cref{sec:two-layers}, we will show that this component can be reduced to $O(n\cdot\text{poly}(\log\log n))$ by having two layers of indices via indexing each codeword of $\mathcal{C}$ with an indexing scheme as described in \cref{sec:labeling} (constructed based on some code of block length $O(\log \log n)$).

\item \textbf{Decoding complexity of \noSTOC{code }$\mathcal{C}$ from~\cref{cor:capacity_approaching_list_decodable_insdel} ($T_{\Decode_{\mathcal{C}}}(\cdot)$)}:
As described in~\cref{sec:prelim-sync-strings}, list-decodable insdel codes from~\cref{thm:InsDelListDecoding} are obtained by indexing codewords of a list-recoverable code with a synchronization string and their decoding procedure consist of (1) calculating a constant number of longest common subsequence computations, and (2) running the decoder of the list-recoverable code.

Part (1) consumes quadratic time in terms of $N$. However, using the indexing schemes for approximating edit distance from \cref{thm:near-linear-labeling}, we will show in \cref{thm:coding_applications} that one can reduce the running time of part (1) to 
\noSTOC{$O\left(\frac{n}{\log n}\cdot \log n \cdot \text{poly}(\log\log n)\right) = O\left(n\cdot\text{poly}(\log\log n)\right)$.}\STOConly{$O\Big(\frac{n}{\log n}\cdot \log n \cdot \text{poly}(\log\log n)\Big) = O\left(n\cdot\text{poly}(\log\log n)\right)$.}

\end{enumerate}

Applying the above-metioned enhancements to the structure of our indexing scheme will result in the black-box construction of indexing schemes using list-recoverable codes as formalized in the following theorem. 

\begin{theorem}\label{thm:enhanced-labeling}
For any $\eps\in(0,1)$, given a family of codes over alphabet $\Sigma$ that are $\left(\frac{\eps}{46}, \frac{276}{\eps}, L(\cdot)\right)$-list recoverable in 
$T_{\Decode}(\cdot)$ time and achieve a rate of $r=O_\eps(1)$, one can construct an $\eps$-indexing scheme $(I, \mathcal{\widetilde{\ED}}_I)$ with any positive length $n$ over an alphabet of size $|\Sigma|^2\times \exp\left(\frac{\log (1/\eps)}{\eps^3}\right)$ where $\mathcal{\widetilde{\ED}}_I$ has 
\noSTOC{$$O_\eps\left(n\cdot\left[\frac{T_{\Decode}(\log n)}{\log n} + \frac{T_{\Decode}(\log \log n)}{\log\log n}+ \log^2\log n \cdot L(\log n)L(\log\log n) + \poly(\log\log n)	\right]\right)$$}\STOConly{$O_\eps\bigg(n\cdot\bigg[\frac{T_{\Decode}(\log n)}{\log n} + \frac{T_{\Decode}(\log \log n)}{\log\log n}+ \log^2\log n \cdot L(\log n)L(\log\log n) + \poly(\log\log n)	\bigg]\bigg)$}
running time complexity. Further, if the given family of codes are efficiently encodable, $I$ can be constructed in near-linear time.
\end{theorem}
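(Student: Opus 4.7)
The plan is to refine the construction from \cref{sec:labeling} with two nested layers of indexing. First I would build an outer list-decodable insertion-deletion code $\mathcal{C}$ of block length $N=\Theta_\eps(\log n)$ by indexing codewords of the given $(\eps/46,276/\eps,L)$-list-recoverable code with a synchronization string, following the construction underlying \cref{thm:InsDelListDecoding}. Next I would construct an inner indexing scheme $(J,\mathcal{\widetilde{\ED}}_J)$ of length $N$ using exactly the same recipe of \cref{sec:labeling} but with outer block length $N'=\Theta_\eps(\log\log n)$, so that it yields an $(\eps/c)$-indexing scheme for an appropriate absolute constant $c$ on strings of length $N$. The final indexing string $I$ is obtained by listing codewords of $\mathcal{C}$, replacing each codeword $c_i$ with the coordinate-wise concatenation $c_i\times J$, and concatenating these doubly-indexed blocks.

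The approximation algorithm mirrors \cref{alg:EditDistanceApprox} with two modifications. First, whenever the decoder of $\mathcal{C}$ is invoked on a block $S'(i)$, the quadratic LCS computations inside it are replaced by the near-linear approximations of \cref{thm:near-linear-labeling}, so each outer decoder call takes $O(T_{\Decode}(N)+N\cdot\poly(\log\log n))$ time instead of $\Omega(N^2)$. Second, rather than inserting edges between every pair of matching symbols of $S'(i)$ and each candidate outer block $S(k)$, I call $\mathcal{\widetilde{\ED}}_J$ on the pair $(S(k),S'(i))$ to obtain a matching within a $(1+\eps/c)$ factor of the optimum matching between those two length-$N$ blocks and add only those $O(N)$ edges to the outer graph $G$. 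Finally we compute the largest non-crossing matching in $G$ via Hunt--Szymanski exactly as in \cref{sec:labeling}.

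The approximation guarantee follows \cref{thm:edit-distance-approximation-ratio} almost verbatim, with one new loss: the matching picked inside each candidate pair is only $(1+\eps/c)$-optimal rather than exactly optimal, contributing a single extra $(1+\eps/c)$ factor, which is absorbed by choosing $c$ a large enough constant (this is where the $\eps/46$ list-recovery parameter is used). For running time, the four contributions are: (i) outer decoder calls cost $\frac{n}{N}\cdot(T_{\Decode}(N)+N\cdot\poly(\log\log n))$, giving the $T_{\Decode}(\log n)/\log n$ term; (ii) inner decoder calls invoked within $\mathcal{\widetilde{\ED}}_J$ contribute $\frac{n}{N'}\cdot T_{\Decode}(N')$, i.e.\ the $T_{\Decode}(\log\log n)/\log\log n$ term; (iii) for each of the $\frac{n}{N}\cdot L(N)\cdot w$ candidate pairs we run $\mathcal{\widetilde{\ED}}_J$ in $O_\eps(N\cdot L(N')\cdot\log^2\log n)$ time, producing the $\log^2\log n\cdot L(\log n)L(\log\log n)$ term; (iv) Hunt--Szymanski on the $O(n\cdot L(N)/\eps)$-edge graph $G$ runs in $O(n\cdot L(\log n)\log\log n)$ time, absorbed into the previous term.

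The hard part will be clean bookkeeping across the two layers so that the $(1+\eps)$ approximation and the alphabet size $|\Sigma|^2\times\exp(\log(1/\eps)/\eps^3)$ both emerge from black-box calls to the same list-recoverable code family. In particular, we must verify that re-indexing each codeword of $\mathcal{C}$ by $J$ does not degrade the list-decoding guarantees used by the outer decoder, and that the inner scheme of block length $\log\log n$ suffices to bound the per-block matching error when its inputs are prefixes of $\mathcal{C}$-codewords distorted by the insertions and deletions allotted to $\mathcal{\widetilde{\ED}}_J$ inside a single outer window.
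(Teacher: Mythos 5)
Your overall architecture matches the paper: convert the given list-recoverable code (at block lengths $\Theta(\log n)$ and $\Theta(\log\log n)$) into list-decodable insdel codes via synchronization-string plus inner-index indexing, nest two layers of this, and run a single global Hunt--Szymanski at the end. The paper does this by invoking \cref{thm:coding_applications} (with $\eps_{\mathrm{conv}}=\eps/46$, $\gamma_{\mathrm{conv}}=1-2\eps/46$) for both layers and then applying \cref{alg:EnhancedEditDistanceApprox}; your step of replacing the quadratic LCS inside the synchronization-string decoder with the near-linear approximation of \cref{thm:near-linear-labeling} is exactly what \cref{lem:enhanced-sync-string-decoding} provides, so that part is faithful to the paper.

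The genuine divergence, and the gap, is in how you build the outer graph $G$. The paper's \cref{alg:EnhancedEditDistanceApprox} uses the inner code only to \emph{list-decode} each sub-block $S'(i,i')$ and then adds \emph{all} pairs of identical symbols within the resulting windowed sub-block candidates; the global Hunt--Szymanski is then free to choose any non-crossing subset of these. You instead run the full inner approximation $\mathcal{\widetilde{\ED}}_J(S(k),S'(i))$ and insert only the $O(N)$ edges of a single near-optimal matching. This pre-commits to one inner matching per candidate pair, and the approximation argument in \cref{thm:edit-distance-approximation-ratio}/\cref{thm:enhanced-edit-distance-approximation-ratio} does \emph{not} go through ``almost verbatim'': those proofs work by exhibiting a pruned window-limited matching $\mathcal{M}_W$ and showing it is a \emph{subgraph} of $G$. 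The restriction of $\mathcal{M}_W$ to a pair $(S(k),S'(i))$ is generally a small partial matching covering only the interval of $S'(i)$ that the globally optimal alignment assigns to $S(k)$; there is no reason for those specific edges to be contained in the particular near-optimal full-block matching $\mathcal{M}'_{k,i}$ returned by $\mathcal{\widetilde{\ED}}_J$ (which may instead match a different region of $S'(i)$ against $S(k)$). The issue is not merely an extra $(1+\eps/c)$ multiplicative loss, as you claim, but a compatibility problem: $\mathcal{M}_W$ may simply not be contained in your $G$, and you would need a new argument that your $G$ still contains \emph{some} non-crossing matching of comparable size. To close this cleanly you should stop before the inner Hunt--Szymanski step and add the entire inner graph $G_{k,i}$ (all identical symbol pairs within the inner list-decoded windows) to the outer $G$, which is what the paper does. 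Finally, a minor accounting point: as written you call $\mathcal{\widetilde{\ED}}_J$ once per candidate pair $(S(k),S'(i))$, which re-runs $\Decode_{\mathcal{C}_2}$ on each sub-block $S'(i,i')$ about $L(\log n)/\eps$ times; you need to memoize these inner decodings per sub-block (as the paper implicitly does in \cref{thm:enhanced-edit-distance-approximation-time}) to actually get the stated $\frac{T_{\Decode}(\log\log n)}{\log\log n}$ term without a spurious $L(\log n)$ factor.
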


These enhancements do not eventually yield an indexing scheme that works in $O(n\cdot\text{poly}(\log \log n))$ as the bottleneck of the indexing scheme's time complexity is the decoding time of the utilized list-recoverable code. 

As of the time of writing this paper, no deterministic list recoverable code with our ideal properties and a decoding time complexity faster than an unspecified large polynomial is found. However, because of the enhancements discussed in this section, improvements in decoding time complexity of list-recoverable codes can lead to $\eps$-indexing schemes that run in $O(n\cdot\text{poly}(\log\log n))$ time. Particularly, having a linear-time $\left(\eps,1/\eps, L(n)=\poly(\log n)\right)$-list recoverable code would suffice.

\subsection{Two Layer Indexing}\label{sec:two-layers}
Our enhanced indexing sequence $I$ consists of the coordinate-wise concatenation of two string $I_1$ and $I_2$ where $I_1$ is the ordinary indexing sequence as described in \cref{sec:labeling}, i.e, the codewords of a code $\mathcal{C}_1$ with block length $N_1$, and $I_2$ is repetitions of an ordinary indexing sequence $I'$ of length $N_1$ constructed using some code $\mathcal{C}_2$. (See \cref{fig:enhanced-construction})

\noSTOC{\begin{figure} 
    \begin{subfigure}[b]{\linewidth}
	\centering
     \includegraphics[width=120mm]{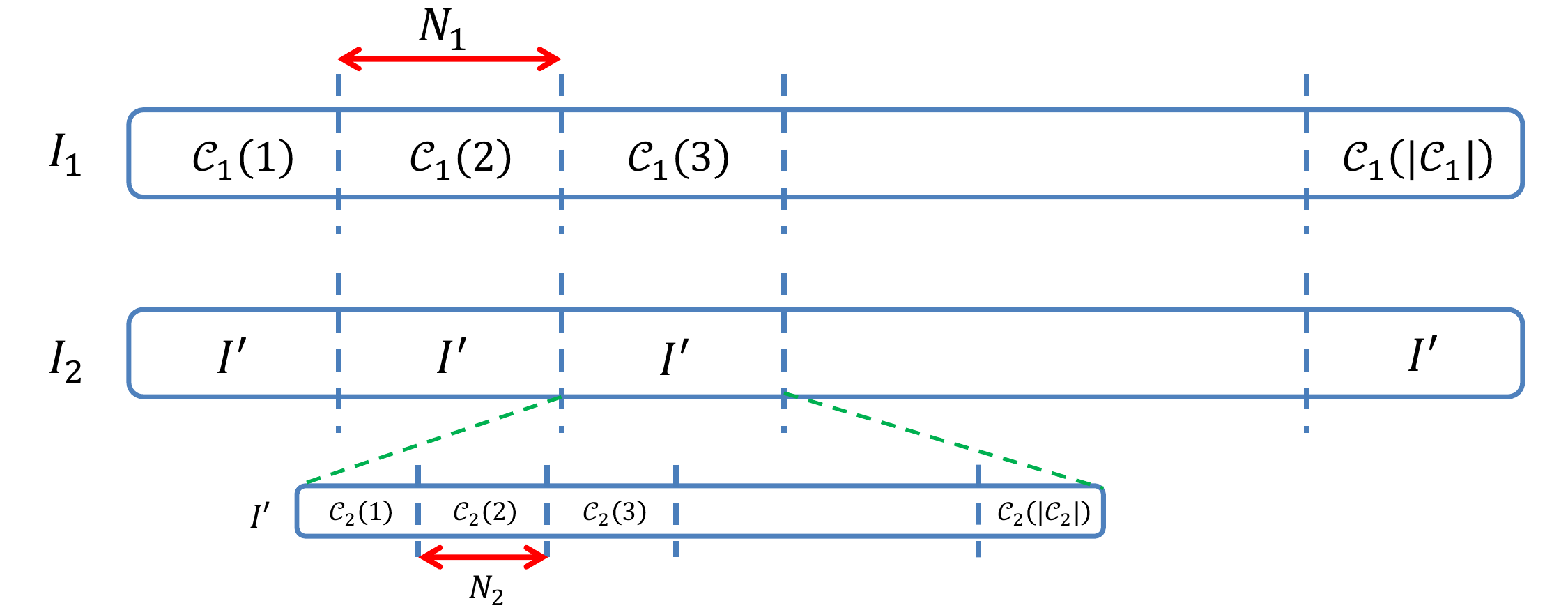}
     \caption{Construction of enhanced indexing string.}
     \label{fig:enhanced-construction}
    \end{subfigure} %

    \begin{subfigure}[b]{\linewidth}    
	\centering
     \includegraphics[width=120mm]{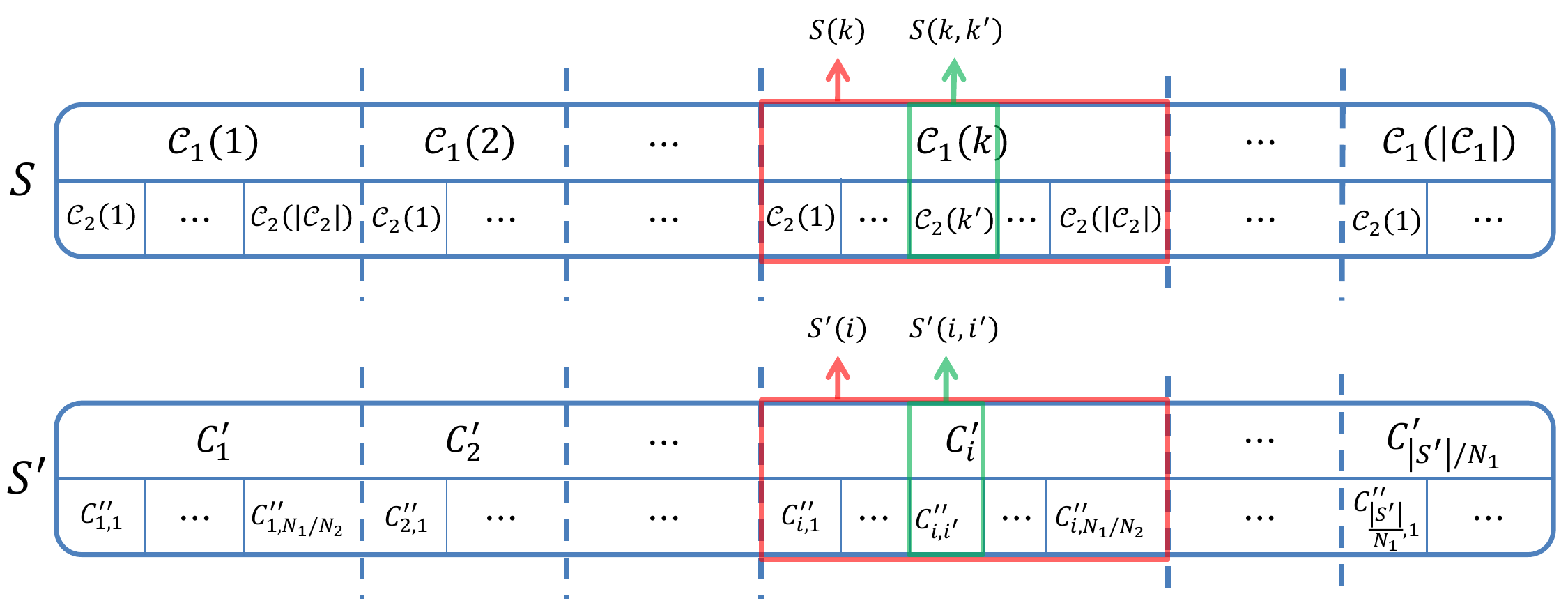}
	 \caption{Decoding for enhanced construction.}
      \label{fig:enhanced-decoding}
    \end{subfigure} 
\caption{}
\end{figure}}
\STOConly{\begin{figure} 
    \begin{subfigure}[b]{\linewidth}
	\centering
     \includegraphics[width=\linewidth]{enhanced-construction.pdf}
     \caption{Construction of enhanced indexing string.}
     \label{fig:enhanced-construction}
    \end{subfigure} %

    \begin{subfigure}[b]{\linewidth}    
	\centering
     \includegraphics[width=\linewidth]{enhanced-decoding.pdf}
	 \caption{Decoding for enhanced construction.}
      \label{fig:enhanced-decoding}
    \end{subfigure} 
\caption{}
\end{figure}}

In other words, let $\mathcal{C}_2$ be a code of block length $N_2$ and rate $r_2$ over alphabet $\Sigma_{\mathcal{C}_2}$ that is $L_2$-list decodable from $N_2(1-\eps)$ insertions and $N_2(1-\eps)$ deletions. Writing the codewords of $\mathcal{C}_2$ back to back would give the string $I'$ of length $|I'|=N_2\cdot |\Sigma_{\mathcal{C}_2}|^{N_2r_2}$. Then, let code $\mathcal{C}_1$ be a code of block length $N_1 = |I'|$ and rate $r_1$ over alphabet $\Sigma_{\mathcal{C}_1}$ that is $L_1$-list decodable from $N_1(1-\eps)$ insertions and $N_1(1-\eps)$ deletions. We form string $I_1$ by writing the codewords of $\mathcal{C}_1$ one after another and string $I_2$ by repeating $I'$ for $|\mathcal{C}_1|$ times. Finally, $I=(I_1, I_2)$.

We provide a decoding algorithm for indexing sequence $I$ that is very similar to \cref{alg:EditDistanceApprox} with an extra step in the construction of bipartite graph $G$ that reduces the number of edges at the cost of a weaker yet still constant approximation guarantee.

In Line \ref{step:adding-edges-to-G} of \cref{alg:EditDistanceApprox}, instead of adding an edge between any two pair of identical symbols in $S(k)$ and $S'(i)$ (that can be as many as $\log^2 n$), the algorithm runs another level of list-decoding and window-limiting based on the copy of $I'$ that is a component of $S(k)$. In other word, the algorithm uses the decoder of $\mathcal{C}_2$ for any sub-block of length $N_2$ in $S'(i)$, like $S'(i,i')$, to find up to $L_2$ sub-blocks of length $N_2$ in $S(k)$, like $S(k, k')$, and adds an edge between any two identical symbols between $S'(i,i')$ and $S(k,k')$. We denote the portion of $S'(i,i')$ that corresponds to $\mathcal{C}_2$ codewords by $C''_{i,i'}$. (See \cref{fig:enhanced-decoding}) A formal description is available in \cref{alg:EnhancedEditDistanceApprox}.

\begin{algorithm}
\caption{$(1+23\eps)$-Approximation for Edit Distance}\label{alg:EnhancedEditDistanceApprox}
\begin{algorithmic}[1]
\noSTOC{\Procedure{Enhanced-ED-Approx}{$S, S', N_1, N_2, \Decode_{\mathcal{C}_1}(\cdot), \Decode_{\mathcal{C}_2}(\cdot)$}}
\STOConly{\Procedure{Enhanced-ED-Approx}{$S, S', \{N_i, \Decode_{\mathcal{C}_i}(\cdot)\}_{i=1}^2$}}

\State Make empty bipartite graph $G$ with parts of size $|S|$ and $|S'|$
\State $w=\frac{1}{\eps}$

\For{{\bf each} $S'(i)=C'_{i}\times \left[C''_{i,1}, C''_{i,2}, \cdots, C''_{i,{N_1/N_2}}\right] \times m'[N_1(i-1), N_1i-1]$}
\State {$List_1 \leftarrow \Decode_{\mathcal{C}_1}(C'_i)$}
\For{{\bf each} $j \in List_1$}
\For{$k\in\left[j-w, j+w\right]$}
\For {$i' \in [1, N_1/N_2]$}
\State {$List_2 \leftarrow \Decode_{\mathcal{C}_2}(C''_{i, i'})$}
\For{{\bf each} $j' \in List_2$}
\For{$k'\in\left[j'-w, j'+w\right]$}
\State {Connect any pair of vertices in $G$ that correspond to identical symbols in $S(k,k')$ and $S'(i,i')$.}
\EndFor
\EndFor
\EndFor
\EndFor
\EndFor
\EndFor

\State $\mathcal{M}_{ALG} \leftarrow$ Largest non-crossing matching in $G$ \noSTOC{(Using~\cref{thm:max_non_crossing_algorithm})}

\State \textbf{return} $\mathcal{M}_{ALG}$
\EndProcedure
\end{algorithmic}
\end{algorithm}

\begin{theorem}\label{thm:enhanced-edit-distance-approximation-time}
\noSTOC{\cref{alg:EnhancedEditDistanceApprox} runs in  $O\left(\frac{n}{N_1}\cdot T_{\Decode_{\mathcal{C}_1}}(N_1) + \frac{n}{N_2}\cdot T_{\Decode_{\mathcal{C}_2}}(N_2) + \frac{N_2L_1L_2}{\eps^2}\cdot n\log\log n\right)$ time for $n=\max(|S|, |S'|)$.}
\STOConly{For $n=\max\left(|S|, |S'|\right)$, \cref{alg:EnhancedEditDistanceApprox} runs in $O\Big(\frac{n}{N_1}\cdot T_{\Decode_{\mathcal{C}_1}}(N_1) + \frac{n}{N_2}\cdot T_{\Decode_{\mathcal{C}_2}}(N_2) + \frac{N_2L_1L_2}{\eps^2}\cdot n\log\log n\Big)$ time.}
\end{theorem}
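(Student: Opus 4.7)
The plan is to mirror the proof of \cref{thm:edit-distance-approximation-time} for the single-layer algorithm and decompose the running time of \cref{alg:EnhancedEditDistanceApprox} into three disjoint contributions: (i) calls to $\Decode_{\mathcal{C}_1}$, (ii) calls to $\Decode_{\mathcal{C}_2}$, and (iii) the assembly of the graph $G$ together with the invocation of Hunt--Szymanski (\cref{thm:max_non_crossing_algorithm}) that extracts its largest non-crossing matching.

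For (i), the outermost for-loop iterates over the $O(n/N_1)$ blocks $S'(i)$ of length $N_1$, and each iteration invokes $\Decode_{\mathcal{C}_1}$ exactly once on $C'_i$; this yields the first term $O\bigl(\tfrac{n}{N_1}\cdot T_{\Decode_{\mathcal{C}_1}}(N_1)\bigr)$. For (ii), although the pseudocode literally nests the call $\Decode_{\mathcal{C}_2}(C''_{i,i'})$ inside the $j$- and $k$-loops, the returned list depends only on the pair $(i,i')$; a straightforward memoization (or equivalently, hoisting this assignment out of the inner loops in the analysis) therefore reduces the total number of $\mathcal{C}_2$-decoder invocations to $\tfrac{n}{N_1}\cdot \tfrac{N_1}{N_2}=\tfrac{n}{N_2}$, which supplies the second term $O\bigl(\tfrac{n}{N_2}\cdot T_{\Decode_{\mathcal{C}_2}}(N_2)\bigr)$.

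For (iii), I plan to bound $|E(G)|$ vertex-by-vertex on the $S'$ side, in the same style as the edge count in the proof of \cref{thm:edit-distance-approximation-time}. Fix a vertex $v$ lying in the sub-block $S'(i,i')$. The candidate sub-blocks of $S$ to which $v$ may be linked are parametrized by $(j,k,j',k')$ with $j\in List_1$, $k\in[j-w,j+w]$, $j'\in List_2$, and $k'\in[j'-w,j'+w]$; since $w=1/\eps$, $|List_1|\le L_1$, and $|List_2|\le L_2$, this yields at most $O(L_1L_2/\eps^2)$ candidates. Each candidate sub-block $S(k,k')$ has length $N_2$, so $v$ can acquire at most $N_2$ new neighbors from it. Summing over all $n$ vertices on the $S'$ side gives $|E(G)|=O(nN_2L_1L_2/\eps^2)$, and \cref{thm:max_non_crossing_algorithm} then finishes in time $O((n+|E(G)|)\log\log n)=O\bigl(\tfrac{N_2L_1L_2}{\eps^{2}}\cdot n\log\log n\bigr)$, matching the third term.

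The only non-routine step is the hoisting observation in (ii): taking the pseudocode at face value would charge an extra $O(L_1/\eps)$ factor of $\Decode_{\mathcal{C}_2}$ calls and violate the claimed bound, so the proof must explicitly record that each $(i,i')$ pair triggers only one such call. Once this subtlety is dispatched, summing the three contributions immediately yields the theorem.
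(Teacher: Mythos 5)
Your proof is correct and follows essentially the same decomposition as the paper: count decoder calls for $\mathcal{C}_1$ and $\mathcal{C}_2$ separately, bound $|E(G)|$ by $O(nN_2L_1L_2/\eps^2)$, and invoke \cref{thm:max_non_crossing_algorithm}. The one place where you go beyond the paper's (very terse) argument is the hoisting/memoization observation in part (ii): you are right that, read literally, the pseudocode nests $\Decode_{\mathcal{C}_2}(C''_{i,i'})$ inside the $j$- and $k$-loops, which would inflate the number of $\mathcal{C}_2$-decoder calls by an extra $O(L_1/\eps)$ factor; the paper simply asserts ``$\frac{n}{N_2}$ times'' without remarking that the list depends only on $(i,i')$, so your explicit note is a genuine (if small) tightening of the argument rather than a different route.
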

\begin{proof}
The algorithm uses the decoder of $\mathcal{C}_1$, $\frac{n}{N_1}$ times and the decoder of $\mathcal{C}_2$, $\frac{n}{N_2}$ times.
$G$ can have up to $\frac{n}{N}\cdot \frac{L_1}{\eps} \cdot \frac{N_1}{N_2} \cdot \frac{L_2}{\eps}\cdot N_2^2 = \frac{N_2 L_1 L_2}{\eps^2}\cdot n$ edges. Therefore, the use of Hunt and Szymanski's~\cite{hunt1977fast} algorithm (\cref{thm:max_non_crossing_algorithm}) will take $O\left(N_2L_1L_2/\eps^2\cdot n\log\log n\right)$ time. Therefore, the time complexity is as claimed.
\end{proof}

\begin{theorem}\label{thm:enhanced-edit-distance-approximation-ratio}
For $0<\eps<\frac{1}{121}$, \cref{alg:EnhancedEditDistanceApprox} computes a set of up to $(1+23\eps)\cdot\ED(S, S')$ insertions and deletions that turn $S$ into $S'$.
\end{theorem}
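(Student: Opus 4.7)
The plan is to mirror the two-step argument of Theorem 3.3 but apply it at two scales: once at the outer $N_1$-block level, and once at the inner $N_2$-sub-block level. Starting from the matching $\mathcal{M}_{LCS}$ realizing the longest common subsequence of $S$ and $S'$, I would iteratively prune edges in four stages and track, at each stage, the multiplicative blow-up in the number of unmatched vertices, ultimately showing that the pruned matching is a sub-matching of the graph $G$ constructed by Algorithm~\ref{alg:EnhancedEditDistanceApprox} and thus a lower bound on $|\mathcal{M}_{ALG}|$'s size.

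First, at the outer scale, I would essentially replay the argument from the proof of Theorem~3.3 verbatim. (a) Delete every edge incident to a block $S'(i)$ whose projection under $\mathcal{M}_{LCS}$ spans more than $w = 1/\eps + 1$ consecutive $N_1$-blocks of $S$. Amortizing the $\le 2N_1$ newly unmatched vertices against the $\ge (w-2)N_1$ unmatched vertices originally inside such a wide projection yields a blow-up of at most $\frac{w}{w-2} \le 1+3\eps$. (b) From the resulting outer-window-limited matching, remove all edges incident to any $S'(i)$ that has fewer than $\eps N_1$ edges to every single $S$-block; the same budget argument as in Theorem~3.3 caps this by $1+7\eps$. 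At this point each surviving $S'(i)$ has some $S(j)$ to which it is connected by more than $\eps N_1$ edges, and that $S(j)$ therefore lies in $\Decode_{\mathcal{C}_1}(C'_i)$ (up to being within $w$ blocks of it, which is fine since the algorithm also includes window neighbors).

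Next, at the inner scale, I would apply the same two operations sub-block-by-sub-block inside each surviving outer window. (c) Delete every edge incident to a sub-block $S'(i,i')$ whose projection inside $S$ spans more than $w$ consecutive $N_2$-sub-blocks, paying a further $1+3\eps$ by the same charging argument (the amortization budget is disjoint from step (a), since it lives among unmatched positions \emph{inside} the outer projection, which were not charged in (a)). (d) Then remove edges from any sub-block $S'(i,i')$ matching fewer than $\eps N_2$ positions to every individual $S$-sub-block, for another $1+7\eps$. After these four operations, each surviving $S'(i,i')$ has some $S(k,k')$ to which it matches more than $\eps N_2$ positions, with $S(k,k')$ inside the outer window of the $S(j)$ returned by the outer decoder and within the inner window of the sub-block the inner decoder returns on $C''_{i,i'}$. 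Hence every surviving edge is present in the graph $G$ built by Algorithm~\ref{alg:EnhancedEditDistanceApprox}, so its number of unmatched vertices upper-bounds $\ED_{ALG}$.

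Combining the four factors gives $\ED_{ALG} \le (1+3\eps)^2(1+7\eps)^2 \cdot \ED(S,S') \le (1+11\eps)^2 \cdot \ED(S,S') \le (1+23\eps)\cdot\ED(S,S')$, where the last inequality uses $(1+11\eps)^2 = 1 + 22\eps + 121\eps^2 \le 1+23\eps$ for $\eps \le 1/121$, and the middle inequality reuses the product bound $(1+3\eps)(1+7\eps) \le 1+11\eps$ valid for $\eps \le 1/21$. The main subtlety I expect is in step (c): I need to verify that the amortization budget for wide \emph{inner} projections does not collide with the budget already consumed in step (a), which is true because inner-window amortization charges vertices that lie \emph{inside} a single outer projection—these are vertices that step (a) left untouched. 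With that bookkeeping clarified, the rest is a direct calculation parallel to Theorem~3.3.
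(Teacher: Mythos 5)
Your proposal is correct and takes essentially the same route as the paper: both arguments apply the budget-amortization reasoning of \cref{thm:edit-distance-approximation-ratio} once at the outer $N_1$-block scale and once at the inner $N_2$-sub-block scale, yielding $\ED_{ALG}\le(1+11\eps)^2\ED(S,S')\le(1+23\eps)\ED(S,S')$ for $\eps<1/121$. The paper compresses the inner-scale argument into a single invocation of ``similar budget-based arguments,'' whereas you unfold it into explicit stages (c) and (d) and verify the disjointness of the amortization budgets---a helpful elaboration, though strictly speaking disjointness across stages is not required since the blow-up factors compound multiplicatively, each relative to the matching produced by the preceding stage.
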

\begin{proof}
In the proof of~\cref{thm:edit-distance-approximation-ratio}, we proved that for the graph $G$ in \cref{alg:EditDistanceApprox},
$\ED_{ALG}\le (1+11\eps)\ED(S, S')$. In other words, the number of unmatched vertices in the largest non-crossing matching in that graph is at most $(1+11\eps)$ times the number of unmatched vertices in the bipartite graph that corresponds to the longest common subsequence between $S$ and $S'$.

As graph $G$ in \cref{alg:EnhancedEditDistanceApprox} is the same as the one in \cref{alg:EditDistanceApprox} with some extra edges removed, we only need to show that removing the extra edges does not  increase the number of non-matched vertices in the largest non-crossing matching by more than a $(1+O(\eps))$ multiplicative factor.
This can be directly concluded from \cref{thm:edit-distance-approximation-ratio} since the extra removed edges are eliminated by doing the same procedure between pairs of codewords of $\mathcal{C}_2$  that is done between the strings in the statement of \cref{thm:edit-distance-approximation-ratio}. In fact, using similar budget-based arguments as in \cref{eqn:step1,eqn:step2}, the extra edge removal step will only increase the edit distance by a $(1+11\eps)$ factor.
\noSTOC{This leads to the following upper bound on the approximation ratio of \cref{alg:EnhancedEditDistanceApprox} that holds for $\eps<\frac{1}{121}$.
$$(1+11\eps)(1+11\eps)\ED(S, S')\le (1+23\eps)\ED(S, S')$$}
\STOConly{This leads to an upper bound of 
$(1+11\eps)(1+11\eps)\ED(S, S')\le (1+23\eps)\ED(S, S')$
on the approximation ratio of \cref{alg:EnhancedEditDistanceApprox} for $\eps<\frac{1}{121}$.}
\end{proof}

\subsection{Proof of \cref{thm:enhanced-labeling}}
\begin{proof}
Let $\eps'=\eps/46$. Thus, the given family of codes is \STOConly{$\left(\eps', 6/\eps', L(\cdot)\right)$}\noSTOC{$\left(\eps', \frac{6}{\eps'}, L(\cdot)\right)$}-list recoverable.

Take the code $\mathcal{C}_1$ as a code with block length $N_1$ from the given family of codes where $N_1$ is large enough so that $N_1\cdot|\Sigma|^{r/2\cdot N_1} \geq n$. Similarly, take $\mathcal{C}_2$ with block length $N_2$ so that $N_2\cdot|\Sigma|^{r/2\cdot N_2} \geq N_1$. For a large enough $n$, rates of $\mathcal{C}_1$ and $\mathcal{C}_2$ are at least $r/2$. We reduce the rates of $\mathcal{C}_1$ and $\mathcal{C}_2$ to $r/2$ by arbitrarily removing codewords from them.

We now use \cref{thm:coding_applications} with parameters 
$\eps_{\text{conv}}=\eps'$ and $\gamma_{\text{conv}}=1-2\eps'$
to convert list-recoverable codes $\mathcal{C}_1$ and $\mathcal{C}_2$  to list-decodable insertion-deletion codes $\tilde{\mathcal{C}_1}$ and $\tilde{\mathcal{C}_2}$ by indexing their codewords with appropriately chosen indexing sequences from \cref{thm:near-linear-labeling} and synchronization strings. Note that we can do this conversion using \cref{thm:coding_applications} since $\gamma_{\text{conv}} = 1-2\eps' \le \frac{l_{\mathcal{C}_i} \cdot \eps_{\text{conv}}}{3}-1
= \frac{6/\eps' \cdot \eps'}{3}-1=1$. Also, $\tilde{\mathcal{C}_i}$ can $L(N_i)$-list decode from any $\gamma_{\text{conv}}=1-2\eps'$ fraction of insertions and any $1-\alpha_{\mathcal{C}_i}-\eps_{\text{conv}} = 1-\frac{\eps}{46}-\eps' = 1-2\eps'$ fraction of deletions in $T_{\Decode}(N_i)+O\left(N_i\poly(\log N_i)\right)$.

Also, it is known how to construct $\eps_s$-synchronization strings and $\eps_I$-indexing schemes needed in \cref{thm:coding_applications}. $\eps_s$-synchronization strings can be constructed in linear time in terms of their length over an alphabet of size $\eps_s^{-O(1)}$ and $\eps_I$-indexing sequences from \cref{thm:near-linear-labeling} can be constructed in near-linear time over an alphabet of size \noSTOC{$\exp\left(\frac{\log (1/\eps_I)}{\eps_I^3}\right)$}\STOConly{$\exp\left(\log (1/\eps_I) / \eps_I^3\right)$}. Therefore, the alphabets of $\tilde{\mathcal{C}_1}$ and $\tilde{\mathcal{C}_2}$ will be of size \noSTOC{$|\Sigma|\times\exp\left(\frac{\log (1/\eps')}{\eps'^3}\right)$}\STOConly{$|\Sigma|\times\exp\left(\log (1/\eps') / \eps'^3\right)$}. 

We now use codes $\tilde{\mathcal{C}_1}$ and $\tilde{\mathcal{C}_2}$ in the structure described in the beginning of \cref{sec:two-layers} to obtain an indexing sequence $I$ of length $n$. Since the conversion of each codeword of $\mathcal{C}_i$ to $\tilde{\mathcal{C}_i}$ consumes near-linear time in  terms of $N_i$, if the codes $\mathcal{C}_i$ are efficiently encodable, string $I$ can be constructed in near-linear time. Also, the above-mentioned discussion on alphabet sizes of $\tilde{\mathcal{C}_i}$ entails that $I$ will be a string over an alphabet of size 
\noSTOC{$|\Sigma|^2\times\exp\left(\frac{\log (1/\eps')}{\eps'^3}\right)$.}
\STOConly{$|\Sigma|^2\times\exp\left(\log (1/\eps') / \eps'^3\right)$.}

We now have to provide an algorithm that produces a $(1+\eps)$-approximation for the edit distance using $I$. In the same spirit as the algorithm provided in the proof of \cref{thm:enhanced-labeling}, we define algorithm $\mathcal{\widetilde{\ED}}_I$ as an algorithm that takes $S\times I$ and $S'$ and runs an instance of \cref{alg:EnhancedEditDistanceApprox} with $S\times l$, $S'$, $N_1$, $N_2$, and decoders of $\tilde{\mathcal{C}_i}$ as its input. 

As codes $\tilde{\mathcal{C}_i}$ list decode from $1-2\eps'$ fraction of insertions and deletions, \cref{thm:enhanced-edit-distance-approximation-ratio} guarantees that $\mathcal{\widetilde{\ED}}_I$ generates a set of at most $(1+23\cdot2(\eps'))\ED(S\times I, S') = (1+\eps)\ED(S\times I, S')$ insertions and deletions over $S\times I$ that converts it to $S'$. 

Finally, since $N_1 = O(\log n)$, $N_2=O(\log\log n)$ and $\tilde{\mathcal{C}_i}$ list decode in $T_{\Decode}(N_i)+O\left(N_i\poly(\log N_i)\right)$ time, \cref{thm:enhanced-edit-distance-approximation-time} guarantees that $\mathcal{\widetilde{\ED}}_I$ runs in 
\noSTOC{\begin{eqnarray*}
&&O\left(\frac{n}{N_1}\cdot T_{\Decode_{\tilde{\mathcal{C}_1}}}(N_1) + \frac{n}{N_2}\cdot T_{\Decode_{\tilde{\mathcal{C}_2}}}(N_2) + \frac{N_2L_1L_2}{\eps^2}\cdot n\log\log n\right)\\ 
&=& O_\eps\bigg(\frac{n}{\log n}\cdot\left[T_{\Decode}(\log n)+\log n\poly(\log \log n)\right]+\\
&&\qquad\frac{n}{\log\log n}\cdot\left[T_{\Decode}(\log \log n)+\log \log n\poly(\log \log \log n)\right]+\\
&&\qquad n\log^2\log n \cdot L(\log n) L(\log \log n)
\bigg)\\
&=&O_\eps\left(n\cdot\left[\frac{T_{\Decode}(\log n)}{\log n} + \frac{T_{\Decode}(\log \log n)}{\log\log n}+ \log^2\log n \cdot L(\log n)L(\log\log n) + \poly(\log\log n)	\right]\right)
\end{eqnarray*}
 time.}
\STOConly{
$$O\left(\frac{n}{N_1}\cdot T_{\Decode_{\tilde{\mathcal{C}_1}}}(N_1) + \frac{n}{N_2}\cdot T_{\Decode_{\tilde{\mathcal{C}_2}}}(N_2) + \frac{N_2L_1L_2}{\eps^2}\cdot n\log\log n\right)$$
 time. This gives the running time in the theorem statement.}
\end{proof}

\section{Randomized Indexing}\label{sec:randomized-labeling}
In this section, we will prove the following theorem by taking similar steps as in the proof of \cref{thm:enhanced-labeling} to construct an indexing scheme according to the structure introduced in \cref{sec:two-layers}.

\begin{theorem}\label{thm:randomized-labeling}
For any $\eps_0>0$, $\eps_1,\eps_2\in(0,1)$, and integer $n$, there exists a randomized indexing scheme $(I, \mathcal{\widetilde{\ED}}_I)$ of length $n$ where $\mathcal{\widetilde{\ED}}_I(S\times I, S')$ runs in $O(n\log^{\eps_0} n)$ time and proposes a set of insertions and deletions that turns $S\times I$ into $S'$ and contains up to $(1+\eps_1)\ED(S\times I, S') + \eps_2 |S'|$ operations with probability $1-\frac{1}{n^{O(1)}}$.
\end{theorem}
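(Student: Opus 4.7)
The plan is to follow the two-layer construction template from Theorem~\ref{thm:enhanced-labeling} essentially verbatim, but instantiated with the randomized list-recoverable codes of Hemenway, Ron-Zewi, and Wootters~\cite{hemenway2017local}. These codes supply a family with the $(\eps/46, 276/\eps, L(\cdot))$-list-recoverability required by Theorem~\ref{thm:enhanced-labeling}, achieve rate $O_\eps(1)$, and admit a randomized decoder running in $O(N^{1+\eps_0'})$ time for any $\eps_0' > 0$ we wish, albeit only with some constant success probability $p_0$ per invocation. First I would pick inner and outer codes of block lengths $N_2 = \Theta(\log\log n)$ and $N_1 = \Theta(\log n)$ from this family, convert each to an insertion-deletion list-decodable code $\tilde{\mathcal C}_i$ via Theorem~\ref{thm:coding_applications} (indexing with an $\eps$-synchronization string and an indexing scheme from Theorem~\ref{thm:near-linear-labeling}), and then stack them exactly as in Section~\ref{sec:two-layers} to produce the indexing string $I$. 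Because the underlying codes and the Theorem~\ref{thm:coding_applications} conversion are efficiently encodable in near-linear time, $I$ itself can be constructed in near-linear time.

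The decoder is the obvious randomized adaptation of \cref{alg:EnhancedEditDistanceApprox}: each invocation of $\Decode_{\tilde{\mathcal C}_i}$ is replaced by $O(\log(1/\eps_2))$ independent repetitions of the Hemenway--Ron-Zewi--Wootters decoder, boosting the per-block failure probability below $\eps_2/4$. Whenever the decoder for a block $S'(i)$ fails, the algorithm simply adds no edges to $G$ from that block; this costs at most $2N_1$ extra unmatched vertices (equivalently, extra edit operations) for that block. Conditional on success on \emph{all} blocks intersecting the projection of a given block of $\mathcal{M}_W$, the analysis of Theorem~\ref{thm:enhanced-edit-distance-approximation-ratio} goes through unchanged and delivers the multiplicative $(1+\eps_1)$ guarantee.

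For the additive slack, I would apply a Chernoff bound to the $|S'|/N_1$ independent block-level trials, each with failure probability at most $\eps_2/4$: with probability $1 - \exp(-\Omega(\eps_2 |S'|/\log n)) = 1 - 1/n^{O(1)}$, the number of failed blocks is at most $(\eps_2/2) \cdot |S'|/N_1$, which contributes a total of at most $\eps_2 |S'|$ extra unmatched vertices — exactly the additive term in the theorem statement. Combining this with the conditional multiplicative bound yields the promised $(1+\eps_1)\ED(S\times I, S') + \eps_2 |S'|$ overall.

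For the running time, plugging $T_{\Decode}(N) = O(N^{1+\eps_0/2})$ (choose $\eps_0' = \eps_0/2$ so the constant amplification factor is absorbed) into Theorem~\ref{thm:enhanced-edit-distance-approximation-time} yields the dominant term $(n/N_1) \cdot N_1^{1+\eps_0/2} = O(n \log^{\eps_0/2} n)$, while the $N_2$-block term contributes only $O(n (\log\log n)^{\eps_0/2})$ and the non-crossing-matching term is $O(n \log^2\log n \cdot L(\log n) L(\log\log n))$, which is $o(n \log^{\eps_0} n)$ since the \cite{hemenway2017local} list sizes satisfy $L(\log n) = \exp(o(\log n))$. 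The main obstacle I anticipate is the joint calibration: choosing $N_1$, the amplification count, and $\eps_0'$ together so that (i) the per-block failure probability is small enough for the Chernoff step to yield the $\eps_2 |S'|$ additive bound with polynomially small failure probability, and (ii) the cumulative overhead from amplification and list sizes remains within the $O(n \log^{\eps_0} n)$ budget — a balancing act that is straightforward but requires care with the constants hidden in $O_{\eps_1,\eps_2}(\cdot)$.
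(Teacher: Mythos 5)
Your overall blueprint matches the paper's: boost the HRW list-recovery decoder via repetition (the paper's Lemma~\ref{lem:probability-enhancement} with $k=\log_3(2/\eps_2)$), convert to an insdel list-decodable code via Theorem~\ref{thm:coding_applications}, stack into a two-layer indexing string as in Section~\ref{sec:two-layers}, and account for residual decoder failures with a Chernoff bound over the $|S'|/N_1$ outer blocks (the paper isolates exactly this step as Lemma~\ref{lem:randomized-decoder-approximation-guarantee}, which caps the extra edit operations at $2p|S'|$ with failure probability $e^{-\Omega(|S'|p/N_1)}$). The running-time accounting is also the same.

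The one genuine gap is your choice to use HRW codes for \emph{both} layers. The paper deliberately takes $\tilde{\mathcal C}_2$ directly from Corollary~\ref{cor:capacity_approaching_list_decodable_insdel}, whose decoder is \emph{deterministic} polynomial time, and $\poly(N_2)=\poly(\log\log n)$ is comfortably within budget. By instantiating $\mathcal C_2$ with a randomized HRW decoder instead, you introduce failure events at the $N_2$-block level --- $\Theta(n/\log\log n)$ of them, interior to Line~\ref{step:adding-edges-to-G} of Algorithm~\ref{alg:EnhancedEditDistanceApprox} --- yet your Chernoff bound is applied only to the $|S'|/N_1$ outer trials. Conditioning ``on success on all blocks intersecting the projection of a given block of $\mathcal{M}_W$'' does not resolve this: the inner decoder is invoked once per $(j,i')$ pair, so the event you need is the conjunction over roughly $N_1/N_2$ sub-blocks per outer block, each carrying its own constant failure probability before amplification, and you have not budgeted for the $O(N_2)$ extra unmatched vertices per inner failure. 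The cleanest fix is simply to adopt the paper's deterministic inner code; alternatively, a second Chernoff bound over the $n/N_2$ inner trials (with the amplified failure probability and per-failure cost $O(N_2)$) would close the gap, since the total expected extra cost is still $O(\eps_2 n)$ --- but that argument needs to be made explicit. Beyond this, note you cannot literally invoke Theorem~\ref{thm:enhanced-labeling}, which presupposes deterministic list recovery; you must replay its construction as the paper does, replacing Theorem~\ref{thm:enhanced-edit-distance-approximation-ratio} with the randomized analogue.
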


Note that, as opposed to the rest of the results in this paper, \cref{thm:randomized-labeling} provides an approximation guarantee with both multiplicative and additive components.

To construct such an indexing scheme using the structure introduced in \cref{sec:two-layers}, we will use a list-decodable insertion-deletion code of block length $O(\log\log n)$ from \cref{cor:capacity_approaching_list_decodable_insdel} and use \cref{thm:coding_applications} to obtain a list-decodable insertion-deletion code of block length $O(\log n)$ from the following list recoverable codes of~\cite{hemenway2017local}.

\begin{theorem}[Corollary of Theorem 7.1 of Hemenway et al.~\cite{hemenway2017local}]\label{thm:HemenwayEtAl}
For any $\rho\in[0, 1]$, $\eps>0$, and positive integer $l$, there exist constants $q_0$ and $c_0$ so that, for any $c<c_0$ and infinitely many integers $q\ge q_0$, there exists an infinite family of codes achieving the rate $\rho$ over an alphabet $\Sigma$ of size $|\Sigma|=q$ that is encodable in $n^{1+c}$ time and probabilistically $(\rho+\eps, l, L(n))$-list recoverable in $n^{1+c}$ time with success probability 2/3 and $L(n)=O_{\eps, \rho}(\exp(\exp(\exp(\log^* n))))$ where $n$ denotes the block length.
\end{theorem}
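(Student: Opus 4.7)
The statement is billed as a corollary of Theorem~7.1 of Hemenway, Ron-Zewi, and Wootters~\cite{hemenway2017local}, so my plan is to derive it by a direct specialization of their construction rather than by developing new machinery. First I would import their theorem verbatim: their main result builds a family of high-rate codes, via iterated tensor products of a short ``base'' code with strong list-recoverability, and shows that the resulting global code inherits $(\rho+\eps, l, L(n))$-list-recoverability with a list-size bound $L(n)=\exp(\exp(\exp(\log^*n)))$, because the list size only blows up a constant number of times along the tensor hierarchy (the $\log^*$ arises from a standard doubling/tower argument in their decoder).

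Next I would extract the parameters. Given inputs $\rho\in[0,1]$, $\eps>0$, and $l\in\mathbb{N}$, I would instantiate the Hemenway--Ron-Zewi--Wootters base code at rate $\rho$ and list-recovery radius $\rho+\eps$ with input-list size $l$; this requires the alphabet to be at least some threshold $q_0=q_0(\rho,\eps,l)$ and determines the base block length. Their tensoring step is parametrized by a tensor order $t$: raising $t$ drives the decoding exponent toward $1$ while increasing only constants in the list-size tower and in $q_0$. I would then pick $c_0=c_0(\rho,\eps,l)$ to be the supremum of exponents $c$ achievable across this family and, given any $c<c_0$, choose $t$ large enough so that both encoding and decoding of the tensor code run in time $n^{1+c}$ in the block length~$n$. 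The ``infinitely many $q\ge q_0$'' clause follows because the base code is available over any sufficiently large finite field (or any alphabet obtained by concatenation with a small inner code), so all such $q$ are admissible.

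Finally I would check the success probability. The decoder in~\cite{hemenway2017local} is randomized and, when instantiated as above, succeeds with probability at least $2/3$ on any input; since the list-recovery statement here is per-instance (not union-bounded over many queries), no further amplification is needed to match the $2/3$ success bound in the statement.

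The only nontrivial step will be verifying that the parameter trade-off curve in their theorem actually permits the exponent $1+c$ to be driven arbitrarily close to $1$ while keeping $q$, the list-recovery slack $\eps$, and the input-list parameter $l$ fixed; this is where I expect to spend the most care, tracing through their tensor-order-vs-time analysis and their alphabet-vs-rate bookkeeping to confirm that $c_0>0$ exists for every $(\rho,\eps,l)$ and that the triple-exponential list-size bound $L(n)$ is independent of $c$. Once that bookkeeping is in place, the corollary follows immediately.
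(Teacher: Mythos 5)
Your proposal matches the paper's treatment: the paper gives no independent proof of this statement, simply importing Theorem~7.1 of Hemenway, Ron-Zewi, and Wootters and instantiating its parameters (rate $\rho$, agreement $\rho+\eps$, input-list size $l$, decoding exponent $1+c$ for $c<c_0$, and the triply-exponential list size), which is exactly the specialization you describe. The parameter bookkeeping you flag as the only nontrivial step is indeed where all the content lives, and it is carried entirely by the cited theorem.
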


Before providing the proof of \cref{thm:randomized-labeling}, we mention a couple of necessary lemmas.

\begin{lemma}\label{lem:probability-enhancement}
Let $(\alpha, l, L(n))$-list-recoverable code $\mathcal{C}$ have a probabilistic decoder that runs in $T_{\Decode}(n)$ and works with probability $p$. Then, for any integer $k$, $\mathcal{C}$ can be $(\alpha, l, k\cdot L(n))$-list-recovered in $k T_{\Decode}(n)$ time with $1-(1-p)^k$ success probability.
\end{lemma}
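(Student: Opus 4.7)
The plan is a straightforward probability amplification argument: run the given probabilistic decoder $k$ independent times on the same input lists $S_1, \dots, S_n$, and output the union of the $k$ returned lists.

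First I would set up notation. Let $\Decode$ denote the probabilistic decoder, and on input $(S_1, \ldots, S_n)$ let $\mathcal{L}^* = \{x \in \mathcal{C} \mid |\{i : x_i \in S_i\}| \ge \alpha n\}$ be the ``true'' list of valid codewords, which by the $(\alpha, l, L(n))$-list-recoverability has size at most $L(n)$. A successful execution of $\Decode$ is defined to return a set containing $\mathcal{L}^*$ (and, by the list-recoverability hypothesis, of size at most $L(n)$). The proposed amplified decoder runs $\Decode$ independently $k$ times, obtaining lists $\mathcal{L}_1, \ldots, \mathcal{L}_k$, and returns $\mathcal{L} = \mathcal{L}_1 \cup \mathcal{L}_2 \cup \cdots \cup \mathcal{L}_k$.

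For the correctness analysis, note that a given run of $\Decode$ fails with probability at most $1-p$, so by independence all $k$ runs fail simultaneously with probability at most $(1-p)^k$. Whenever at least one run succeeds, the corresponding $\mathcal{L}_j$ already contains $\mathcal{L}^*$, and hence so does the union $\mathcal{L}$. Thus the amplified procedure succeeds with probability at least $1 - (1-p)^k$. The size bound is immediate: $|\mathcal{L}| \le \sum_{j=1}^{k} |\mathcal{L}_j| \le k \cdot L(n)$, and the running time is $k \cdot T_{\Decode}(n)$ plus the (subsumed) cost of forming the union.

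There is essentially no obstacle here; the only subtlety worth stating carefully is the definition of ``success'' of the underlying decoder, since the lemma statement must be interpreted so that a single successful invocation returns a superset of $\mathcal{L}^*$ (otherwise ``list recovery with probability $p$'' is not well-defined). Once this convention is fixed, the argument above completes the proof.
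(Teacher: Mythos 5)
Your proof is correct and matches the paper's argument exactly: run the probabilistic decoder $k$ independent times, take the union of the returned lists, and observe that the union has size at most $k L(n)$, the time is $k T_{\Decode}(n)$, and the probability that all $k$ runs fail is at most $(1-p)^k$. Your added care in pinning down what ``success'' means for the underlying decoder is a reasonable clarification that the paper leaves implicit.
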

\begin{proof}
Use a decoding procedure for $\mathcal{C}$ that repeats the given decoder $k$ times and outputs the union of the lists produced by them. The final list size will be at most $kL(n)$ long, the running time will be $k T_{\Decode}(n)$, and the failure probability, i.e., the probability of the output list not containing the correct codeword is at most $(1-p)^k$.
\end{proof}

Another required ingredient to the proof of \cref{thm:randomized-labeling} is to show how a probabilistic decoder affect the approximation guarantee of \cref{alg:EnhancedEditDistanceApprox}. To this end, we provide the following lemma as an analogy of \cref{thm:enhanced-edit-distance-approximation-ratio} when the decoder of code $\mathcal{C}_1$ is not deterministic.

\begin{lemma}\label{lem:randomized-decoder-approximation-guarantee}
Let the decoder of code $\mathcal{C}_1$ ($\Decode_{\mathcal{C}_1}(\cdot)$) be a randomized algorithm that $L_1$-list decodes the code $\mathcal{C}_1$ with probability $1-p$. Then, with probability $1-e^{-\frac{2|S'|p}{3N_1}}$, \cref{alg:EnhancedEditDistanceApprox} will generate a set of up to $(1+23\eps)\ED(S, S')+2p|S'|$ insertions and deletions that turn $S$ into $S'$.
\end{lemma}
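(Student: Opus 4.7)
The plan is to mirror the proof of \cref{thm:enhanced-edit-distance-approximation-ratio} while accounting probabilistically for blocks on which $\Decode_{\mathcal{C}_1}$ fails. For each block $S'(i)$, the first-level decoding either succeeds or fails independently with failure probability $p$, so if $F$ denotes the number of failed blocks, then $F$ is a sum of $|S'|/N_1$ independent Bernoulli random variables with mean $\mu = p|S'|/N_1$. A standard multiplicative Chernoff bound on the upper tail will give $F \le c \cdot \mu$ with probability at least $1 - e^{-2p|S'|/(3N_1)}$ for an appropriate constant $c$; I expect the probability in the statement to come out of exactly this tail bound.

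Next, the idea is to leverage the purely combinatorial argument that already exists in \cref{thm:enhanced-edit-distance-approximation-ratio}. Let $G^*$ denote the (hypothetical) graph that \cref{alg:EnhancedEditDistanceApprox} would have constructed if every invocation of $\Decode_{\mathcal{C}_1}$ had succeeded, and let $\mathcal{M}^*$ be its largest non-crossing matching. By \cref{thm:enhanced-edit-distance-approximation-ratio}, the number of unmatched vertices in $\mathcal{M}^*$ is at most $(1+23\eps)\ED(S,S')$. The actual graph $G$ produced by the algorithm is precisely $G^*$ with all edges incident to failed $S'$-blocks removed, since a successful inner loop (given the correct codeword list) adds exactly the edges prescribed by the deterministic analysis, while a failed outer decoder only suppresses edges out of the corresponding block.

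Now restrict $\mathcal{M}^*$ to $G$ by discarding every edge incident to a failed block $S'(i)$. Each failed block contributes at most $N_1$ matched edges to $\mathcal{M}^*$ (since $|S'(i)| = N_1$), and removing each such edge creates two newly unmatched vertices, so the restricted matching lies in $G$ and leaves at most $(1+23\eps)\ED(S,S') + 2N_1 F$ unmatched vertices. Since \cref{alg:EnhancedEditDistanceApprox} outputs the largest non-crossing matching in $G$ (cf.\ \cref{thm:max_non_crossing_algorithm}), its output can only have fewer unmatched vertices than this restriction. Combining with the tail bound $2N_1 F \le 2p|S'|$, which holds with probability at least $1 - e^{-2p|S'|/(3N_1)}$, yields the claim.

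The only delicate point will be calibrating the Chernoff exponent so that the quantitative failure probability and the additive error $2p|S'|$ line up simultaneously; once the bound on $F$ is in hand, the ``charge the loss per failed block to $2N_1$ unmatched vertices'' step slots in on top of \cref{thm:enhanced-edit-distance-approximation-ratio} without modifying any of its internal budget arguments.
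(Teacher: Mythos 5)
Your proposal follows essentially the same route as the paper's proof: both apply a Chernoff bound to the number $F$ of blocks on which $\Decode_{\mathcal{C}_1}$ fails, charge at most $2N_1$ additional unmatched vertices per failed block on top of the deterministic $(1+23\eps)$ guarantee of \cref{thm:enhanced-edit-distance-approximation-ratio}, and combine. One imprecision worth noting: $G$ is not literally $G^*$ with failed-block edges deleted --- a failed list decoder still returns a (wrong) list and may introduce spurious edges --- but the argument only needs that the restriction of $\mathcal{M}^*$ to successfully-decoded blocks survives into $G$, which is true, and the Chernoff calibration you flag is similarly loose in the paper's own write-up.
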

\begin{proof}
If $\Decode_{\mathcal{C}_1}(\cdot)$ worked with probability 1, the outcome of $\mathcal{A}$ would contain up to $(1+23\eps_1)$ insertions and deletions. Each time that $\Decode_{\mathcal{C}_1}$ fails to correctly list-decode a block of length $N_1$ ($C'_i$), up to $N_1$ edges from $\mathcal{M}_{ALG}$ might be lost and, consequently, there can be up to $2N_1$  units of increase in the number of insertions and deletions generated by $\mathcal{A}$.

There are a total of $n=|S'|/N_1$ list decodings and each might fail with probability $p$. Using the Chernoff bound, 
$$\Pr(\text{more than $2np$ failures})\le e^{-2np/3} = e^{-\frac{2|S'|p}{3N_1}}.$$
Thus, with probability $1-e^{-\frac{2|S'|p}{3N_1}}$, the output of $\mathcal{A}$ contains 
$(1+23\eps)\ED(S, S')+2npN_1=(1+23\eps)\ED(S, S')+2p|S'|$
or less insertions and deletions.
\end{proof}

We are now adequately equipped to prove \cref{thm:randomized-labeling}.

\subsection{Proof of \cref{thm:randomized-labeling}}
\begin{proof}
Our construction closely follows the steps taken in the proof of \cref{thm:enhanced-labeling}. Let $\eps'=\eps_1/46$. Take $\mathcal{C}_1$ from the \cref{thm:HemenwayEtAl} with parameters $\eps_{\mathcal{C}_1}=\eps'$, $\rho_{\mathcal{C}_1}=2\eps'$, $l_{\mathcal{C}_1}=6/\eps'$, $c_{\mathcal{C}_1}=\eps_0$, and block length $N_1$ where $N_1$ is large enough so that $N_1\cdot q_{\mathcal{C}_1}^{\rho_{\mathcal{C}_1}/2\cdot N_1} \geq n$ where $q_{\mathcal{C}_1}$ is the size of the alphabet of the family codes.

According to~\cref{thm:HemenwayEtAl}, $\mathcal{C}_1$ is probabilistically $(\eps', \eps'/6, L(N_1))$-list recoverable in $O_{\eps_1}(N_1^{1+\eps_0})$ time where $L(N_1)=\exp(\exp(\exp(\log^* N_1)))$ and success probability is 2/3. We use \cref{lem:probability-enhancement} with repetition number parameter $k=\log_3 \frac{2}{\eps_2}$ to obtain a 
\noSTOC{$\left(\eps', \frac{\eps'}{6}, O\left(\log \frac{1}{\eps_2}L(N_1)\right)\right)$}\STOConly{$\left(\eps', \eps'/6, O\left(\log (1/\eps_2)\cdot L(N_1)\right)\right)$}-list recovery algorithm for $\mathcal{C}_1$ that succeeds with probability $1-(\frac{1}{3})^k = 1-\frac{\eps_2}{2}$ and runs in 
\noSTOC{$O_{\eps_1}\left(\frac{N_1^{1+\eps_0}}{\eps_2}\right)$} 
\STOConly{$O_{\eps_1}\left(N_1^{1+\eps_0} / \eps_2\right)$} 
time.

We now use \cref{thm:coding_applications} with parameters 
$\eps_{\text{conv}}=\eps'$ and $\gamma_{\text{conv}}=1-2\eps'$
to convert list-recoverable code $\mathcal{C}_1$ to a list-decodable insertion-deletion code $\tilde{\mathcal{C}_1}$ by indexing its codewords with an appropriately chosen indexing sequence from \cref{thm:near-linear-labeling} and a synchronization string. Note that we can do this conversion using \cref{thm:coding_applications} since $\gamma_{\text{conv}} = 1-2\eps' \le \frac{l_{\mathcal{C}_1} \cdot \eps_{\text{conv}}}{3}-1
= \frac{6/\eps' \cdot \eps'}{3}-1=1$. Also, $\tilde{\mathcal{C}_1}$ can $ O\left(\log \frac{1}{\eps_2}L(N_1)\right)$-list decode from any $\gamma_{\text{conv}}=1-2\eps'$ fraction of insertions and any $1-\alpha_{\mathcal{C}_1}-\eps_{\text{conv}} = 1-\frac{\eps}{46}-\eps' = 1-2\eps'$ fraction of deletions in $O_{\eps_1, \eps_2}\left(N_1^{1+\eps_0} + N_1\poly(\log N_1)\right)$.

We further take code $\tilde{\mathcal{C}_2}$ from \cref{cor:capacity_approaching_list_decodable_insdel} with parameter $\eps_{\tilde{\mathcal{C}_2}}=2\eps'$ and block length $N_2$ large enough so that $N_2\cdot q_{\tilde{\mathcal{C}_2}}^{\eps'/2\cdot N_2} \geq N_1$. $\tilde{\mathcal{C}_2}$ is $\exp(\exp(\exp(N_2)))$-list decodable from any $1-2\eps'$ fraction of insertions and $1-2\eps'$ fraction of deletions.

String $I$ for the indexing scheme is constructed according to the structure described in \cref{sec:two-layers} using $\tilde{\mathcal{C}_1}$ and $\tilde{\mathcal{C}_2}$.

We define algorithm $\mathcal{\widetilde{\ED}}_I$ as an algorithm that takes $S\times I$ and $S'$ and runs an instance of \cref{alg:EnhancedEditDistanceApprox} with $S\times I$, $S'$, $N_1$, $N_2$, and decoders of $\tilde{\mathcal{C}_i}$ as its input. 
As codes $\tilde{\mathcal{C}_i}$ list decode from $1-2\eps'$ fraction of insertions and deletions, \cref{lem:randomized-decoder-approximation-guarantee} guarantees that $\mathcal{\widetilde{\ED}}_I$ generates a set of insertions and deletions over $S\times I$ that converts it to $S'$ and is of size $(1+23\cdot2\eps')\ED(S\times I, S') + 2\cdot \frac{\eps_2}{2} |S'|
=(1+\eps)\ED(S\times I, S') + \eps_2 |S'|$ or less with probability 
$1-e^{-\frac{\eps_2}{3N_1}}=1-e^{-O\left(\frac{\eps_2}{\log n}\right)}=1-\frac{1}{n^{O_{\eps_1, \eps_2}(1)}}$. 

Finally, since $N_1 = O(\log n)$, $N_2=O(\log\log n)$, $\tilde{\mathcal{C}_1}$ is list-decodable in $O(N_1^{1+\eps_0} + N_1\poly(\log N_1))$ time and $\tilde{\mathcal{C}_2}$ is efficiently list-decodable, \cref{thm:enhanced-edit-distance-approximation-time} guarantees that $\mathcal{\widetilde{\ED}}_I$ runs in 
\noSTOC{\begin{eqnarray*}
&&O_{\eps_1,\eps_2}\left(\frac{n}{N_1}\cdot T_{\Decode_{\tilde{\mathcal{C}_1}}}(N_1) + \frac{n}{N_2}\cdot T_{\Decode_{\tilde{\mathcal{C}_2}}}(N_2) + N_2L_1L_2\cdot n\log\log n\right)\\ 
&=&O_{\eps_1,\eps_2}\left(\frac{n}{\log n}\cdot T_{\Decode_{\tilde{\mathcal{C}_1}}}(\log n) + \frac{n}{\log \log n}\cdot T_{\Decode_{\tilde{\mathcal{C}_2}}}(\log \log n) + n\log^2 \log n L_{\tilde{\mathcal{C}_1}}(N_1)L_{\tilde{\mathcal{C}_2}}(N_2)\right)\\ 
&=& O_{\eps_1,\eps_2}\bigg(\frac{n}{\log n}\cdot
\left[\log^{1+\eps_0} n+\log n\poly(\log\log n)\right]+\frac{n}{\log\log n}\cdot\left[\poly(\log \log n)\right]+\\
&&\quad\quad\quad n\log^2\log n \cdot \exp(\exp(\exp(\log^* n)))
\bigg)\\
&=&O_{\eps_1,\eps_2}\left(n\log^{\eps_0} n\right)
\end{eqnarray*}}
\STOConly{\begin{eqnarray*}
&&O_{\eps_1,\eps_2}\left(\frac{nT_{\Decode_{\tilde{\mathcal{C}_1}}}(N_1)}{N_1} + \frac{nT_{\Decode_{\tilde{\mathcal{C}_2}}}(N_2)}{N_2}  + N_2L_1L_2\cdot n\log\log n\right)\\ 
&=&O_{\eps_1,\eps_2}\bigg(\frac{n}{\log n}\cdot T_{\Decode_{\tilde{\mathcal{C}_1}}}(\log n) + \frac{n}{\log \log n}\cdot T_{\Decode_{\tilde{\mathcal{C}_2}}}(\log \log n)\\
&& + n\log^2 \log n L_{\tilde{\mathcal{C}_1}}(N_1)L_{\tilde{\mathcal{C}_2}}(N_2)\bigg)\\ 
&=&O_{\eps_1,\eps_2}\left(n\log^{\eps_0} n\right)
\end{eqnarray*}}
 time.
\end{proof}


\section{Near-Linear Time \noSTOC{Insertion-Deletion}\STOConly{InsDel} Codes}\label{sec:applications}
The construction of efficient (uniquely-decodable) insertion-deletion codes from \cite{haeupler2017synchronization} and list-decodable codes from \cite{haeupler2018synchronization4} profoundly depend on decoding synchronization strings that are attached to codewords of an appropriately chosen Hamming-type code. The decoding procedure, which was introduced in \cite{haeupler2017synchronization}, consists of multiple rounds of computing the longest common subsequence (LCS) between a synchronization string and a given string. In this section, we will show that using the indexing schemes that are introduced in this paper, one can compute approximations of the LCSs instead of exact LCSs to construct insertion-deletion codes of similar guarantees as in \cite{haeupler2017synchronization, haeupler2018synchronization4} that have faster decoding complexity.

Specifically, for uniquely-decodable insertion-deletion codes, \cite{haeupler2017synchronization} provided codes with linear encoding-time and quadratic decoding-time that can approach the singleton bound, i.e., for any $0<\delta<1$ and $0<\eps<1-\delta$ can correct from $\delta$-fraction of insertions and deletions and achieve a rate of $1-\delta-\eps$. Further, same authors~\cite{haeupler2017synchronization3} provided codes with linear encoding complexity and near-linear decoding complexity can can correct from $\delta<1/3$ fraction of insertions and deletions but only achieve a rate of $1-3\delta-\eps$. In \cref{thm:coding_applications_unique} we will provide insertion-deletion codes that give the best of the two worlds, i.e., approach the Singleton bound and can be decoded in near-linear time.

Further, in \cref{thm:coding_applications}, we show that the same improvement can be made over list-decodable insertion-deletion codes of \cite{haeupler2018synchronization4}. However, this improvement brings downs the complexity of all components of the decoding procedure to near-linear time except the part that depends on the decoding of a list-recoverable code that is used as a black-box in the construction from \cite{haeupler2018synchronization4}. Even though this progress does not immediately improve the decoding time of list-decodable codes of \cite{haeupler2018synchronization4}, it opens the door to enhancement of the decoding complexity down to potentially a near-linear time by the future advances in the design of list-recoverable codes.


\subsection{Enhanced Decoding of Synchronization Strings via Indexing}
Synchronization strings, as introduced in \cite{haeupler2017synchronization}, are strings that satisfy \cref{def:synchronization-strings}. Let $S$ be an $\eps$-synchronization string that is communicated through a channel that suffers from a certain fraction of insertions and deletions. A decoding algorithm $\Decode_S$ for synchronization string $S$ under such channel is an algorithm that takes string that is arrived at the receiving end of the channel, and for each symbol of that string, guesses its actual position in $S$. We measure the quality of the decoding algorithm $\Decode_S$ by a metric named as \emph{misdecodings}. A misdecoding in the above-mentioned decoding procedure is a symbol of $S$ that (1) is not deleted by the channel \emph{and} (2) is not decoded correctly by $\Decode_S$. (formal definitions in \cite{haeupler2017synchronization})

The important quality of synchronization strings that is used in the design of insertion-deletion codes in \cite{haeupler2017synchronization, haeupler2018synchronization4} is that there are decoders for any $\eps$-synchronization string that run in quadratic time $O(n^2/\eps)$ and guarantee $O(n\sqrt{\eps})$ misdecodings. In this paper, by indexing synchronization strings with indexing sequences introduced in \cref{thm:near-linear-labeling}, we will show that one can obtain a near-linear decoding that provides similar misdecoding guarantee.

In the rest of this section, we first present and prove a theorem that shows an indexed synchronization string can be decoded in near-linear time with guarantees that are expected in Theorem 6.14 of \cite{haeupler2017synchronization} and Lemma 3.2 of \cite{haeupler2018synchronization4}. We then verify that the steps taken in \cite{haeupler2017synchronization, haeupler2018synchronization4} still follow through.

\begin{theorem}\label{lem:enhanced-sync-string-decoding}
Let $S$ be a string of length $n$ that consists of the coordinate-wise concatenation of an $\eps_{s}$-synchronization string and an $\eps_I$-indexing sequence from \cref{thm:near-linear-labeling}. Assume that $S$ goes through a channel that might impose up to $\delta \cdot n$ deletions and $\gamma\cdot n$ symbol insertions on $S$ for some $0\le\delta<1$ and $0\le\gamma$ and arrives as $S'$ on the receiving end of the channel. For any positive integer $K$, there exists a decoding for $S'$ that runs in $O(Kn\poly(\log n))$ time, guarantees up to 
$n\left(
\frac{1+\gamma}{K(1+\eps_I)}+ \frac{\eps_I(1+\gamma/2)}{1+\eps_I} +K\eps_s
\right)$
 misdecodings, and does not decode more than $K$ received symbol to any number in $[1, n]$.
\end{theorem}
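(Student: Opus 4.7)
The plan is to adapt the multi-matching decoder from Theorem 6.14 of \cite{haeupler2017synchronization}, replacing every exact longest common subsequence computation with the $(1+\eps_I)$-approximate one provided by the indexing scheme $(I, \mathcal{\widetilde{\ED}}_I)$ of \cref{thm:near-linear-labeling}. Each approximate call runs in $O(n\poly(\log n))$ time, so invoking it $K$ times yields the claimed $O(Kn\poly(\log n))$ runtime. The crucial observation is that the $S$ in the theorem statement already has the indexing layer baked in, so $\mathcal{\widetilde{\ED}}_I$ is immediately applicable to $S$ paired against any candidate received string.

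Concretely, the decoder would proceed in $K$ rounds. At round $t$, let $S'_t$ be the sub-sequence of $S'$ consisting of received symbols not yet decoded. Running $\mathcal{\widetilde{\ED}}_I(S, S'_t)$ yields a non-crossing matching $M_t$ between $S$ and $S'_t$; each matched received symbol is decoded to the index of its $S$-partner, and matched receivers are removed before the next round. Since each $S$-vertex participates in at most one edge of $M_t$, after $K$ rounds each index in $[1,n]$ is the target of at most $K$ decodings, as required. To bound misdecodings, I would partition them into three types: (a) received symbols that remain unmatched in all $K$ rounds; (b) pairs lost due to the $(1+\eps_I)$ approximation slack relative to the optimal matching in each round; (c) matched pairs that correspond to identical symbols at different true positions in $S$.

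Types (b) and (c) are the most directly controlled. For (b), because each round's output has unmatched-vertex count at most $(1+\eps_I)\ED(S, S'_t)$, and the edit-distance budget across the channel is at most $(\delta+\gamma)n$, a telescoping argument over the $K$ rounds yields the $\frac{\eps_I(1+\gamma/2)n}{1+\eps_I}$ term, where $1+\gamma/2$ appears because $\tfrac{1}{2}(|S|+|S'|) \le (1+\gamma/2)n$. For (c), the $\eps_s$-synchronization property of $S$ together with the self-matching lemma \cref{lem:self-matching} bounds the number of ``wrong'' matched pairs in any single round by $\eps_s n$; summed across $K$ rounds this gives at most $K\eps_s n$. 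The main obstacle I foresee is clean accounting of type (a): showing that the residual count of unmatched received symbols after $K$ rounds shrinks at exactly the advertised $\frac{(1+\gamma)n}{K(1+\eps_I)}$ rate. The intended argument is that each round's approximate matching captures at least a $\frac{1}{1+\eps_I}$-fraction of the then-available matchable edit distance, so an amortized induction on $|S'_t|$ combined with $|S'| \le (1+\gamma)n$ delivers this bound; getting the constants of the three terms to line up exactly with the stated expression is where the proof will demand the most care, since the per-round approximation loss cannot be double-counted against either the insertion budget or the self-matching contribution.
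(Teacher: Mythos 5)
Your proposal follows essentially the same algorithmic blueprint as the paper: $K$ rounds, each calling $\mathcal{\widetilde{\ED}}_I$ to obtain a non-crossing matching, assigning positions from matched pairs, and removing matched receivers. The time bound, the ``at most $K$ per index'' claim, and the $K\eps_s n$ term via \cref{lem:self-matching} are all correct and match the paper. Where your sketch has a genuine gap is exactly the spot you flagged: the accounting for unmatched surviving symbols. The paper does \emph{not} separate your types (a) and (b), and trying to prove them independently via ``telescoping'' or an ``amortized induction on $|S'_t|$'' does not straightforwardly work, because the edit distance $\ED(S,S'_t)$ is not monotone or easily controlled round-over-round.

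The missing observation is a uniform, not telescoping, lower bound: let $r$ be the number of successfully-transmitted symbols of $S$ that are still undecoded after all $K$ rounds. Those $r$ symbols have surviving copies in $S'_t$ for \emph{every} $t$, hence form a common subsequence witnessing $\LCS(S,S'_t)\ge r$ in every round. Combined with \cref{lem:LCS-approximation} (which converts the $(1+\eps_I)$-approximate edit distance guarantee into a lower bound on the recovered common-subsequence length), this gives a per-round matched count of at least $(1+\eps_I)r-\eps_I n(1+\gamma/2)$. Summing this \emph{constant} per-round quantity over $K$ rounds and comparing against $|S'|\le(1+\gamma)n$ gives
$K\bigl[(1+\eps_I)r-\eps_I n(1+\gamma/2)\bigr]\le(1+\gamma)n$,
which solves directly to $r\le n\bigl[\frac{1+\gamma}{K(1+\eps_I)}+\frac{\eps_I(1+\gamma/2)}{1+\eps_I}\bigr]$ — the two ``(a)'' and ``(b)'' terms arise jointly from this single inequality, not from separate arguments. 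Without the observation that the residual $r$ symbols certify a lower bound on $\LCS(S,S'_t)$ that is the \emph{same} in every round, the rest of your sketch for these two terms does not close.
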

Before proceeding to the proof of \cref{lem:enhanced-sync-string-decoding}, we present and prove the following simple yet useful lemma.
\begin{lemma}\label{lem:LCS-approximation}
Let us have a set of insertions and deletions that converts string $S_1$ to string $S_2$ which is of size $\ED_{APP} \le (1+\eps)\ED(S_1, S_2)$. The common subsequence between $S_1$ and $S_2$ that is implied by such a set ($\LCS_{APP}$) is of size $(1+\eps)|\LCS| - \frac{\eps}{2} (|S_1| + |S_2|)$ or larger.
\end{lemma}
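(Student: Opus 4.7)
The plan is to reduce the claim to the standard identity relating edit distance (under insertions and deletions only) to longest common subsequence length, then apply elementary algebra to the approximation bound.

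First, I would observe that any set of insertions and deletions converting $S_1$ into $S_2$ naturally partitions the symbols of each string into two classes: those that are ``touched'' (inserted or deleted) and those that are not. The untouched symbols of $S_1$ must appear, in order, as untouched symbols of $S_2$, so they form a common subsequence. Conversely, given any common subsequence of length $\ell$, one can convert $S_1$ to $S_2$ by deleting the $|S_1| - \ell$ uninvolved symbols of $S_1$ and then inserting the $|S_2| - \ell$ uninvolved symbols of $S_2$. This yields the well-known identity
\[
\ED(S_1,S_2) \;=\; |S_1| + |S_2| - 2\,|\LCS(S_1,S_2)|,
\]
and, applied to the approximate solution of size $\ED_{APP}$, the matching identity
\[
\ED_{APP} \;=\; |S_1| + |S_2| - 2\,|\LCS_{APP}|.
\]

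Next, I would simply plug the hypothesis $\ED_{APP} \le (1+\eps)\,\ED(S_1,S_2)$ into the second identity and substitute the first to eliminate $\ED(S_1,S_2)$ in favor of $|\LCS|$:
\[
|S_1| + |S_2| - 2\,|\LCS_{APP}| \;\le\; (1+\eps)\bigl(|S_1| + |S_2| - 2\,|\LCS|\bigr).
\]
Rearranging for $|\LCS_{APP}|$ gives
\[
|\LCS_{APP}| \;\ge\; (1+\eps)\,|\LCS| \;-\; \tfrac{\eps}{2}\bigl(|S_1| + |S_2|\bigr),
\]
which is precisely the claimed bound.

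There is no real obstacle here; the only conceptual point worth stating carefully is the first one, namely that an arbitrary ins/del sequence realizing $\ED_{APP}$ does indeed induce a common subsequence of the stated length (so that $|\LCS_{APP}|$ is well-defined and equals $(|S_1|+|S_2|-\ED_{APP})/2$). Everything else is one line of algebra.
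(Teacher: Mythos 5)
Your proof is correct and follows exactly the same route as the paper: express $|\LCS_{APP}|$ via the identity $|\LCS_{APP}| = (|S_1|+|S_2|-\ED_{APP})/2$, apply the hypothesis $\ED_{APP}\le(1+\eps)\ED(S_1,S_2)$, substitute the analogous identity for $\ED(S_1,S_2)$, and simplify. The only difference is that you spell out the justification for the ED--LCS identity, which the paper takes as known.
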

\begin{proof}
\begin{eqnarray*}
|\LCS_{APP}| &=& \frac{|S_1| + |S_2| - \ED_{APP}}{2}\\
&\ge& \frac{|S_1| + |S_2| - (1+\eps)\ED(S_1, S_2)}{2}\\
\noSTOC{&=& \frac{|S_1| + |S_2| - (1+\eps)(|S_1| + |S_2| - 2|\LCS|)}{2}\\}
&=&(1+\eps)|\LCS| - \frac{\eps}{2} \left(|S_1| + |S_2|\right)
\end{eqnarray*}
\end{proof}

\begin{proof}[Proof of \cref{lem:enhanced-sync-string-decoding}]
The global decoding algorithm introduced in \cite{haeupler2017synchronization} and used in \cite{haeupler2018synchronization4}, consists of $K$ repetitions of the following steps:
\begin{enumerate}
\item Find the longest common subsequence (LCS) of $S$ and $S'$.
\item \label{item:decoding-assignment} For any pair $(S[i], S'[j])$ in the LCS, decode $S'[j]$ as $i$th sent symbol.
\item Remove all members of the LCS from $S'$ (not in $S$).
\end{enumerate}
Finally, the algorithm declares a special symbol $\bot$ as the decoded position of all elements of $S'$ that are not included in any of the $K$ LCSs.

To derive a decoding algorithm as promised in the statement of this lemma, we implement similar steps except we make use of the indexing scheme and compute an approximation of LCS instead of the LCS itself. This crucial step reduces the quadratic time required in the global decoding from \cite{haeupler2017synchronization} to near-linear time.

In \cite{haeupler2017synchronization}, it has been shown that any assignment from \cref{item:decoding-assignment} that is derived from any common subsequence between $S$ and $S'$ (not necessarily a LCS) does not contain more than $n\eps_s$ misdecodings, i.e., successfully transmitted symbols of $S$ that are decode incorrectly. (see \ref{lem:self-matching}). Therefore, after $K$ repetitions, among symbols of $S$ that are not deleted, there are at most $Kn\eps_s$ ones that are decoded incorrectly.

To find an upper bound for the misdecodings of this algorithm, we need to bound above the number of successfully transmitted symbols that are not included in any LCS, i.e., decoded as $\bot$ as well. Let $r$ be number of successfully transmitted symbols of $S$ that remain undecoded after $K$ repetitions of the matching procedure described above. Note that these symbols form a LCS of length $r$ between $S$ and the remainder of $S'$ after all symbol eliminations throughout $K$ repetitions.  Indeed, this implies that the size of the LCS at the beginning of each repetition is at least $r$. Therefore, by \cref{lem:LCS-approximation}, the size of the approximate longest common sequence found in each matching is at least 
$(1+\eps_I)r-\eps_I/2(|S|+|S'|) \ge (1+\eps_I)r-\eps_I n(1+\gamma/2)$. Note that sum of the size of all $K$ common subsequences plus the remaining vertices cannot exceed $|S'| \le (1+\gamma)n$. Therefore,
\begin{eqnarray}
&&K\cdot \left[(1+\eps_I)r-\eps_I n(1+\gamma/2)\right] \le (1+\gamma)n\nonumber\\
&\Rightarrow& r \le n\cdot\left[\frac{1+\gamma}{K(1+\eps_I)}+\frac{\eps_I(1+\gamma/2)}{1+\eps_I}\right]\label{eqn:unmatched}
\end{eqnarray}
Using \eqref{eqn:unmatched} along with the fact that there are at most $Kn\eps_s$ incorrectly decoded symbols of $S$ gives that the overall number of misdecodings is at most
$n\cdot\left[\frac{1+\gamma}{K(1+\eps_I)}+\frac{\eps_I(1+\gamma/2)}{1+\eps_I} + K\eps_s\right]$.

Further, as algorithm consists of $K$ computations of the approximated longest common subsequence as described in \cref{sec:labeling}, the running time complexity is $O(Kn\poly(\log n))$.

Finally, note that in each of the $K$ rounds, there is at most one element that gets decoded as each number in $[1, n]$. Therefore, throughout the course of the algorithm, for each $i\in[1, n]$, there are at most $K$ elements of $S'$ that are decoded as $i$.
\end{proof}

\subsection{Near-Linear Time (Uniquely-Decodable) Insertion-Deletion Codes}\label{sec:unique-codes}
The construction of Singleton-bound-approaching uniquely-decodable insertion-deletion codes of \cite{haeupler2017synchronization} is consisted of a Singleton approaching error correcting code and a synchronization string. More precisely, for a given $\delta$ and $\eps$ and a sufficiently large $n$, \cite{haeupler2017synchronization} takes a synchronization string $S$ of length $n$ and a Singleton-bound-approaching error correcting code $\mathcal{C}$ with block length $n$ (from \cite{guruswami2005linear}) and indexes each codeword of $\mathcal{C}$, symbol by symbol, with symbols of $S$. If $S$ is over alphabet $\Sigma_S$ and $\mathcal{C}$ is over alphabet $\Sigma_\mathcal{C}$, the resulting code would be over $\Sigma_\mathcal{C}\times\Sigma_S$.

As for the decoding procedure, note that the input of the decoder is some code word of $\mathcal{C}$, indexed with $S$, that might be altered by up to $\delta\cdot n$ insertions and deletions. Such insertions and deletions might remove some symbols, adds some new ones, or shift the position of some of them. The decoder uses the synchronization portion of each symbol to guess its actual position (in the codeword prior to $n\cdot\delta$ insertions and deletions) and then uses the decoder of code $\mathcal{C}$ to figure out the sent codeword. 

Before proceeding to the proof of \cref{thm:coding_applications_unique}, we represent the following useful theorem from \cite{haeupler2017synchronization}.

\begin{theorem}[Implied by Theorem 4.2 from \cite{haeupler2017synchronization}]\label{thm:modular-code-construction-unique}
Given a synchronization string $S$ over alphabet $\Sigma_S$, an (efficient) decoding algorithm $\mathcal{D}_S$ with at most $k$ misdecodings and decoding complexity $T_{\mathcal{D}_{S}}(n)$ and an (efficient) ECC $\mathcal{C}$ over alphabet $\Sigma_{\mathcal{C}}$ with rate $R_{\mathcal{C}}$, encoding complexity $T_{\mathcal{E}_{\mathcal{C}}}$, and decoding complexity $T_{\mathcal{D}_{\mathcal{C}}}$ that corrects up to $n\delta + 2k$ half-errors, one obtains an insdel code that can be (efficiently) decoded from up to $n\delta$ insertions and deletions. The rate of this code is at least
\noSTOC{$$\frac{R_{\mathcal{C}}}{1 + \frac{\log |\Sigma_S|}{\log |\Sigma_{\mathcal{C}}|}}
.$$}\STOConly{$\frac{R_{\mathcal{C}}}{1 + \log |\Sigma_S| / \log |\Sigma_{\mathcal{C}}|}
$.}
The encoding complexity remains $T_{\mathcal{E}_{\mathcal{C}}}$, the decoding complexity is $T_{\mathcal{D}_{\mathcal{C}}} + T_{\mathcal{D}_{S}}(n)$ and the complexity of constructing the code is the complexity of constructing $\mathcal{C}$ and $S$.
\end{theorem}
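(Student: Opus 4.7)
The plan is to build the insertion-deletion code by coordinate-wise concatenation: set $E(m) := \mathcal{C}(m) \times S \in (\Sigma_{\mathcal{C}} \times \Sigma_S)^n$. Since each position carries $\log|\Sigma_{\mathcal{C}}| + \log|\Sigma_S|$ bits of which $R_{\mathcal{C}}\log|\Sigma_{\mathcal{C}}|$ are informative, the rate is immediately $R_{\mathcal{C}}/\bigl(1 + \log|\Sigma_S|/\log|\Sigma_{\mathcal{C}}|\bigr)$, and encoding takes $T_{\mathcal{E}_{\mathcal{C}}} + O(n)$ time.

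The decoder will operate in three stages. Given a received string $y$ with at most $n\delta$ insertions and deletions, I first extract the $\Sigma_S$-projection of $y$ and run $\mathcal{D}_S$ on it to assign each received symbol a position guess in $[n]$. I then build a reconstructed word $\tilde{y} \in (\Sigma_{\mathcal{C}} \cup \{?\})^n$ by placing the $\Sigma_{\mathcal{C}}$-component of every decoded received symbol at its guessed position and marking as an erasure any position that receives nothing. Finally I invoke the half-error decoder $\mathcal{D}_{\mathcal{C}}$ on $\tilde{y}$ and output its answer. Correctness will reduce to showing that $\tilde{y}$ is within the half-error correction radius of $\mathcal{C}(m)$.

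The core of the argument is bounding the half-error weight of $\tilde{y}$ relative to $\mathcal{C}(m)$. Let $a$ be the number of deletions and $b$ the number of insertions, so $a + b \leq n\delta$. By the misdecoding guarantee at most $k$ non-deleted originals are decoded to a wrong position, so at least $(n-a) - k$ positions of $\tilde{y}$ already carry their correct symbol; hence at most $a + k$ positions are either erasures or substitutions. A substitution at position $i$ must have been produced either by a misdecoded original (at most $k$ total) or by an inserted symbol that the decoder placed somewhere (at most $b$ total), so the substitution count is bounded by $k + b$. Weighting substitutions by $2$ and erasures by $1$ gives the half-error bound
\[
(a + k - \#\text{subs}) + 2\cdot\#\text{subs} \;\leq\; a + k + (k + b) \;\leq\; n\delta + 2k,
\]
which falls inside the correction radius of $\mathcal{C}$. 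The one subtlety I expect to be careful with is that the hypothesized misdecoding count only tracks non-deleted originals, so inserted symbols that happen to land on some position must be charged separately against the insertion budget $b$ rather than against $k$. The decoding complexity $T_{\mathcal{D}_{\mathcal{C}}} + T_{\mathcal{D}_S}(n)$ and the construction complexity then follow directly from the decomposition.
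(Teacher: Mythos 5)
Your construction and accounting follow the standard route from \cite{haeupler2017synchronization}: index the ECC with the synchronization string, recover positions via $\mathcal{D}_S$, assemble a word $\tilde y$ with erasures, and invoke the half-error decoder of $\mathcal{C}$. The rate, encoding-time, and decoding-time bookkeeping are all correct. The paper itself cites this theorem from prior work rather than reproving it, so what you have written is the natural reconstruction of that argument.

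There is, however, one step you gloss over that is actually load-bearing. Your definition of $\tilde y$ (``placing the $\Sigma_{\mathcal{C}}$-component of every decoded received symbol at its guessed position'') is only well-defined, and your central counting only holds, if $\mathcal{D}_S$ decodes \emph{at most one} received symbol to each position $i\in[n]$. If two received symbols can collide on a position, then a correctly decoded original can be overwritten (or forced to an erasure) by a misdecoded original or an inserted symbol that lands on the same index, and the inequality ``at least $(n-a)-k$ positions of $\tilde y$ already carry their correct symbol'' fails. Redoing the count with collisions allowed gives a half-error weight as large as $a + 2b + 3k$, which exceeds the promised $n\delta + 2k$ whenever $b + k > 0$. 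So the claim genuinely needs the no-collision (or monotone-matching) property of $\mathcal{D}_S$, not merely the misdecoding bound. This is exactly why the paper's \cref{lem:enhanced-sync-string-decoding} tracks the additional guarantee that the decoder ``does not decode more than $K$ received symbols to any number in $[1,n]$.'' To make your proof airtight you should state that $\mathcal{D}_S$ is assumed to return an injective position assignment (equivalently, a single non-crossing matching), note that this is what the synchronization-string decoders of \cite{haeupler2017synchronization} deliver, and only then run your $a+b+2k$ half-error count. With that assumption made explicit, the rest of your argument is correct.
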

\STOConly{\vspace{-2mm}}
We make use of \cref{thm:modular-code-construction-unique} from \cite{haeupler2017synchronization} along with \cref{lem:enhanced-sync-string-decoding} to prove \cref{thm:coding_applications_unique}.

\STOConly{\vspace{-2mm}}
\begin{proof}[\textbf{Proof of \cref{thm:coding_applications_unique}}]
As described earlier in this section, we construct this code by taking an error correcting code that approaches the Singleton bound and then index its codewords with symbols of an $\eps_s$-synchronization string and an indexing scheme from \cref{thm:near-linear-labeling} with parameter $\eps_I$.
For a given $\delta$ and $\eps$, we choose 
$\eps_I=\frac{\eps}{18}$, 
$\eps_s=\frac{\eps^2}{288}$.
Furthermore, we use the decoding algorithm from \cref{lem:enhanced-sync-string-decoding} with repetition parameter $K=\frac{24}{\eps}$. With $\eps_s, \eps_I$, and $K$ chosen as such, the decoding algorithm guarantees a misdecoding count of 
$n\cdot\left[\frac{1+\gamma}{K(1+\eps_I)}+\frac{\eps_I(1+\gamma/2)}{1+\eps_I} + K\eps_s\right] \le 
n\cdot\left[\frac{\eps}{12} + \frac{\eps}{12} + \frac{\eps}{12}\right]=\frac{n\eps}{4}$
or less. (note that there can be up to $\delta n$ insertions, i.e., $\gamma \le \delta < 1$)

It has been shown in \cite{haeupler2017synchronization3} that such synchronization string can be constructed in linear time over an alphabet of size $\eps_s^{-O(1)}$. Also, the indexing sequence from \cref{thm:near-linear-labeling} has an alphabet of size \noSTOC{$\exp\left(\frac{\log (1/\eps_I)}{\eps_I^3}\right)$}\STOConly{$\exp\left(\log (1/\eps_I) / \eps_I^3\right)$}. Therefore, the alphabet size of the coordinate-wise concatenation of the $\eps_s$-synchronization string and the indexing sequence is \noSTOC{$|\Sigma_S|=\exp\left(\frac{\log (1/\eps)}{\eps^3}\right)$}\STOConly{$|\Sigma_S|=\exp\left(\log (1/\eps) / \eps^3\right)$}.

As the next step, we take code $\mathcal{C}$ from \cite{guruswami2005linear} as a code with distance $\delta_{\mathcal{C}} = \delta + \frac{\eps}{2}$ and rate $1-\delta_{\mathcal{C}}-\frac{\eps}{4}$ over an alphabet of size $|\Sigma_{\mathcal{C}}|=|\Sigma_S|^{4/\eps}$. Note that 
\noSTOC{$|\Sigma_S|=\exp\left(\frac{\log (1/\eps)}{\eps^3}\right)$}\STOConly{$|\Sigma_S|=\exp\left(\log (1/\eps) / \eps^3\right)$}, therefore, the choice of $|\Sigma_{\mathcal{C}}|$ is large enough to satisfy the requirements of \cite{guruswami2005linear}.
$\mathcal{C}$ is also encodable and decodable in linear time.

Plugging $\mathcal{C}$ and $S$ as described above in \cref{thm:modular-code-construction-unique} gives an insertion-deletion code that can be encoded in linear time, be decoded in $O(Kn\poly(\log n))$ time, corrects from any $\delta n$ insertions and deletions, achieves a rate of
\noSTOC{$\frac{R_{\mathcal{C}}}{1 + \frac{\log |\Sigma_S|}{\log |\Sigma_{\mathcal{C}}|}}\ge \frac{1-\delta-3\eps/4}{1+\eps/4}\ge 1-\delta-\eps$}\STOConly{$\frac{R_{\mathcal{C}}}{1 + \log |\Sigma_S| / \log |\Sigma_{\mathcal{C}}|}\ge \frac{1-\delta-3\eps/4}{1+\eps/4}\ge 1-\delta-\eps$}, and is over an alphabet of size \STOConly{$\exp\left(\log (1/\eps) / \eps^4\right)$}\noSTOC{$\exp\left(\frac{\log (1/\eps)}{\eps^4}\right)$}.
\end{proof}

\subsection{Improved List-Decodable \noSTOC{Insertion-Deletion}\STOConly{InsDel} Codes}\label{sec:list-decodable-codes}
A very similar improvement is also applicable to the design of list-decodable insertion-deletion codes from~\cite{haeupler2018synchronization4} as it also utilizes indexed synchronization strings and a similar position recovery procedure.  In the following theorem, we will provide a black-box conversion of a given list-recoverable code to a list-decodable insertion-deletion code that only adds a near-linear time overhead to the decoding complexity. Hence, the following theorem paves the way to obtaining insertion-deletion codes that are list-decodable in near-linear time upon the design of near-linear time list-recoverable codes. We will use the following theorem to prove \cref{thm:list-decodable-codes} at the end of this section.

\begin{theorem}\label{thm:coding_applications}
Let $\mathcal{C}:\Sigma^{nR}\rightarrow\Sigma^n$ be a $(\alpha, l, L)$-list recoverable code with rate $R$, encoding complexity $T_{Enc}(\cdot)$ and decoding complexity complexity $T_{Dec}(\cdot)$. For any $\eps>0$ and $\gamma \leq \frac{l\eps}{3}-1$, by indexing codewords of $\mathcal{C}$ with an 
$\eps_s=\frac{\eps^2}{9(1+\gamma)}$-synchronization string over alphabet $\Sigma_{s}$ and 
$\eps_I=\frac{\eps}{3(1+\gamma/2)}$-indexing sequence over alphabet $\Sigma_I$, one can obtain an $L$-list decodable insertion-deletion code $\mathcal{C}': \Sigma^{nR}\rightarrow[\Sigma\times\Sigma_s\times\Sigma_I]^n$ that corrects from $\delta < 1-\alpha-\eps$ fraction of deletions and $\gamma$ fraction of insertions. $\mathcal{C}'$ is encodable and decodable in $O(T_{Enc}(n)+n)$ and 
$O_{\eps, \gamma}	(T_{Dec}(n)+n\poly(\log n))$
 time respectively.
\end{theorem}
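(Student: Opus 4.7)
The plan is to follow the template of the list-decodable construction in~\cite{haeupler2018synchronization4}, but to replace the quadratic-time position-recovery step with the near-linear time routine from~\cref{lem:enhanced-sync-string-decoding}. Concretely, construct $\mathcal{C}'$ by taking, for each message $m\in\Sigma^{nR}$, the codeword $\mathcal{C}(m)$ and forming its coordinate-wise concatenation with a fixed $\eps_s$-synchronization string $S$ and a fixed $\eps_I$-indexing sequence $I$ of length $n$. Both $S$ and $I$ depend only on $n$, $\eps$, and $\gamma$ and can be produced once in advance, so encoding reduces to running $\Encode_\mathcal{C}$ and then appending the two precomputed coordinates symbol-by-symbol, costing $T_{Enc}(n)+O(n)$.

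The decoder, given a received string $\tilde c$ resulting from up to $\delta n$ deletions and $\gamma n$ insertions on some $\mathcal{C}'(m)$, proceeds in two stages. First it discards the $\Sigma$-coordinate and runs the decoder promised by~\cref{lem:enhanced-sync-string-decoding} on the $(S\times I)$-portion with a parameter $K$ chosen below. This produces, for each received symbol, a guess of its index in $[1,n]$, with the guarantee that at most $K$ symbols are decoded to any fixed index. Stage two forms, for each $i\in[1,n]$, the set $T_i\subseteq\Sigma$ of $\Sigma$-coordinates of received symbols decoded to position $i$ (so $|T_i|\le K$), and feeds $(T_1,\dots,T_n)$ into the $(\alpha,l,L)$-list-recovery decoder of $\mathcal{C}$.

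The central parameter choice is $K=\lceil 3(1+\gamma)/\eps\rceil$, which together with $\eps_s=\eps^2/(9(1+\gamma))$ and $\eps_I=\eps/(3(1+\gamma/2))$ forces each of the three terms in the misdecoding bound $n\bigl(\tfrac{1+\gamma}{K(1+\eps_I)}+\tfrac{\eps_I(1+\gamma/2)}{1+\eps_I}+K\eps_s\bigr)$ from~\cref{lem:enhanced-sync-string-decoding} to be at most $\eps n/3$, giving a total of at most $\eps n$ misdecodings. The hypothesis $\gamma\le l\eps/3-1$ guarantees $K\le l$, so the lists $T_i$ are valid inputs to the list-recovery algorithm. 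Any position $i$ whose original symbol was not deleted and was decoded correctly contributes the true $\mathcal{C}(m)_i$ to $T_i$; since deletions account for at most $\delta n$ positions and misdecodings for at most $\eps n$ more, the number of indices where $\mathcal{C}(m)_i\in T_i$ is at least $n(1-\delta-\eps)\ge\alpha n$ by the assumption $\delta<1-\alpha-\eps$. Hence $\mathcal{C}(m)$ lies in the list returned by the $(\alpha,l,L)$-recovery algorithm, and that list has size at most $L$, establishing list-decodability.

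For complexity, stage one takes $O(Kn\poly(\log n))=O_{\eps,\gamma}(n\poly(\log n))$ by~\cref{lem:enhanced-sync-string-decoding}, forming the $T_i$ takes $O(n)$, and stage two takes $T_{Dec}(n)$, yielding the claimed $O_{\eps,\gamma}(T_{Dec}(n)+n\poly(\log n))$. The only real subtlety I anticipate is verifying carefully that the $K$-per-position bound from~\cref{lem:enhanced-sync-string-decoding} is exactly what is needed to match the ``$l$'' of list-recovery, and that the arithmetic on the three misdecoding terms does indeed fit under the chosen thresholds; this is routine given the parameter settings above. Everything else is bookkeeping that mirrors the uniquely-decodable analysis of~\cref{thm:coding_applications_unique}.
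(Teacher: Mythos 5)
Your proposal is correct and matches the paper's proof essentially step for step: same construction of $\mathcal{C}'$ by coordinate-wise indexing, same two-stage decoder invoking Lemma~\ref{lem:enhanced-sync-string-decoding} with $K=3(1+\gamma)/\eps$ to build per-position candidate lists of size at most $K\le l$, same arithmetic showing each misdecoding term is at most $\eps n/3$, and the same conclusion that at least $n(1-\delta-\eps)>\alpha n$ lists contain the true symbol so list-recovery succeeds. The only cosmetic difference is that you spell out the list-recovery input $(T_1,\dots,T_n)$ explicitly, which the paper leaves implicit.
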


\begin{proof}
We closely follow the proof of Theorem 3.1 from \cite{haeupler2018synchronization4} except that we use an indexed synchronization string to speed up the decoding procedure.

Index the code $\mathcal{C}$ with an $\eps_s=\frac{\eps^2}{9(1+\gamma)}$-synchronization string and an $\eps_I=\frac{\eps}{3(1+\gamma/2)}$-indexing sequence as constructed in \cref{thm:near-linear-labeling} to obtain code $\mathcal{C}'$.

In the decoding procedure, for a given word $\tilde x$ that is $\delta n$ deletions and $\gamma n$ insertions far from some codeword $x\in\mathcal{C}'$, we first use the decoding algorithm from \cref{lem:enhanced-sync-string-decoding} to decode the index portion of symbols with parameter \STOConly{$K=3(1+\gamma) / \eps$}\noSTOC{$K=\frac{3(1+\gamma)}{\eps}$}.
This will give a list of up to \noSTOC{$K=\frac{3(1+\gamma)}{\eps} \leq l$}\STOConly{$K=3(1+\gamma)/\eps \leq l$} candidate symbols for each position of the codeword $x$.

We know from \cref{lem:enhanced-sync-string-decoding} that all but
\noSTOC{$$
n\left(
\frac{1+\gamma}{K(1+\eps_I)}+ \frac{\eps_I(1+\gamma/2)}{1+\eps_I} +K\eps_s
\right)\\
\le n\left(
\frac{\eps}{3(1+\eps_I)}+ \frac{\eps}{3(1+\eps_I)} +\frac{\eps}{3}
\right)\le  n\eps
$$}\STOConly{$$
n\left(
\frac{1+\gamma}{K(1+\eps_I)}+ \frac{\eps_I(1+\gamma/2)}{1+\eps_I} +K\eps_s
\right)\\
\le n\left(
\frac{\eps}{3(1+\eps_I)}\times 2 +\frac{\eps}{3}
\right)\le  n\eps
$$}
of the symbols of $x$ that are not deleted are in the correct list. As there are up to $n(1-\delta)$ deleted symbols, all but $n(1-\delta-\eps) > n\alpha$ of the lists contain the symbol from the corresponding position in $x$. Having such lists, the receiver can use the list-recovery function of $\mathcal{C}$ to obtain an $L$-list-decoding for $\mathcal{C}'$. 

The encoding complexity follows from the fact that synchronization strings  be constructed in linear time~\cite{haeupler2017synchronization3, cheng2018synchronization}, the decoding complexity follows from 
\cref{lem:enhanced-sync-string-decoding}, and the alphabet of $\mathcal{C}'$ is trivially $\Sigma\times\Sigma_s\times\Sigma_I$ as it is obtained by indexing codewords of $\mathcal{C}$ with the $\eps_s$-synchronization string and the $\eps_I$-indexing sequence.
\end{proof} 
\STOConly{\vspace{-2mm}}
We now use \cref{thm:coding_applications} to prove \cref{thm:list-decodable-codes}.
\STOConly{\vspace{-1mm}}
\begin{proof}[\textbf{Proof of \cref{thm:list-decodable-codes}}]
Take list-recoverable code $\mathcal{C}$ from~\cref{thm:HemenwayEtAl} with parameters $\rho_{\mathcal{C}}=1-\delta-\frac{\eps}{2}$, $\eps_{\mathcal{C}}=\frac{\eps}{4}$, $l_{\mathcal{C}}=\frac{12\gamma+4}{\eps}$, and $c_{\mathcal{C}}=\eps_0$ over an alphabet $\Sigma$ of adequately large size $|\Sigma|\geq q_{0, \mathcal{C}}$ which we determine later. According to~\cref{thm:HemenwayEtAl}, $\mathcal{C}$ has a rate of $\rho_\mathcal{C}$ and a randomized $\left(\rho_{\mathcal{C}}+\eps_{\mathcal{C}}, l_{\mathcal{C}}, L(n)=\exp(\exp(\exp(\log^* n)))\right)$-list recovery that works in $O(n^{1+\eps_0})$ time and succeeds with probability 2/3.

We plug code $\mathcal{C}$ into \cref{thm:coding_applications} with parameters $\eps_{\text{conv}}=\frac{\eps}{4}$ and $\gamma_{\text{conv}}=\gamma$ to obtain code $\mathcal{C}'$. We can do this because $\gamma_{\text{conv}}\le\frac{l_{\mathcal{C}}\eps_{conv}}{3}-1$. According to \cref{thm:coding_applications}, $\mathcal{C}'$ is $L(n)$-list decodable from $1-\rho_{\mathcal{C}}-\eps_{\mathcal{C}}-\eps_{\text{conv}}=\delta$ fraction of deletions and $\gamma$ fraction of insertions in $O(n^{1+\eps_0})$. This list-decoding is correctly done if the list-recovery algorithm works correctly. Therefore, the list decoder succeeds with probability 2/3 or more.

Note that the $\eps_s$-synchronization strings in \cref{thm:coding_applications} exist over alphabets of size $|\Sigma_s|=\eps_s^{-O(1)}$ and $\eps_I$-indexing sequence exist over alphabets of size 
\noSTOC{$|\Sigma_I|=\exp\left(\frac{\log (1/\eps_I)}{\eps_I^3}\right)$.}\STOConly{$|\Sigma_I|=\exp\left(\log (1/\eps_I) / \eps_I^3\right)$.}
Therefore, if we take alphabet $\Sigma$ large enough so that 
\noSTOC{$|\Sigma|\ge\max\left\{|\Sigma_s\times\Sigma_I|^{2/\eps}, q_{0, \mathcal{C}}\right\}= \max\left\{\exp\left(\frac{\log (1/\eps)}{\eps^4}\right),q_{0, \mathcal{C}}\right\} = O_{\eps_0, \eps, \gamma}(1)$}
\STOConly{$|\Sigma|\ge\max\left\{|\Sigma_s\times\Sigma_I|^{2/\eps}, q_{0, \mathcal{C}}\right\}= \max\left\{\exp\left(\log (1/\eps) / \eps^4\right),q_{0, \mathcal{C}}\right\} = O_{\eps_0, \eps, \gamma}(1)$}
the rate of the resulting code will be
\noSTOC{$$\frac{\rho_{\mathcal{C}}}{1 + \frac{\log \left(|\Sigma_S|\times|\Sigma_l|\right)}{\log |\Sigma|}}\ge \frac{1-\delta-\eps/2}{1+\eps/2}\ge1-\delta-\eps.$$}\STOConly{$\frac{\rho_{\mathcal{C}}}{1 + \log \left(|\Sigma_S|\times|\Sigma_l|\right)/\log |\Sigma|}\ge \frac{1-\delta-\eps/2}{1+\eps/2}\ge1-\delta-\eps.$}

Finally, the encoding and decoding complexities directly follow from \cref{thm:coding_applications}.
\end{proof}

\bibliographystyle{plain}
\bibliography{bibliography}

\end{document}